\newcommand{\jpa}{J. Phys. A~}
\newcommand{\njp}{New. J. Phys.~}
\newcommand{\prl}{Phys. Rev. Lett.~}
\newcommand{\pra}{Phys. Rev. A~}
\newcommand{\pla}{Phys. Lett. A~}
\newcommand{\rmp}{Rev. Math. Phys.~}
\definecolor{myurlcolor}{rgb}{0,0,0.7}
\newcommand{\blue}{\textcolor{blue}}
\newcommand{\proj}[1]{| #1\rangle\!\langle #1 |}
\newcommand{\tinyspace}{\mspace{1mu}}
\newcommand{\op}[1]{\operatorname{#1}}
\newcommand{\abs}[1]{\left\lvert\tinyspace #1 \tinyspace\right\rvert}
\newcommand{\norm}[1]{\left\lVert\tinyspace #1 \tinyspace\right\rVert}
\renewcommand{\det}{\operatorname{det}}
\renewcommand{\t}{{\scriptscriptstyle\mathsf{T}}}
\newcommand{\setft}[1]{\mathrm{#1}}
\newcommand{\density}[1]{\setft{D}\left(#1\right)}
\newcommand{\spn}{\op{span}}
\newcommand{\sign}{\op{sign}}
\def\GL{\mathsf{GL}}
\def\SO{\mathsf{SO}}
\def\SU{\mathsf{SU}}
\def\LU{\mathsf{LU}}
\def \diag {\mathrm{diag}}
\def\complex{\mathbb{C}}
\def\real{\mathbb{R}}
\def\natural{\mathbb{N}}
\def\I{\mathbb{1}}
\def\zero{\mathbf{0}}
\def\bsigma{\boldsymbol{\sigma}}
\newenvironment{mylist}[1]{\begin{list}{}{
    \setlength{\leftmargin}{#1}
    \setlength{\rightmargin}{0mm}
    \setlength{\labelsep}{2mm}
    \setlength{\labelwidth}{8mm}
    \setlength{\itemsep}{0mm}}}
    {\end{list}}
\def\ot{\otimes}
\newcommand{\inner}[2]{\langle #1 , #2\rangle}
\newcommand{\iinner}[2]{\langle #1 | #2\rangle}
\newcommand{\out}[2]{| #1\rangle\langle #2 |}
\newcommand{\Inner}[2]{\big\langle #1 , #2\big\rangle}
\newcommand{\Innerm}[3]{\big\langle #1 \big| #2 \big| #3 \big\rangle}
\newcommand{\defeq}{\stackrel{\smash{\textnormal{\tiny def}}}{=}}
\newcommand{\Herm}{\mathrm{Herm}}
\newcommand{\End}{\mathrm{End}}
\newcommand{\Pa}[1]{\left(#1\right)}
\newcommand{\Br}[1]{\left[#1\right]}
\newcommand{\set}[1]{\{#1\}}
\newcommand{\Set}[1]{\left\{#1\right\}}
\newcommand{\bra}[1]{\langle#1|}
\newcommand{\ket}[1]{|#1\rangle}
\DeclareMathOperator{\trace}{Tr}
\newcommand{\Ptr}[2]{\trace_{#1}\Pa{#2}}
\newcommand{\Tr}[1]{\Ptr{}{#1}}
\newcommand{\Abs}[1]{\left|\tinyspace#1\tinyspace\right|}
\def\cA{\mathcal{A}}\def\cB{\mathcal{B}}
\def\cF{\mathcal{F}}
\def\bP{\mathbf{P}}\def\bQ{\mathbf{Q}}
\def\bsA{\boldsymbol{A}}\def\bsB{\boldsymbol{B}}\def\bsC{\boldsymbol{C}}
\def\bsF{\boldsymbol{F}}
\def\bsL{\boldsymbol{L}}\def\bsM{\boldsymbol{M}}\def\bsN{\boldsymbol{N}}
\def\bsP{\boldsymbol{P}}\def\bsQ{\boldsymbol{Q}}\def\bsR{\boldsymbol{R}}\def\bsT{\boldsymbol{T}}
\def\bsU{\boldsymbol{U}}\def\bsX{\boldsymbol{X}}\def\bsY{\boldsymbol{Y}}
\def\bsZ{\boldsymbol{Z}}
\def\bsa{\boldsymbol{a}}\def\bsb{\boldsymbol{b}}\def\bsd{\boldsymbol{d}}\def\bse{\boldsymbol{e}}
\def\bsn{\boldsymbol{n}}
\def\bsp{\boldsymbol{p}}\def\bsr{\boldsymbol{r}}\def\bss{\boldsymbol{s}}\def\bst{\boldsymbol{t}}
\def\bsu{\boldsymbol{u}}\def\bsv{\boldsymbol{v}}\def\bsw{\boldsymbol{w}}\def\bsx{\boldsymbol{x}}\def\bsy{\boldsymbol{y}}
\def\L{\textsf{L}}\def\O{\textsf{O}}
\def\U{\textsf{U}}
\newtheorem{thrm}{Theorem}
\newtheorem{lem}{Lemma}
\newtheorem{prop}{Proposition}
\newtheorem{cor}{Corollary}
\theoremstyle{definition}
\newtheorem{definition}{Definition}
\newtheorem{exam}{Example}
\newcounter{questionnumber}
\begin{document}

\title{Bargmann-invariant framework for local unitary equivalence and entanglement}

\author{\blue{Lin Zhang}$^1$\footnote{E-mail:
godyalin@163.com},\quad \blue{Bing Xie}$^1$,\quad \blue{Yuanhong
Tao}$^2$
\\
  {\it\small $^1$School of Science, Hangzhou Dianzi University, Hangzhou 310018, People's Republic of China}\\
  {\it\small $^2$College of Science, Zhejiang University of Science and Technology, Hangzhou, Zhejiang 310023, PR China}
  }
\date{\today}
\maketitle

\begin{abstract}
Research on quantum states often focuses on the correlation between
nonlocal effects and local unitary invariants, among which local
unitary equivalence plays a significant role in quantum state
classification and resource theories. This paper focuses on the
local unitary equivalence of multipartite quantum states in quantum
information theory, aiming to determine a complete set of invariants
that identify their local unitary orbits; these invariants are
crucial for deriving polynomial invariants and describing the
physical properties preserved under local unitary transformations.
The study deeply explores the characterization of local unitary
equivalence and the method of detecting entanglement using local
unitary Bargmann invariants. Taking two-qubit systems as an example,
it verifies the measurability of invariants that determine
equivalence and establishes a connection between Makhlin fundamental
invariants (a complete set of 18 local unitary invariants for
two-qubit states) and local unitary Bargmann invariants. These
Bargmann invariants, related to the traces of products of density
operators and marginal states, can be measured through cycle tests
(an extended form of SWAP tests).
\end{abstract}
\newpage
\tableofcontents
\newpage

\section{Introduction}

In the rapidly evolving field of quantum information science,
understanding and manipulating quantum states is of paramount
importance. Among the myriad phenomena that quantum mechanics
offers, local unitary equivalence and entanglement stand out as
fundamental yet intricate concepts. Local unitary equivalence, which
posits that certain quantum states are indistinguishable under local
operations and classical communication (LOCC), lies at the heart of
quantum state classification and resource theories. Entanglement, on
the other hand, serves as a cornerstone for quantum computing
\cite{Jozsa2003}, quantum cryptography \cite{Portmann2022}, and
various quantum communication protocols
\cite{Bennett1992,Bennett1993,Bouwmeester1997}, underscoring its
pivotal role in harnessing the power of quantum mechanics.

The local unitary equivalence
\cite{Makhlin2002,Kraus2010,Zhou2012,Jing2015,Martins2015}, defined
through local unitary transformations, holds significant importance
in quantum information science because the importance of local
unitary transformations lies in their crucial roles in quantum state
classification, manipulation, and algorithm design: In quantum state
classification \cite{Gour2013} it acts as a core tool, enabling
judgments of quantum state equivalence (identifying which states can
be inter-converted via local unitary transformations) while
preserving entanglement properties. In quantum state manipulation
\cite{Gour2011}, it is indispensable, allowing precise control over
local properties of quantum states without altering global system
characteristics -- acting on specific subsystems to modify their
states while preserving the inner product and norm of the overall
system state, thus finding wide use in state preparation,
manipulation, and measurement. In quantum algorithm design
\cite{Vartiainen2005} it is equally critical, serving as a basic
operational unit for constructing complex algorithms and optimizing
performance (e.g., adjusting search space structures in quantum
search algorithms to enhance efficiency).

The characterization of local unitary equivalence and the detection
of entanglement are crucial for advancing our understanding and
applications of quantum systems. Despite significant progress, these
tasks remain challenging due to the complex nature of quantum states
and the high-dimensional spaces they inhabit. The complex
interaction between local unitary transformations and global quantum
characteristics requires a refined method to distinguish equivalent
states and efficiently recognize entangled states.

Bargmann invariants, fundamental local unitary invariants of central
importance in quantum information, are associated with protocols
like quantum fingerprinting \cite{Buhrman2001} and concepts
including geometric phases \cite{Simon1993,Mukunda2003}. Their
applications span Kirkwood-Dirac quasiprobabilities, quantum
imaginarity witnesses
\cite{Kirkwood1933,Dirac1945,Bamber2014,Fernandes2024}, and
multipartite entanglement detection. Also termed multivariate traces
\cite{Oszmaniec2024}, they are amenable to estimation via
constant-depth circuits \cite{Quek2024}, ensuring compatibility with
near-term hardware and experimental feasibility. Acting as ``quantum
fingerprints", they determine state equivalence and enable
classification of high-dimensional multipartite states. Critically,
they capture nonlocal structures to detect entanglement and offer
multidimensional insights into quantum phenomena.

This paper develops a comprehensive framework using local unitary
Bargmann invariants to characterize multipartite quantum state
equivalence and detect entanglement. We integrate theoretical
foundations with algorithmic implementations to: (i) establish
precise conditions for local unitary equivalence, and (ii)
propose--for the first time--an entanglement detection protocol
based on Bargmann invariants. This approach advances methodologies
for analyzing complex quantum systems.

The paper is structured as follows: Firstly, we review fundamental
concepts of local unitary equivalence, establishing the groundwork
for subsequent analysis. Then we explores the theoretical
foundations of local unitary transformations and their role in
quantum state classification. After that, we develops entanglement
detection criteria based on local unitary-invariant Bargmann
invariants, critically examining their advantages and limitations.
Finally, we introduces novel methods and algorithms addressing
current challenges in characterizing local unitary equivalence and
detecting entanglement, while outlining promising research
directions. By advancing methodologies for these fundamental quantum
phenomena, this work aims to catalyze new developments in quantum
information science and technology. In the Appendixes we detail the
development process and key findings leading to the main
conclusions. When deriving these main results, we present some
essential tools that facilitate the obtainment of additional
findings. For instance, we establish a rigorous relationship
concerning the conversion between the Makhlin fundamental invariants
and LU Bargmann invariants. With these preparations, we can
calculate arbitrary locally unitary Bargmann invariants
$\Tr{\rho_{i_1}\cdots\rho_{i_N}}$, where each $\rho_{i_k}$ is from
the set $\set{\rho_{AB},\rho_A\ot\I_B,\I_A\ot\rho_B}$ for any
two-qubit state $\rho_{AB}$, up to ignoring dimensional factors.

\section{Bargmann invariant of a tuple of quantum states}

Before proceeding, let us fix notations used in this paper. Given
two tuples of $N$ states $\Psi=(\rho_1,\ldots,\rho_K)$ and
$\Psi'=(\rho'_1,\ldots,\rho'_K)$ acting on Hilbert space
$\complex^d$, if there exists a unitary $\bsU\in\U(d)$, the unitary
group acting on $\complex^d$, such that
$\rho'_i=\bsU\rho_i\bsU^\dagger$ for each $i=1,2,\ldots,K$, we say
$\Psi$ and $\Psi'$ are \emph{unitarily equivalent}. If there exists
a set of invariant properties  allows us to decide whether two
tuples of states are unitarily equivalent, this set is said to be
complete.

Consider a tuple of $K$ pure/mixed quantum states
$\Psi=(\rho_1,\ldots,\rho_K)$, where states $\rho_i$'s act on the
same underlying Hilbert space. The \emph{Bargmann invariant} (aka
multivariate traces \cite{Oszmaniec2024,Quek2024}) of this tuple of
states is defined as
\begin{eqnarray}
\Delta_{12\cdots K}(\Psi)=\Tr{\rho_1\rho_2\cdots\rho_K}.
\end{eqnarray}

Bargmann invariants can be used to describe the unitarily
equivalence between tuples of states. In fact, we have already known
the following result \cite{Procesi1976}: For two tuples of mixed
states on $\complex^d$, $\Psi=(\rho_1,\ldots,\rho_K)$ and
$\Psi'=(\rho'_1,\ldots,\rho'_K)$, both $\Psi$ and $\Psi'$ are
unitarily equivalent if and only if, for every $m\in\natural$ and
for every sequence $i_1,i_2,\ldots,i_m$ of numbers from
$\set{1,\ldots,K}$, the corresponding Bargmann invariants of degree
$m$ agree
\begin{eqnarray}
\Tr{\rho_{i_1}\rho_{i_2}\cdots\rho_{i_m}}=\Tr{\rho'_{i_1}\rho'_{i_2}\cdots\rho'_{i_m}}.
\end{eqnarray}

Recently, quantum circuits such as cycle test was introduced, which
enable the \emph{direct measurement} of complete sets of Bargmann
invariants for a tuple of quantum states \cite{Oszmaniec2024}.
Motivated by this result, we will investigate the locally unitary
equivalence of tuples of multipartite states using locally unitary
Bargmann invariants.

\section{Local unitary equivalence of multipartite states}

The same paradigm in the last section motivated the usage of
invariant polynomials in the context of classification of
entanglement classes subject to local unitary transformation. Let
$V:=\Herm(\complex^{d_1}\ot\cdots\ot \complex^{d_N})$, the Hermitian
matrices acting on the tensor space, and denote the local unitary
group by $\LU(\bsd)\equiv\U(d_1)\ot\cdots\ot\U(d_N)$, where
$\bsd:=(d_1,\ldots,d_N)$. The action of $\LU(\bsd)$ on $V$ is
defined by conjugation as $\tau_g(\bsX) = g \bsX g^{\dagger}$ for
all $g \in \LU(\bsd)$ and $\bsX \in V$. In fact, given two tuples of
multipartite states on $\complex^{d_1}\ot\cdots\ot\complex^{d_N}$,
$\Psi=(\rho_1,\ldots,\rho_K)$ and $\Psi'=(\rho'_1,\ldots,\rho'_K)$,
they are locally unitarily (LU) equivalent in the sense that
$\rho'_i =g\rho_i g^\dagger$ for all $i=1,\ldots,K$ and some
$g\in\LU(\bsd)$. That is, there exist a collection of $N$ unitary
operators $\bsU_j\in\U(d_j)(j=1,\ldots,N)$ such that
$g=\bsU_1\ot\cdots\ot\bsU_N$ and
\begin{eqnarray}
\rho'_i =
(\bsU_1\ot\cdots\ot\bsU_N)\rho_i(\bsU_1\ot\cdots\ot\bsU_N)^\dagger
\end{eqnarray}
for each $i=1,\ldots,K$. Clearly, when $K=1$, this problem is
reduced to a well-known locally unitary equivalence of two
multipartite states. Henceforth, we characterize local unitary
equivalence between two multipartite states through measurable
quantities expressible as linear combinations of local unitary
Bargmann invariants. The following result is essentially due to
Grassl \cite{Grassl1998}. For the reader's convenience, we provide
an independent proof here. For the detailed development, please see
Appendix~\ref{app:1}.
\begin{prop}\label{th:LUequiv}
For any two $N$-partite states $\rho$ and $\sigma$ acting on
$\complex^{d_1}\ot\cdots\ot \complex^{d_N}$, they are LU equivalent,
i.e., $\sigma=g\rho g^\dagger$ for some $g\in \LU(\bsd)$ for
$\bsd=(d_1,\ldots,d_N)$, if and only if, for arbitrary positive
integer $n$, it holds that
\begin{eqnarray}\label{eq:tracequantities}
\Tr{\sigma^{\ot n}\bP_{\bsd,n}(\boldsymbol{\pi})} = \Tr{\rho^{\ot
n}\bP_{\bsd,n}(\boldsymbol{\pi})},
\end{eqnarray}
where the meaning of $\bP_{\bsd,n}(\boldsymbol{\pi})$ will be
explained immediately in Eqs.~\eqref{eq:per1} and \eqref{eq:per2}
for all $\boldsymbol{\pi}:=(\pi_1,\ldots,\pi_N)\in S^N_n$, the
Cartesian product of $N$ copies of the permutation group of $n$
distinct elements.
\end{prop}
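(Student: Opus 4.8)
The plan is to prove the two implications separately, with the forward (``only if'') direction being essentially immediate and the converse resting on two pillars: a generalized Schur--Weyl duality identifying the operators $\bP_{\bsd,n}(\boldsymbol{\pi})$ as a spanning set of the commutant of the local-unitary action, together with the classical fact that the polynomial invariants of a compact group separate its orbits. Throughout I write $\cH=\complex^{d_1}\ot\cdots\ot\complex^{d_N}$ and let $\U(\bsd)$ act on $\cH^{\ot n}$ diagonally through $g^{\ot n}$. The key bookkeeping device is the regrouping $\cH^{\ot n}\cong\bigotimes_{i=1}^N(\complex^{d_i})^{\ot n}$, under which $g^{\ot n}=(\bsU_1\ot\cdots\ot\bsU_N)^{\ot n}$ becomes $\bigotimes_{i=1}^N\bsU_i^{\ot n}$ and $\bP_{\bsd,n}(\boldsymbol{\pi})$ becomes $\bigotimes_{i=1}^N\bW_i(\pi_i)$, where $\bW_i(\pi_i)$ permutes the $n$ copies of the $i$-th tensor factor according to $\pi_i$.

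For the ``only if'' direction, suppose $\sigma=g\rho g^\dagger$ with $g\in\U(\bsd)$, so that $\sigma^{\ot n}=g^{\ot n}\rho^{\ot n}(g^{\ot n})^\dagger$. Each $\bP_{\bsd,n}(\boldsymbol{\pi})$ commutes with $g^{\ot n}$, since under the regrouping it is a tensor product of the single-party permutation operators $\bW_i(\pi_i)$, and each $\bW_i(\pi_i)$ commutes with $\bsU_i^{\ot n}$ by ordinary Schur--Weyl. Cyclicity of the trace then yields $\Tr{\sigma^{\ot n}\bP_{\bsd,n}(\boldsymbol{\pi})}=\Tr{\rho^{\ot n}(g^{\ot n})^\dagger\bP_{\bsd,n}(\boldsymbol{\pi})g^{\ot n}}=\Tr{\rho^{\ot n}\bP_{\bsd,n}(\boldsymbol{\pi})}$, as required.

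The converse is the substantive part, and the idea is to show that the functions $\rho\mapsto\Tr{\rho^{\ot n}\bP_{\bsd,n}(\boldsymbol{\pi})}$, as $n$ and $\boldsymbol{\pi}$ range, linearly span the algebra of $\U(\bsd)$-invariant polynomials on $V=\Herm(\cH)$. I would first decompose an arbitrary invariant polynomial into homogeneous components, each of which is again invariant because the conjugation action is linear; by polarization and nondegeneracy of the trace pairing a homogeneous component of degree $n$ can be written as $\rho\mapsto\Tr{\rho^{\ot n}M}$ for a Hermitian $M\in\End(\cH^{\ot n})$. Invariance of $p$ gives $\Tr{\rho^{\ot n}(g^{\ot n})^\dagger M g^{\ot n}}=\Tr{(g\rho g^\dagger)^{\ot n}M}=p(\rho)$ for all $\rho,g$, so Haar-averaging $\bar M=\int_{\U(\bsd)}(g^{\ot n})^\dagger M g^{\ot n}\,\dif g$ produces $\Tr{\rho^{\ot n}\bar M}=p(\rho)$ while forcing $\bar M$ to commute with every $g^{\ot n}$, i.e. $\bar M$ lies in the commutant of the diagonal action. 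Now the generalized Schur--Weyl duality enters: the regrouped action generates the algebra $\cA_1\ot\cdots\ot\cA_N$ of single-party Schur--Weyl algebras, and since the commutant of a tensor product of algebras equals the tensor product of commutants, with $\cA_i'=\spn\{\bW_i(\pi):\pi\in S_n\}$, the full commutant equals $\spn\{\bP_{\bsd,n}(\boldsymbol{\pi}):\boldsymbol{\pi}\in S^N_n\}$. Hence $\bar M$, and therefore $p$, is a linear combination of the $\Tr{\rho^{\ot n}\bP_{\bsd,n}(\boldsymbol{\pi})}$. Finally, because $\U(\bsd)$ is compact its orbits on $V$ are closed and are separated by invariant polynomials; so if all the trace quantities in \eqref{eq:tracequantities} agree then every invariant polynomial agrees on $\rho$ and $\sigma$, placing them in the same orbit, which is precisely LU equivalence.

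The main obstacle I anticipate is establishing \emph{completeness} rather than mere invariance: proving that the commutant is genuinely \emph{spanned} by the $\bP_{\bsd,n}(\boldsymbol{\pi})$ (the nontrivial half of the duality, obtained here by combining per-party Schur--Weyl with the commutant-of-a-tensor-product identity), and that Haar-averaging really realizes \emph{every} invariant polynomial as such a trace. Additional care is needed in the real, Hermitian setting—checking that polarization and the averaging step are valid over $\real$, or after a harmless complexification of $V=\Herm(\cH)$—and in the orbit-separation step, which I would invoke as the standard fact that polynomial invariants of a compact group separate orbits, but should state precisely and apply to the closed compact orbits $\U(\bsd)\cdot\rho$ and $\U(\bsd)\cdot\sigma$.
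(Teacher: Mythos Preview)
Your proposal is correct and follows essentially the same strategy as the paper: invoke that invariant polynomials of a compact group separate orbits, represent each homogeneous invariant polynomial of degree $n$ as $\rho\mapsto\Tr{\rho^{\ot n}M}$ with $M$ in the commutant of the diagonal $\U(\bsd)$-action, and then apply the generalized Schur--Weyl duality to span that commutant by the $\bP_{\bsd,n}(\boldsymbol{\pi})$. Your use of Haar averaging to land $M$ in the commutant, and your derivation of the generalized duality from the per-party Schur--Weyl theorem together with the commutant-of-a-tensor-product identity, are slightly more explicit than the paper's treatment (which cites the duality and argues invariance of the tensor $\tilde{\bsp}$ directly), but the architecture is the same.
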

The sketch of proof is described here. Clearly $\rho\in V$, then
$g=\bsU_1\ot\cdots\ot\bsU_N\in \L\U(\bsd)$ acts on $\rho$ via
$\tau_g\rho =g\rho g^\dagger$. The space of all real polynomials on
$V$ is denoted by $\real[V]$. We will denote by $\real[V]_n$ the
space of real homogenous polynomials on $V$ of degree $n$. We have
already known that each homogeneous polynomials of degree $n$ are
mappings of the form $p(\bsX)=\Inner{\tilde\bsp}{\bsX^{\ot n}}$,
where $\bsX\in V$ and $\tilde\bsp\in V^{\ot n}$. Thus $\tilde \bsp$
defines an $\LU(\bsd)$-invariant polynomial $p\in
\real[V]^{\L\U(\bsd)}_n$ if and only if $\tau^{\ot n}_g\tilde
\bsp=\tilde \bsp$ for all $g\in \L\U(\bsd)$. Now for both $\rho$ and
$\sigma$ satisfying $\sigma=g\rho g^\dagger$ for some $g\in
\L\U(\bsd)$ if and only if $p(\rho)=p(\sigma)$ for $\forall p\in
\real[V]^{\L\U(\bsd)}$ \cite{Vrana2012}. By virtue of the above this
is equivalent to demanding that for all $g\in \L\U(\bsd)$
$$
\tilde \bsp=\tau^{\ot n}_g\tilde \bsp=\bQ_{\bsd,n}(\bar g)\tilde
\bsp\bQ_{\bsd,n}(\bar g),
$$
where $\bQ_{\bsd,n}(\bar g):=\bsU^{\ot n}_1\ot\cdots\ot\bsU^{\ot
n}_N$ for $\bar g:=(\bsU_1,\ldots,\bsU_N)$. This amounts to
requiring $[\bQ_{\bsd,n}(\bar g),\tilde\bsp]=0$ implying that
$\tilde\bsp\in\widetilde\cB'=\widetilde\cA$ by the generalized
Schur-Weyl duality. Thus $\tilde \bsp$ can be expanded into a linear
combination of $\bP_{\bsd,n}(\boldsymbol{\pi})$'s for
$\boldsymbol{\pi}=(\pi_1,\ldots,\pi_N)\in S^N_n$, where
\begin{eqnarray}\label{eq:per1}
\bP_{\bsd,n}(\boldsymbol{\pi}):=\bP_{d_1,n}(\pi_1)\ot\cdots\ot\bP_{d_N,n}(\pi_N)
\end{eqnarray}
for each permutation $\pi_j\in S_n(j=1,\ldots,N)$. Here, for
$(d,\pi)\in\set{(d_j,\pi_j):j=1,\ldots,N}$, $\bP_{d,n}(\pi)$ acting
on $(\complex^d)^{\ot n}$ via the action on computational basis
vectors is defined by
\begin{eqnarray}\label{eq:per2}
\bP_{d,n}(\pi)\ket{i_1\cdots i_n}:=\ket{i_{\pi^{-1}(1)}\cdots
i_{\pi^{-1}(n)}}.
\end{eqnarray}
By the generalized Schur-Weyl duality \cite{Brauer1937}, it is
easily seen that $\sigma=g\rho g^\dagger$ if and only if
$\Tr{\sigma^{\ot n}\bP_{\bsd,n}(\boldsymbol{\pi})} = \Tr{\rho^{\ot
n}\bP_{\bsd,n}(\boldsymbol{\pi})}$, where $n=1,2,\ldots$ and
$\boldsymbol{\pi}\in S^N_n$.

It will be seen that these quantities involved in
Eq.~\eqref{eq:tracequantities} can be shown to be a linear
combination of local unitary Bargmann invariants
$\Tr{\rho_{i_1}\cdots \rho_{i_n}}$ (when restricted to two-qubit
system), where each $\rho_{i_k}$'s is taken from the sequence of
states $\set{\rho_\Lambda:\Lambda\subseteq \set{1,\ldots,N}}$, where
$\rho_\Lambda=\Ptr{\bar\Lambda}{\rho}$, where
$\bar\Lambda:=\set{1,\ldots,N}\backslash\Lambda$. Due to the
measurability of Bargmann invariants, the above result in fact leads
an operational test for LU equivalence.

In order to derive a further result, let us focus on the special
bipartite case. Given two bipartite states $\rho_{AB}$ and
$\rho'_{AB}$ on $\complex^m\ot\complex^n(m=n=2)$, let
$\Psi=(\rho_{AB},\rho_A\ot\I_B,\I_A\ot\rho_B)$, where $\I_X(X=A,B)$
is the identity operator, and
$\Psi'=(\rho'_{AB},\rho'_A\ot\I_B,\I_A\ot\rho'_B)$, we will study
the locally unitary equivalence of two tuples $\Psi$ and $\Psi'$. It
is easily seen that $\rho_{AB}$ is LU equivalent to $\rho'_{AB}$ if
and only if $\Psi$ is LU equivalent to $\Psi'$.

Although a complete set of LU invariants of two-qubit states is
given already by Makhlin in 2002 \cite{Makhlin2002}, we would like
here to work out a complete set of LU invariants of two-qubit states
in terms of Bargmann invariants which are measurable quantities, of
interest to experimentalists.

\begin{thrm}\label{th:Bargmann-generators}
Given any two-qubit state
$\rho_{AB}\in\density{\complex^2\ot\complex^2}$. Denote
$\bsX_0=\rho_{AB}, \bsX_1=\rho_A\ot\I_B$, and
$\bsX_2=\I_A\ot\rho_B$. The set comprising of 18 local unitary
Bargmann invariants $B_k(k=1,\ldots,18)$ provides a complete
description of nonlocal properties of the two-qubit state
$\rho_{AB}$, where the meanings of $B_k$'s are given below:
\begin{eqnarray}\label{eq:18Bargmann-generators}
\begin{cases}
B_1:=\Tr{\bsX_0\bsX_1},B_2:=\Tr{\bsX_0\bsX_2},B_3:=\Tr{\bsX_0\bsX_1\bsX_2},B_4:=\Tr{\bsX^2_0},\\
B_5:=\Tr{\bsX^2_0\bsX_1\bsX_2},B_6:=\Tr{\bsX^3_0},B_7:=\Tr{\bsX^3_0\bsX_1},B_8:=\Tr{\bsX^3_0\bsX_2},\\
B_9:=\Tr{\bsX^3_0\bsX_1\bsX_2},B_{10}:=\Tr{\bsX^4_0},B_{11}:=\Tr{\bsX^2_0\bsX_1\bsX^2_0\bsX_1},B_{12}:=\Tr{\bsX^2_0\bsX_2\bsX^2_0\bsX_2},\\
B_{13}:=\Tr{\bsX_0\bsX_1\bsX_2\bsX^2_0\bsX_1},B_{14}:=\Tr{\bsX_0\bsX_1\bsX_2\bsX^2_0\bsX_2},
B_{15}:=\Tr{\bsX_0\bsX_1\bsX_2\bsX^3_0\bsX_1},\\
B_{16}:=\Tr{\bsX_0\bsX_1\bsX_2\bsX^3_0\bsX_2},
B_{17}:=\Tr{\bsX_0\bsX_1\bsX^2_0\bsX_1\bsX^3_0\bsX_1},
B_{18}:=\Tr{\bsX_0\bsX_2\bsX^2_0\bsX_2\bsX^3_0\bsX_2}.
\end{cases}
\end{eqnarray}
In other words, two states of a two-qubit system are LU equivalent
if and only if both states have equal values of all 18 LU Bargmann
invariants.
\end{thrm}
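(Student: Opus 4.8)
\emph{Proof proposal.} The plan is to prove two assertions separately: first that each $B_k$ is genuinely an LU invariant, and second that the eighteen $B_k$ generate the whole algebra of LU invariant polynomials of two-qubit states; the explicit dictionary to Makhlin's invariants $L_k$ then follows from an inversion carried out in the Appendix. For the first assertion, observe that if $\rho_{AB}\mapsto g\rho_{AB}g^\dagger$ with $g=\bsU_A\ot\bsU_B\in\U(\bsd)$, then the marginals transform as $\rho_A\mapsto\bsU_A\rho_A\bsU_A^\dagger$ and $\rho_B\mapsto\bsU_B\rho_B\bsU_B^\dagger$, so that all three operators transform by the \emph{same} conjugation, $\bsX_i\mapsto g\bsX_i g^\dagger$ for $i=0,1,2$ (using $\bsU_B\I_B\bsU_B^\dagger=\I_B$ and $\bsU_A\I_A\bsU_A^\dagger=\I_A$). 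Cyclicity of the trace and unitarity of $g$ then give $\Tr{g\bsX_{i_1}g^\dagger\cdots g\bsX_{i_m}g^\dagger}=\Tr{\bsX_{i_1}\cdots\bsX_{i_m}}$, so every $B_k$ is LU invariant.

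For the second assertion I would invoke Proposition \ref{th:LUequiv} together with the remark following it: by the generalized Schur--Weyl duality, the algebra of LU invariants of the tuple $\Psi=(\bsX_0,\bsX_1,\bsX_2)$ is spanned by Bargmann invariants $\Tr{\bsX_{i_1}\cdots\bsX_{i_m}}$, where the reduced states $\rho_A,\rho_B$ enter precisely through the embeddings $\bsX_1=\rho_A\ot\I_B$ and $\bsX_2=\I_A\ot\rho_B$. It therefore remains to show that this \emph{a priori} infinite family collapses to the subalgebra generated by the listed $B_1,\ldots,B_{18}$. Here I would use the low-dimensional structure: $\rho_A$ and $\rho_B$ are $2\times2$ while $\rho_{AB}$ is $4\times4$, so Cayley--Hamilton (equivalently the fundamental trace identities of Procesi--Razmyslov for small matrices) lets one rewrite any trace of a long word in $\bsX_0,\bsX_1,\bsX_2$ as a polynomial in traces of shorter words; repeatedly reducing bounds the word length and trims the generating set to finitely many words, which one matches term by term to the eighteen displayed in \eqref{eq:18Bargmann-generators}.

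To make the matching with Makhlin's set explicit I would pass to the Fano parametrization $\rho_{AB}=\tfrac14\big(\I\ot\I+\bsa\cdot\bsigma\ot\I+\I\ot\bsb\cdot\bsigma+\sum_{ij}T_{ij}\,\sigma_i\ot\sigma_j\big)$, under which a local unitary $\bsU_A\ot\bsU_B$ acts as an $\SO(3)\times\SO(3)$ rotation $(\bsa,\bsb,T)\mapsto(O_A\bsa,\,O_B\bsb,\,O_AT O_B^\t)$. In this picture $\bsX_1=\tfrac12(\I+\bsa\cdot\bsigma)\ot\I_B$ and $\bsX_2=\I_A\ot\tfrac12(\I+\bsb\cdot\bsigma)$, so each $B_k$ becomes an explicit $\SO(3)\times\SO(3)$ invariant polynomial in $(\bsa,\bsb,T)$ built from the elementary blocks $\bsa^\t\bsa$, $\bsb^\t\bsb$, $\Tr{TT^\t}$, $\Tr{(TT^\t)^2}$, $\bsa^\t T\bsb$, $\bsa^\t TT^\t\bsa$, $\bsb^\t T^\t T\bsb$, $\det T$, and the pseudoscalar ($\epsilon$-tensor) contractions. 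Solving this polynomial system for Makhlin's generators and substituting yields each $L_k$ as an analytic (polynomial) expression in $B_1,\ldots,B_{18}$; conversely, since Makhlin's list is already complete \cite{Makhlin2002}, every $B_k$ is itself a polynomial in the $L_k$, so the two families generate the same algebra.

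I expect the main obstacle to be the explicit inversion in the last step, in particular recovering the \emph{pseudoscalar} (chirality-sensitive) invariants such as $\det T$ and the $\epsilon$-tensor contractions. Because local unitaries realize only \emph{proper} rotations $\SO(3)\times\SO(3)$, these orientation-dependent quantities are genuine invariants that must be separated, and they are precisely the ones encoded in the higher-degree words $B_{11},\ldots,B_{18}$ (those containing repeated blocks such as $\bsX_0^2\bsX_1\bsX_0^2\bsX_1$ and $\bsX_0\bsX_1\bsX_2\bsX_0^3\bsX_1$). Verifying that these eighteen invariants are \emph{jointly} sufficient---that is, that the Jacobian of the map $(\bsa,\bsb,T)\mapsto(B_1,\ldots,B_{18})$ has full rank on the generic stratum, and that the degenerate strata (rank-deficient $T$, coincident spectra of $\rho_{AB}$) are also separated---is the delicate computational heart of the argument, and is where the case analysis deferred to the Appendix is required.
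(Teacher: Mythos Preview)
Your proposal contains the essential idea---expressing Makhlin's generators $L_k$ as polynomials in the $B_k$'s and invoking Makhlin's completeness result---but you surround it with two unnecessary detours that the paper avoids entirely.

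First, the Cayley--Hamilton\,/\,Procesi--Razmyslov reduction of arbitrary trace words $\Tr{\bsX_{i_1}\cdots\bsX_{i_m}}$ down to the eighteen displayed $B_k$'s is not needed and would be hard to carry out directly (Procesi's bounds for $4\times4$ matrices give far more than eighteen generators). The paper never attempts this. Instead it runs the logic in the opposite direction: since Makhlin's $I_k$'s (equivalently the reformulated $L_k$'s of Proposition~\ref{prop:LI}) are already known to generate the full LU invariant ring, it suffices to exhibit each $L_k$ as a polynomial in $B_1,\ldots,B_{18}$. That single algebraic inclusion finishes the proof; no independent finiteness argument for the Bargmann family is required.

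Second, your final paragraph about Jacobian rank on generic strata and orbit separation on degenerate loci is a red herring. The statement is one about polynomial algebras, not about smooth orbit geometry: once each $L_k$ is written explicitly as a polynomial in the $B_j$'s (Lemma~\ref{lem:LtoB}), the subring generated by the $B_j$'s contains the subring generated by the $L_k$'s, which is the whole invariant ring. No rank computation, no case analysis by strata. What makes the inversion tractable in the paper is a \emph{triangular structure}: after developing a product formula for two-qubit observables in the Bloch picture (Lemma~\ref{lem:A1}) and iterating it to compute $\rho_{AB}^k$ and $\rho_{AB}^k(\rho_A\ot\I_B)$, $\rho_{AB}^k(\I_A\ot\rho_B)$, $\rho_{AB}^k(\rho_A\ot\rho_B)$, each $B_k$ turns out to involve exactly one $L_j$ not appearing in the previous $B_1,\ldots,B_{k-1}$ (these are the highlighted terms in Lemma~\ref{lem:LtoB}). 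Solving this triangular system step by step gives the explicit formulas for all $L_k$'s in terms of the $B_k$'s, and the theorem follows. Your energy would be better spent on that explicit Bloch-representation computation than on trace identities or Jacobians.
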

The specific expressions for all Makhlin invariants are analytically
expressed by using Bargmann invariants $B_k$'s given in
Lemma~\ref{lem:LB} of Appendix~\ref{app:2}. We should remark there
that the problem of a complete set of generators for the invariant
polynomial ring and the problem of finding a complete set to
distinguish local unitary orbits are not identical problems. For
example, in the case of a two-qubit system, the invariant polynomial
ring has 21 generators \cite{Grassl1998}, while a complete set for
distinguishing local unitary orbits consists of 18 elements
\cite{Makhlin2002}.

The proof of this theorem can be finished by finding analytical
relations between Makhlin invariants and Bargmann invariants
$B_k$'s. In other words, as generators of a complete set of LU
invariants, $B_k$'s are more important because $B_k$'s are
measurable by the recent proposed quantum circuit, the so-called
cycle test. Therefore, we can determine whether two \emph{unknown}
two-qubit states are LU equivalent if and only if they have the same
values on the 18 Bargmann generators by measurement. This
classification of states based on their local properties provides a
powerful toolkit for analyzing their global, non-local
characteristics. In the following section, we leverage these
invariants to address a central problem in quantum information:
determining whether a given state is entangled.

\section{Entanglement criterion via LU Bargmann invariants}

Having established a framework for classifying states under local
operations, we now turn to the problem of entanglement detection.
Since entanglement must be invariant under local unitary
transformations, the LU invariants discussed in the previous section
are natural candidates for constructing entanglement criteria.

From the connection between Makhlin invariants and Bargmann
invariants, we will get a physical and operational criterion in
entanglement detection. In fact, we get the following:

\begin{thrm}\label{th:ent-test}
A two-qubit state $\rho_{AB}$ is entangled if and only if the
following inequality holds true:
\begin{eqnarray}\label{eq:ent-test}
6(B_1+B_2-B_1B_2-B_4-B_{10})+12(B_5-B_3)+3B^2_4+4B_6<1,
\end{eqnarray}
where the meanings of $B_k$'s here are taken from
Eq.~\eqref{eq:18Bargmann-generators}. Explicitly,
Eq.~\eqref{eq:ent-test} can be equivalently rewritten as
\begin{eqnarray}\label{eq:Bargmannentvssep}
&&6\Br{\Tr{\rho^2_A}+\Tr{\rho^2_B}-\Tr{\rho^2_A}\Tr{\rho^2_B}-\Tr{\rho^2_{AB}}-\Tr{\rho^4_{AB}}}\notag\\
&&+12\Br{\Tr{\rho^2_{AB}(\rho_A\ot\rho_B)}-\Tr{\rho_{AB}(\rho_A\ot\rho_B)}}+3\Br{\Tr{\rho^2_{AB}}}^2+4\Tr{\rho^3_{AB}}<1.
\end{eqnarray}
\end{thrm}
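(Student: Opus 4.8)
The plan is to show that the inequality in \eqref{eq:ent-test} is nothing but the Peres--Horodecki (PPT) criterion rewritten in Bargmann invariants. For a two-qubit state, separability is equivalent to positivity of the partial transpose $\rho_{AB}^{\t_B}$, and it is a standard fact that $\rho_{AB}^{\t_B}$ has at most one strictly negative eigenvalue. Hence $\rho_{AB}$ is separable if and only if $\det\pa{\rho_{AB}^{\t_B}}\ge 0$, and entangled if and only if $\det\pa{\rho_{AB}^{\t_B}}<0$. So it suffices to prove the single polynomial identity
\[
6(B_1+B_2-B_1B_2-B_4-B_{10})+12(B_5-B_3)+3B_4^2+4B_6 \;=\; 1+24\det\pa{\rho_{AB}^{\t_B}},
\]
after which the strict inequality $\mathrm{LHS}<1$ is exactly $\det\pa{\rho_{AB}^{\t_B}}<0$, i.e.\ entanglement. (One checks the target identity is consistent on test states: it gives $1$ on a pure product state, $35/32>1$ on the maximally mixed state, and $-1/2<1$ on a Bell state, matching $1+24\det\pa{\rho_{AB}^{\t_B}}$ in each case.)

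First I would note that $p_k:=\Tr{(\rho_{AB}^{\t_B})^k}$ is itself a local-unitary invariant, since a local unitary $\bsU_A\ot\bsU_B$ acts on $\rho_{AB}^{\t_B}$ by conjugation with $\bsU_A\ot\overline{\bsU_B}$. Therefore, by Theorem~\ref{th:Bargmann-generators}, each $p_k$ is a polynomial in the generators $B_1,\ldots,B_{18}$, and so is the degree-four invariant $\det\pa{\rho_{AB}^{\t_B}}$. To make this explicit I would use the Newton/Cayley--Hamilton identity for a $4\times 4$ matrix $M$,
\[
24\det M = (\Tr{M})^4 - 6(\Tr{M})^2\,\Tr{M^2} + 3\pa{\Tr{M^2}}^2 + 8\,\Tr{M}\,\Tr{M^3} - 6\,\Tr{M^4},
\]
with $M=\rho_{AB}^{\t_B}$. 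Using $\Tr{\rho_{AB}^{\t_B}}=\Tr{\rho_{AB}}=1$ and $\Tr{(\rho_{AB}^{\t_B})^2}=\Tr{\rho_{AB}^2}=B_4$ (partial transposition preserves the Hilbert--Schmidt norm), this reduces $\det\pa{\rho_{AB}^{\t_B}}$ to the two higher moments $p_3$ and $p_4$.

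The core computation is to express $p_3,p_4$ through the Bargmann generators $\bsX_0=\rho_{AB}$, $\bsX_1=\rho_A\ot\I_B$, $\bsX_2=\I_A\ot\rho_B$. I would use the permutation representation
\[
p_k=\Tr{\rho_{AB}^{\ot k}\pa{W^A_{\sigma_k}\ot W^B_{\sigma_k^{-1}}}},\qquad \sigma_k=(1\,2\,\cdots\,k),
\]
where the cyclic shift is inverted on the $B$ factors because partial transposition reverses the cyclic order there. Expanding each permutation operator and contracting, every such trace collapses into products and traces built from $\bsX_0,\bsX_1,\bsX_2$; fully disconnected contractions yield factors $\Tr{\rho_{AB}}=\Tr{\rho_A}=\Tr{\rho_B}=1$ and supply the additive constant. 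Substituting the resulting expressions for $p_3,p_4$ together with $p_2=B_4$ into the determinant formula and collecting terms is then expected to reproduce exactly the left-hand side of \eqref{eq:ent-test}, establishing the boxed identity and hence the criterion \eqref{eq:Bargmannentvssep}.

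The hard part will be the faithful combinatorial reduction of $p_3$ and $p_4$: the cyclic-versus-anticyclic contractions are genuinely connected tensor networks, so each $p_k$ on its own carries generators that do not appear in the final inequality, and these must cancel in the combination $8p_3-6p_4$. Bookkeeping all coefficients---especially the degree-four pieces $B_{10}$, $B_1B_2$, $B_1$, $B_2$ coming from $p_4$---is where mistakes are most likely. A safer route, matching the narrative of the text, is to take the two-qubit separability condition $\det\pa{\rho_{AB}^{\t_B}}\ge 0$ expressed through Makhlin's complete set $\set{L_k}$ and then substitute the relations $L_k=L_k(B)$ supplied by Theorem~\ref{th:Bargmann-generators}. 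Either way, the final step is a single polynomial identity between two local-unitary-invariant polynomials of degree at most four in $\rho_{AB}$; since such invariants span a finite-dimensional space, the identity can be certified by evaluating both sides on a spanning family of two-qubit states (product, Bell, maximally mixed, and a generic state) or by direct symbolic computation.
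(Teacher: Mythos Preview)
Your proposal is correct, and the ``safer route'' you describe at the end---expressing the PPT/determinant condition $\det(\rho_{AB}^{\t_B})<0$ via the Makhlin-type invariants $L_k$ and then substituting the conversion formulas $L_k=L_k(B_1,\ldots,B_{18})$ of Theorem~\ref{th:Bargmann-generators}---is exactly what the paper does: it computes the characteristic polynomial of $4\rho_{AB}^{\Gamma}-\I_4$ in the Bloch parameters $(\bsa,\bsb,\bsC)$, reads off $\det(\rho_{AB}^{\Gamma})<0$ as an inequality in the $L_k$'s (Lemma~\ref{th:entvssep}), and then applies Lemma~\ref{lem:LtoB} to rewrite it in the $B_k$'s. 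Your primary route---reducing $p_3,p_4=\Tr{(\rho_{AB}^{\t_B})^{3,4}}$ directly through the cyclic/anticyclic permutation operators---is a legitimate alternative that bypasses the Bloch parametrisation entirely; it is more conceptual but, as you anticipate, the bookkeeping of the connected contractions is where the work lies, whereas the paper's Bloch route trades that combinatorics for straightforward (if lengthy) polynomial algebra in $(\bsa,\bsb,\bsC)$ that is easily machine-verified.
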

It is already known that the fundamental equivalence is that, for
any two-qubit system, the necessary and sufficient criterion for
entanglement is the Peres-Horodecki (abbr. PPT) criterion
\cite{Horodecki1996}. Similarly, Theorem~\ref{th:ent-test} is also
equivalent to the PPT criterion. However, the significance of it
lies in its independence from any other observables; determining
whether a state is entangled requires only implementing a quantum
circuit, say one in \cite{Quek2024}, to measure 7 locally unitary
Bargmann invariants. The proof of the desired inequality is put in
Appendix~\ref{app:3}.

\section{Discussion}

In \cite{Yu2003}, the authors proposed a test for entanglement of
two-qubit states: A two-qubit $\rho$ is separable if and only if the
following inequality holds for \emph{all} sets of observables
$\bsA_i=\bsa_i\cdot\boldsymbol{\sigma}$ and
$\bsB_i=\bsb_i\cdot\boldsymbol{\sigma}$, where $i=1,2,3$, with the
same orientation:
\begin{eqnarray*}
\sqrt{\langle\bsA_1\ot\bsB_1+\bsA_2\ot\bsB_2\rangle^2_\rho+
\langle\bsA_3\ot\I+\I\ot\bsB_3\rangle^2_\rho}\leqslant
1+\langle\bsA_3\ot\bsB_3\rangle_\rho.
\end{eqnarray*}
We see from this criterion that, for an \emph{unknown} two-qubit, it
is hard to determine the separability of such state in practice
because one has to check their inequality for all sets of local
testing observables being complementary. The advantage of our
criterion in Eq.~\eqref{eq:ent-test} or
Eq.~\eqref{eq:Bargmannentvssep} indicates that in order to determine
separability/entanglement in an \emph{unknown} two-qubit state, it
suffices to measure only 7 locally unitarily Bargmann invariants for
such two-qubit state with the help of a quantum circuit of constant
depth \cite{Quek2024}.

Our approach to the entanglement criterion for two-qubit states can
be extended to another specialized composite quantum system, namely,
the qubit-qutrit system. However, the computational complexity of
determining a complete set of generators distinguishing locally
unitary orbits for qubit-qutrit states is tremendous, due to the
increased dimensionality of the Hilbert space
$\complex^2\ot\complex^3$, which results in a more complex local
unitary group $\SU(2)\times \SU(3)$. This growth in complexity leads
to a rapid expansion in the number of algebraically independent
polynomial invariants, rendering their computation prohibitively
expensive. Thus, identifying a minimal and sufficient set of these
invariants for entanglement detection---a task we accomplished for
two-qubit---becomes a non-trivial undertaking in hybrid-dimensional
spaces. Moreover, when employing LU Bargmann invariants to detect
entanglement, the required number of inequalities is no longer one
(i.e., multiple inequalities are needed). This is because, unlike
the two-qubit case where although the negativity of the constant
term (i.e., the determinant of the partial-transposed density
matrix) in the characteristic polynomial of the partial-transposed
density matrix serves as a necessary and sufficient entanglement
criterion, it becomes insufficient for the qubit-qutrit system where
a more complex set of inequalities for complete detection is needed.
Our future work will extend this approach by generalizing the cross
product from three-dimensional Euclidean space to higher dimensions
and formulating a product rule for two-qudit observables. This will
allow us to formalize the relationship between LU Bargmann
invariants and the triples (i.e., those triples consists of two
generalized Bloch vectors and correlation matrix) in the generalized
Bloch representation \cite{Kimura2003}.

\section{Conclusion}

In this work, we have explored the local unitary equivalence of
multipartite states using Bargmann invariants. We identified a
complete set of 18 Bargmann generators distinguishing local unitary
(LU) invariants for two-qubit states. Building on this foundation,
we propose a method to characterize entanglement in unknown
two-qubit states by measuring a subset of seven out of these 18
Bargmann generators. Our approach can be extended to
higher-dimensional state spaces. Our findings also inspire novel
experimental designs to test entanglement in unknown quantum states.
In future research, we plan to investigate the relationships between
the moments of the probability distribution of random measurements
\cite{Wyderka2023} and Bargmann invariants.

\subsubsection*{Acknowledgments}

This research is supported by Zhejiang Provincial Natural Science
Foundation of China under Grant No. LZ23A010005 and by NSFC under
Grant No.11971140. We are deeply grateful for the significant and
crucial comments provided by the anonymous reviewers, as they have
been instrumental in enhancing the quality of our paper. We extend
our sincere gratitude to Shao-Ming Fei, Ming Li, and Ming-Jing Zhao
for their insightful and valuable comments. LZ would like to thank
Felix Huber and Markus Grassl for pointing out the omitted
references in the arXiv version of this paper, and Markus Grassl
again for his helpful comments on Makhlin's invariants.



\newpage
\appendix
\appendixpage
\addappheadtotoc

The present appendix details the development process and key
findings leading to the main conclusions. When deriving these main
results, we present some essential tools that facilitate the
obtainment of additional findings. For instance, we establish a
rigorous relationship concerning the conversion between Makhlin's
fundamental invariants and LU Bargmann's invariants. With these
preparations, we can calculate arbitrary locally unitary Bargmann
invariants $\Tr{\rho_{i_1}\cdots\rho_{i_N}}$, where each
$\rho_{i_k}$ is from the set
$\set{\rho_{AB},\rho_A\ot\I_B,\I_A\ot\rho_B}$ for any two-qubit
state $\rho_{AB}$, up to ignoring dimensional factors.

\section{Proof of Proposition~\ref{th:LUequiv}}\label{app:1}

For the proof of Proposition~\ref{th:LUequiv} in the main text, we
used a lot of tools which cannot be explained in detail within the
confines of that proposition's discussion. Now in this section, we
will present a more comprehensive and detailed exploration of these
tools, providing the necessary background, definitions, and
explanations to fully understand their application in the proof.
This deeper dive will not only clarify the intricacies of the proof
but also enhance the reader's grasp of the underlying mathematical
concepts and techniques. By doing so, we aim to make the proof of
Proposition~\ref{th:LUequiv} more accessible and insightful for a
broader audience.

\subsection{Invariant theory}

Let $K$ be a compact group and let
\begin{eqnarray}
\Pi:K\ni g\mapsto \Pi_g\in\GL(V)
\end{eqnarray}
be a representation of $K$ in a finite dimensional real vector space
$V$. Since $K$ is compact, we can assume that $\Pi_g$ is an
orthogonal transformation. That is,
\begin{eqnarray}
\Pi:K\ni g\mapsto \Pi_g\in\O(V).
\end{eqnarray}
The space of all real polynomials on $V$ is denoted by $\real[V]$.
We will denote by $\real[V]_n$ the space of real homogeneous
polynomials on $V$ of degree $n$. Homogeneous polynomials of degree
$n$ are mappings of the form:
\begin{eqnarray}
p(\bsv)=\Inner{\tilde \bsp}{\bsv^{\ot n}}
\end{eqnarray}
where $\Inner{\cdot}{\cdot}$ is the $K$-invariant inner product in
$V^{\ot n}$ (induced by the inner product on $V$), and $\tilde
\bsp\in V^{\ot n}$ is a tensor encoding the polynomial $p$.

Invariant homogeneous polynomials of degree $n$ are polynomials that
must satisfy
\begin{eqnarray}
p(\Pi_{g^{-1}}\bsv)=p(\bsv)
\end{eqnarray}
for every $\bsv\in V$ and $g\in K$. This is equivalent to
\begin{eqnarray}
\Inner{\tilde \bsp}{\bsv^{\ot n}} = \Inner{\tilde
\bsp}{(\Pi_{g^{-1}}\bsv)^{\ot n}} = \Inner{\tilde \bsp}{\Pi^{\ot
n}_{g^{-1}}\bsv^{\ot n}} = \Inner{\Pi^{\ot n}_g\tilde
\bsp}{\bsv^{\ot n}},
\end{eqnarray}
which implies
\begin{eqnarray}
\Pi^{\ot n}_g\tilde \bsp = \tilde \bsp
\end{eqnarray}
for every $g\in K$.

Denote the set of all $K$-invariant polynomials by $\real[V]^K$. It
is well known result in invariant theory that in the case of compact
groups we can use invariant polynomials in $\real[V]^K$ to decide
about equivalence of elements of $V$ under the action of $K$.
\begin{prop}[\cite{Vrana2012}]\label{prop:vrana}
For $\bsu,\bsv\in V$, we have $\bsv=\Pi_g\bsu$, for some $g\in K$ if
and only if for every invariant polynomial $p\in\real[V]^K$, we have
$p(\bsv)=p(\bsu)$.
\end{prop}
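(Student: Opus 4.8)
The plan is to prove the two implications separately, with the forward direction being immediate and the converse forming the technical core. For the forward direction, suppose $\bsv=\Pi_g\bsu$ for some $g\in K$. Then for any $p\in\real[V]^K$ the defining invariance $p(\Pi_{g^{-1}}\bsw)=p(\bsw)$ (equivalently $p(\Pi_g\bsw)=p(\bsw)$ after relabeling, since $g^{-1}$ ranges over all of $K$) gives at once $p(\bsv)=p(\Pi_g\bsu)=p(\bsu)$. This direction uses nothing beyond the definition of an invariant polynomial.

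For the converse I would argue by contraposition: assuming $\bsv$ does not lie in the orbit $\cO_{\bsu}:=\{\Pi_g\bsu:g\in K\}$, I would construct an invariant polynomial that separates $\bsu$ from $\bsv$. First I would record that, because $K$ is compact and $\Pi$ is continuous, both orbits $\cO_{\bsu}$ and $\cO_{\bsv}$ are compact subsets of $V$; moreover distinct orbits are either equal or disjoint, so the hypothesis $\bsv\notin\cO_{\bsu}$ forces $\cO_{\bsu}$ and $\cO_{\bsv}$ to be disjoint. Using Urysohn's lemma on these two disjoint compact sets, I would obtain a continuous function $f:V\to\real$ with $f\equiv 0$ on $\cO_{\bsu}$ and $f\equiv 1$ on $\cO_{\bsv}$.

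The decisive step combines two classical tools. By the Stone--Weierstrass theorem, polynomials are dense in the continuous functions on the compact set $\cO_{\bsu}\cup\cO_{\bsv}$, so I can pick a polynomial $q$ with $\abs{q-f}<1/3$ there; thus $q<1/3$ on $\cO_{\bsu}$ and $q>2/3$ on $\cO_{\bsv}$. I would then apply the Reynolds operator, averaging against the normalized Haar measure $\mu$ of the compact group,
\[
\bar q(\bsx):=\int_K q(\Pi_g\bsx)\,d\mu(g).
\]
Since each $\Pi_g$ is linear, $\bar q$ is again a polynomial of degree at most $\deg q$, and it is manifestly $K$-invariant, so $\bar q\in\real[V]^K$. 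Because the Haar average of a function bounded above (resp.\ below) on an invariant set retains that bound, the orbit estimates yield $\bar q(\bsu)\le 1/3<2/3\le\bar q(\bsv)$. Hence $\bar q(\bsu)\ne\bar q(\bsv)$, contradicting the hypothesis that every invariant polynomial agrees on $\bsu$ and $\bsv$. This forces $\bsv\in\cO_{\bsu}$, i.e.\ $\bsv=\Pi_g\bsu$ for some $g\in K$.

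I expect the main obstacle to be verifying that the averaged object $\bar q$ is genuinely a polynomial lying in $\real[V]^K$ while simultaneously preserving the separation established for $q$: one must check that integrating the (finitely many) coefficients of the polynomial $\bsx\mapsto q(\Pi_g\bsx)$ against Haar measure yields well-defined real coefficients without raising the degree, and that monotonicity of the integral transports the pointwise bounds on the orbits to the single points $\bsu$ and $\bsv$. Compactness of $K$ is used in three essential places---existence and finiteness of the Haar measure, compactness of the orbits, and the applicability of Stone--Weierstrass on a compact domain---so the argument does not extend verbatim to noncompact groups.
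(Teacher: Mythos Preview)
Your argument is correct and is the standard proof of this classical fact in invariant theory for compact groups. Note, however, that the paper does not supply its own proof of this proposition: it is simply quoted as a known result with a citation to \cite{Vrana2012}, so there is no ``paper's proof'' to compare against. Your Urysohn--Stone--Weierstrass--Reynolds argument is exactly the kind of proof one finds in the invariant-theory literature, and the points you flag (that Haar averaging preserves polynomiality and degree because each $\Pi_g$ is linear, and that the orbit bounds survive the average because orbits are $K$-invariant) are precisely the verifications needed.
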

Because every polynomial can be decomposed into the direct sum of
homogeneous polynomials, this implies
$\real[V]^K=\oplus^\infty_{n=1}\real[V]^K_n$. Then the above
Proposition~\ref{prop:vrana} can be restated as
\begin{prop}\label{prop:lin}
For $\bsu,\bsv\in V$, we have $\bsv=\Pi_g\bsu$, for some $g\in K$ if
and only if for every $K$-invariant homogeneous polynomial $p_n$ of
degree $n$, we have $p_n(\bsv)=p_n(\bsu)$, where $n=1,2,\ldots$
\end{prop}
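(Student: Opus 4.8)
The plan is to reduce Proposition~\ref{prop:lin} to Proposition~\ref{prop:vrana} by exploiting the grading of $\real[V]$ by polynomial degree together with the fact that the $K$-action respects this grading. Since Proposition~\ref{prop:vrana} already characterizes the orbit relation $\bsv = \Pi_g\bsu$ through the full algebra $\real[V]^K$ of invariant polynomials, it suffices to show that separating the two points by \emph{homogeneous} invariants is equivalent to separating them by \emph{all} invariants. The forward implication is immediate: if $\bsv = \Pi_g\bsu$ for some $g\in K$, then $\Pi_{g^{-1}}\bsv = \bsu$, and the defining relation $p_n(\Pi_{g^{-1}}\bsv) = p_n(\bsv)$ of any invariant homogeneous polynomial $p_n$ forces $p_n(\bsu) = p_n(\bsv)$ for every degree $n$; this uses nothing beyond the definition of invariance.

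For the converse the key step is to verify that each homogeneous component of an invariant polynomial is itself invariant, so that $\real[V]^K = \bigoplus_{n=1}^\infty \real[V]^K_n$ as claimed above. Given $p\in\real[V]^K$, write $p = \sum_n p_n$ with $p_n\in\real[V]_n$ the component of degree $n$. Because each $\Pi_g$ is a \emph{linear} map on $V$, precomposition with $\Pi_{g^{-1}}$ sends $\real[V]_n$ into itself, so the degree-$n$ part of $p\circ\Pi_{g^{-1}}$ is exactly $p_n\circ\Pi_{g^{-1}}$. The invariance $p\circ\Pi_{g^{-1}} = p$ then matches homogeneous components degree by degree and yields $p_n\circ\Pi_{g^{-1}} = p_n$ for every $n$ and every $g\in K$, so each $p_n$ is $K$-invariant.

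With this decomposition established, the converse is a formal consequence: assuming $p_n(\bsv) = p_n(\bsu)$ for every invariant homogeneous polynomial of every degree, any $p\in\real[V]^K$ obeys $p(\bsv) = \sum_n p_n(\bsv) = \sum_n p_n(\bsu) = p(\bsu)$, since each summand is such an invariant. Proposition~\ref{prop:vrana} then delivers $\bsv = \Pi_g\bsu$ for some $g\in K$. The only genuine content lies in interchanging the degree grading with the group action, and I expect this to be the sole (and rather mild) obstacle; once it is in place the proposition is just a graded reformulation of Proposition~\ref{prop:vrana}.
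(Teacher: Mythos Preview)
Your proposal is correct and follows exactly the paper's approach: the paper simply notes that every polynomial decomposes into homogeneous pieces so that $\real[V]^K=\bigoplus_{n\geqslant1}\real[V]^K_n$, and then restates Proposition~\ref{prop:vrana}. You have merely filled in the detail (that linearity of $\Pi_g$ forces each homogeneous component of an invariant to be invariant) that the paper leaves implicit.
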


\subsection{The generalized Schur-Weyl duality}

Consider a system of $n$ qudits, acting on $(\complex^d)^{\ot n}$
each with a standard local computational basis
$\set{\ket{i},i=1,\ldots,d}$. The Schur-Weyl duality relates
transforms on the system performed by local $d$-dimensional unitary
operations to those performed by permutation of the qudits. Recall
that the symmetric group $S_n$ is the group of all permutations of
$n$ objects. This group is naturally represented in our system by
\begin{eqnarray}\label{eq:P}
\bP_{d,n}(\pi)\ket{i_1\cdots i_n} := \ket{i_{\pi^{-1}(1)}\cdots
i_{\pi^{-1}(n)}},
\end{eqnarray}
where $\pi\in S_n$ is a permutation and $\ket{i_1\cdots i_n}$ is
shorthand for $\ket{i_1}\ot\cdots\ot\ket{i_n}$. Let $\U(d)$ denote
the group of $d\times d$ unitary operators. This group is naturally
represented in our system by
\begin{eqnarray}\label{eq:Q}
\bQ_{d,n}(\bsU)\ket{i_1\cdots i_n} := \bsU\ket{i_1}\ot\cdots\ot
\bsU\ket{i_n},
\end{eqnarray}
where $\bsU\in\U(d)$. In fact, $\bQ_{d,n}(\bsU):=\bsU^{\ot n}$,
which is called the \emph{collective action} of $\bsU\in\U(d)$. Thus
we have the following famous result:
\begin{thrm}[Schur, \cite{Zhang2024}]\label{th-schur}
Let $\cA = \spn_\complex\Set{\bP_{d,n}(\pi): \pi\in S_k}$ and $\cB =
\spn_\complex\Set{\bQ_{d,n}(\bsU): \bsU\in\mathsf{U}(d)}$. Then:
\begin{eqnarray}
\cA' = \cB\quad\text{and}\quad \cA = \cB'.
\end{eqnarray}
\end{thrm}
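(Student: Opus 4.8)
The plan is to prove the first identity $\cA'=\cB$ by identifying both algebras with the symmetric subspace of the operator algebra, and then to deduce $\cA=\cB'$ formally from the double commutant theorem. The elementary half is immediate: relabelling the tensor factors commutes with applying the same operator to each factor, so $\bP_{d,n}(\pi)\bsU^{\ot n}=\bsU^{\ot n}\bP_{d,n}(\pi)$ for all $\pi\in S_n$ and $\bsU\in\U(d)$; by linearity this gives $\cB\subseteq\cA'$ (and symmetrically $\cA\subseteq\cB'$).

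For the reverse inclusion I would pass to the canonical isomorphism $\End((\complex^d)^{\ot n})\cong\mat_d(\complex)^{\ot n}$, under which a factorwise product $A_1\ot\cdots\ot A_n$ corresponds to the tensor $A_1\ot\cdots\ot A_n$ and, crucially, conjugation $X\mapsto\bP_{d,n}(\pi)X\bP_{d,n}(\pi)^{-1}$ becomes the permutation of the $n$ copies of $\mat_d(\complex)$. Hence an operator lies in $\cA'$, i.e.\ commutes with every $\bP_{d,n}(\pi)$, if and only if it is invariant under all factor permutations, so that $\cA'=\sym^n(\mat_d(\complex)):=(\mat_d(\complex)^{\ot n})^{S_n}$. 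Since each $\bsU^{\ot n}$ plainly lies in this symmetric subspace, one also has $\cB\subseteq\sym^n(\mat_d(\complex))$.

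The crux is the opposite containment $\sym^n(\mat_d(\complex))\subseteq\cB$. By the polarization identity the symmetric subspace is spanned by diagonal tensors, $\sym^n(\mat_d(\complex))=\spn_\complex\set{A^{\ot n}:A\in\mat_d(\complex)}$, so it suffices to show that the unitary powers $\bsU^{\ot n}$ already span as much as all the $A^{\ot n}$. I would argue this by duality: if a linear functional $\phi$ on $\mat_d(\complex)^{\ot n}$ annihilates every $\bsU^{\ot n}$, then $A\mapsto\phi(A^{\ot n})$ is a holomorphic polynomial in the entries of $A$ that vanishes on $\U(d)$; because $\u(d)+i\u(d)=\gl(d,\complex)$, the Zariski closure of $\U(d)$ inside $\mat_d(\complex)$ is all of $\mat_d(\complex)$, so this polynomial vanishes identically and $\phi$ annihilates every $A^{\ot n}$, hence all of $\sym^n(\mat_d(\complex))$. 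Therefore $\cB=\sym^n(\mat_d(\complex))=\cA'$.

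Finally, $\cA$ is a unital $*$-subalgebra of $\End((\complex^d)^{\ot n})$ — the image of the group algebra $\complex[S_n]$, closed under products and adjoints via $\bP_{d,n}(\pi)\bP_{d,n}(\sigma)=\bP_{d,n}(\pi\sigma)$ and $\bP_{d,n}(\pi)^\dagger=\bP_{d,n}(\pi^{-1})$ — so the double commutant theorem yields $\cA''=\cA$, and combining with the first identity gives $\cB'=(\cA')'=\cA''=\cA$, as required. The only step demanding genuine work is the spanning claim $\cB=\sym^n(\mat_d(\complex))$, where the polarization identity together with the Zariski-density of $\U(d)$ in $\mat_d(\complex)$ (equivalently, that $\U(d)$ is a totally real submanifold of full dimension) are the essential ingredients; the factor-permutation reinterpretation of $\cA'$, the mutual-commuting inclusion, and the double-commutant deduction are all routine.
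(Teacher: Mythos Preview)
Your proof is correct. The paper itself does not prove this theorem; it merely states Schur--Weyl duality as a known result and cites \cite{Zhang2024}, using it (and its generalized version, Theorem~\ref{th:GSW}) as a black box in the proof of Proposition~\ref{th:LUequiv}. So there is no proof in the paper to compare against.

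What you have written is one of the standard modern proofs: identify $\cA'$ with the symmetric tensors in $\mat_d(\complex)^{\ot n}$ via the conjugation action, then use polarization together with the Zariski density of $\U(d)$ in $\mat_d(\complex)$ to show that the $\bsU^{\ot n}$ already span the symmetric subspace, and finally invoke the bicommutant theorem to get the other equality. Each step is sound. The Zariski-density step is the one place a reader might want slightly more detail (e.g., the observation that a holomorphic polynomial vanishing on a totally real submanifold of full real dimension vanishes identically, or equivalently that $\U(d)$ is Zariski-dense in $\GL(d,\complex)$ which in turn is open-dense in $\mat_d(\complex)$), but you have indicated the mechanism correctly.
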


When treated as matrix algebras, such pairs $(\cA,\cB)$ are known as
\emph{dual reductive pairs} since the collective action of the
unitary group on the tensor space and the permutation action of
tensor factors are mutual commutants.

In fact, the above dual theorem by Schur can be generalized.
Consider the local unitary group
$\LU(\bsd)\equiv\U(d_1)\ot\cdots\ot\U(d_N)$, where
$\bsd:=(d_1,\ldots,d_N)$ are positive integer dimensions, which is a
subgroup of
$\GL(\bsd)\equiv\GL(d_1,\complex)\ot\cdots\ot\GL(d_N,\complex)$. Let
$V_i$ be a $d_i$-dimensional complex Hilbert space and
$V=V_1\ot\cdots\ot V_n$. Then $\LU(\bsd)$ acts on the vector space
$\End(V)=\ot^N_{i=1}\End(V_i)$, where $\End(V_i)$ is the set of all
endomorphisms from $V_i$ to itself, by
\begin{eqnarray}
\bsM\longmapsto g\bsM g^\dagger\quad(g=\bsU_1\ot\cdots\ot\bsU_N\in
\LU(\bsd),\bsM\in\End(V))
\end{eqnarray}
which is obtained by linear extension of the action:
$\ot^N_{i=1}\bsX_i\mapsto \ot^N_{i=1}\bsU_i\bsX_i\bsU^\dagger_i$,
where $\bsX_i\in\End(V_i)$ and $\bsU_i\in\U(d_i)$.

Consider the representation of $\LU(\bsd)$ on $\End(V^{\ot n})$,
defined by
\begin{eqnarray}\label{eq:qdk}
\bQ_{\bsd,n}(\bsU_1,\ldots,\bsU_N)\defeq
\bQ_{d_1,n}(\bsU_1)\ot\cdots\ot \bQ_{d_N,n}(\bsU_N),
\end{eqnarray}
where $\bQ_{d_i,n}(\bsU_i)=\bsU^{\ot n}_i$ for $\bsU_i\in \U(d_i)$.
Denote the $N$-fold Cartesian product $S^N_n:=S_n\times\cdots\times
S_n$ of the symmetric group $S_n$ of order $n$. The action of
$S^N_n$ on $\End(V^{\ot n})$ is defined by
\begin{eqnarray}\label{eq:pdk}
\bP_{\bsd,n}(\pi_1,\ldots,\pi_N)\defeq
\bP_{d_1,n}(\pi_1)\ot\cdots\ot\bP_{d_N,n}(\pi_N),
\end{eqnarray}
where $\bP_{d_i,n}(\pi_i)\in \End(V^{\ot n}_i)$ for $\pi_i\in  S_n$
with its definition taken from Eq.~\eqref{eq:P}.
\begin{thrm}[The generalized Schur-Weyl duality,
\cite{Grassl1998,Turner2017}]\label{th:GSW} Let
\begin{eqnarray}
\widetilde\cA&:=&\spn_{\complex}\Set{\bP_{\bsd,n}(\boldsymbol{\pi}):
\boldsymbol{\pi}\in S^N_n},\\
\widetilde\cB&:=&\spn_{\complex}\Set{\bQ_{\bsd,n}(g):
g\in\LU(\bsd)}.
\end{eqnarray}
Then it holds that
\begin{eqnarray}
\widetilde\cA'=\widetilde\cB\quad\text{and}\quad
\widetilde\cB'=\widetilde\cA.
\end{eqnarray}
\end{thrm}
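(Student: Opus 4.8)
The plan is to reduce Theorem~\ref{th:GSW} to the single-factor Schur--Weyl duality (Theorem~\ref{th-schur}) together with a commutation theorem for tensor products of finite-dimensional $*$-algebras. First I would fix the canonical reshuffling isomorphism $V^{\ot n}=(V_1\ot\cdots\ot V_N)^{\ot n}\cong V_1^{\ot n}\ot\cdots\ot V_N^{\ot n}$, under which $\End(V^{\ot n})\cong\End(V_1^{\ot n})\ot\cdots\ot\End(V_N^{\ot n})$; by the very definitions \eqref{eq:qdk} and \eqref{eq:pdk}, the operators $\bP_{\bsd,n}(\pi_1,\ldots,\pi_N)$ and $\bQ_{\bsd,n}(\bsU_1,\ldots,\bsU_N)$ are precisely the simple tensors of the single-factor operators under this identification.

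Next I would introduce, inside $\End(V_i^{\ot n})$, the algebras $\cA_i:=\spn_\complex\set{\bP_{d_i,n}(\pi_i):\pi_i\in S_n}$ and $\cB_i:=\spn_\complex\set{\bQ_{d_i,n}(\bsU_i):\bsU_i\in\U(d_i)}$. Since the span of the simple tensors of generators equals the tensor product of the spans, one gets $\widetilde\cA=\cA_1\ot\cdots\ot\cA_N$ and $\widetilde\cB=\cB_1\ot\cdots\ot\cB_N$. Theorem~\ref{th-schur} applied factorwise yields $\cA_i'=\cB_i$ and $\cB_i'=\cA_i$. Moreover each $\cA_i$ and $\cB_i$ is a unital $*$-subalgebra: $\cA_i$ is the linear span of the group representation $\pi_i\mapsto\bP_{d_i,n}(\pi_i)$, hence closed under products and adjoints with $\bP_{d_i,n}(\pi_i)^\dagger=\bP_{d_i,n}(\pi_i^{-1})$, and $\cB_i$ likewise, since $(\bsU_i^{\ot n})^\dagger=(\bsU_i^\dagger)^{\ot n}$ and $\I=\I^{\ot n}\in\cB_i$.

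The heart of the argument is the identity $(\cA_1\ot\cdots\ot\cA_N)'=\cA_1'\ot\cdots\ot\cA_N'$ in $\End(V_1^{\ot n}\ot\cdots\ot V_N^{\ot n})$. I would prove it by induction on $N$, so it suffices to treat $N=2$: for unital $*$-subalgebras $\cA\subseteq\End(W)$ and $\cC\subseteq\End(U)$, one has $(\cA\ot\cC)'=\cA'\ot\cC'$. The inclusion $\cA'\ot\cC'\subseteq(\cA\ot\cC)'$ is immediate, since $(x\ot y)(a\ot c)=(a\ot c)(x\ot y)$ whenever $x\in\cA'$ and $y\in\cC'$. For the reverse inclusion I would invoke the finite-dimensional structure theory of semisimple algebras: a unital $*$-subalgebra of a matrix algebra is semisimple, and by Artin--Wedderburn one may choose a basis along the isotypic decomposition $W=\bigoplus_\lambda\complex^{m_\lambda}\ot\complex^{k_\lambda}$ so that $\cA=\bigoplus_\lambda\mat_{m_\lambda}(\complex)\ot\I_{k_\lambda}$ and $\cA'=\bigoplus_\lambda\I_{m_\lambda}\ot\mat_{k_\lambda}(\complex)$, with no off-diagonal blocks because distinct $\lambda$ label inequivalent irreducibles. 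Tensoring the normal forms for $\cA$ and $\cC$, the simple summands of $\cA\ot\cC$ are indexed by pairs $(\lambda,\mu)$ and remain pairwise inequivalent, so computing the commutant of this direct sum of full matrix blocks returns exactly $\cA'\ot\cC'$. This is the step that carries the real content.

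Finally I would assemble the pieces: $\widetilde\cA'=(\cA_1\ot\cdots\ot\cA_N)'=\cA_1'\ot\cdots\ot\cA_N'=\cB_1\ot\cdots\ot\cB_N=\widetilde\cB$. Because $\widetilde\cA$ is a unital $*$-algebra, the double commutant theorem gives $\widetilde\cA''=\widetilde\cA$, so the second identity comes for free: $\widetilde\cB'=\widetilde\cA''=\widetilde\cA$. I expect the tensor-product commutation step to be the main obstacle, both because it is where semisimplicity is genuinely used and because one must verify carefully that the algebras are unital and self-adjoint so that the Artin--Wedderburn normal forms and the double commutant theorem apply; by comparison the reshuffling isomorphism and the span-of-products identity are routine bookkeeping.
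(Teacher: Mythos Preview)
Your proposal is correct and follows the standard route to the generalized Schur--Weyl duality: factor through the reshuffling isomorphism, apply the single-factor duality on each $V_i^{\ot n}$, and then use the tensor-product commutation identity $(\cA_1\ot\cdots\ot\cA_N)'=\cA_1'\ot\cdots\ot\cA_N'$ for unital $*$-subalgebras of matrix algebras, with the double commutant theorem giving the second identity for free. The argument is sound, and your identification of the commutant-of-tensor-products step as the only nontrivial ingredient is accurate.

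The paper, however, does not give its own proof of Theorem~\ref{th:GSW}: it merely states the result with citations to the literature and then invokes it in the proof of Proposition~\ref{th:LUequiv}. So there is nothing in the paper to compare your argument against beyond noting that you have supplied a complete proof where the paper simply quotes the theorem.
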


\subsection{Proof of Proposition~\ref{th:LUequiv}}

Let $V=\Herm(\complex^{d_1}\ot\cdots\ot \complex^{d_N})$, the
Hermitian matrices acting on the tensor space, and denote the local
unitary group by $\LU(\bsd)\equiv\U(d_1)\ot\cdots\ot\U(d_N)$. Define
$\LU(\bsd)$ acts on $V$ by conjugation, i.e., for any $g\in
\LU(\bsd)$ and $\bsX\in V$, we get the conjugate action of
$\LU(\bsd)$ on $V$ via $\tau_g\bsX= g\bsX g^{\dagger}$. In fact,
given two tuples of multipartite states on
$\complex^{d_1}\ot\cdots\ot\complex^{d_N}$,
$\Psi=(\rho_1,\ldots,\rho_K)$ and $\Psi'=(\rho'_1,\ldots,\rho'_K)$,
they are locally unitarily (LU) equivalent in the sense that
$\rho'_i =g\rho_i g^\dagger$ for all $i=1,\ldots,K$ and some
$g\in\LU(\bsd)$. That is, there exist a collection of unitary
operators $\bsU_j\in\U(d_j)(j=1,\ldots,N)$ such that
$g=\bsU_1\ot\cdots\ot\bsU_N$.
\begin{eqnarray}
\rho'_i =
(\bsU_1\ot\cdots\ot\bsU_N)\rho_i(\bsU_1\ot\cdots\ot\bsU_N)^\dagger
\end{eqnarray}
for each $i=1,\ldots,K$.
\begin{proof}[Proof of Proposition~\ref{th:LUequiv}]
Clearly $\rho\in V$, then $g=\bsU_1\ot\cdots\ot\bsU_N\in \LU(\bsd)$
acts on $\rho$ via $\tau_g\rho =g\rho g^\dagger$. The space of all
real polynomials on $V$ is denoted by $\real[V]$. We will denote by
$\real[V]_n$ the space of real homogenous polynomials on $V$ of
degree $n$. We have already known that each homogeneous polynomials
of degree $n$ are mappings of the form
$p(\bsX)=\Inner{\tilde\bsp}{\bsX^{\ot n}}$, where $\bsX\in V$ and
$\tilde\bsp\in V^{\ot n}$. Thus $\tilde \bsp$ defines an
$\LU(\bsd)$-invariant polynomial $p\in \real[V]^{\LU(\bsd)}_n$ if
and only if $\tau^{\ot n}_g\tilde \bsp=\tilde \bsp$ for all $g\in
\LU(\bsd)$. Now for both $\rho$ and $\sigma$ satisfying
$\sigma=g\rho g^\dagger$ for some $g\in \LU(\bsd)$ if and only if
$p(\rho)=p(\sigma)$ for $\forall p\in \real[V]^{\LU(\bsd)}$ by
Proposition~\ref{prop:vrana}. By virtue of the above this is
equivalent to demanding that for all $g\in \LU(\bsd)$
$$
\tilde \bsp=\tau^{\ot n}_g\tilde \bsp=\bQ_{\bsd,n}(\bar g)\tilde
\bsp\bQ_{\bsd,n}(\bar g),
$$
where $\bar g:=(\bsU_1,\ldots,\bsU_N)$. This amounts to requiring
$[\bQ_{\bsd,n}(\bar g),\tilde\bsp]=0$ implying that
$\tilde\bsp\in\widetilde\cB'=\widetilde\cA$ by Theorem~\ref{th:GSW}.
Thus $\tilde \bsp$ can be expanded into a linear combination of
$\bP_{\bsd,n}(\boldsymbol{\pi})$'s for $\boldsymbol{\pi}\in S^N_n$.
By Theorem~\ref{th:GSW}, it is easily seen that $\sigma=g\rho
g^\dagger$ if and only if $\Tr{\sigma^{\ot
n}\bP_{\bsd,n}(\boldsymbol{\pi})} = \Tr{\rho^{\ot
n}\bP_{\bsd,n}(\boldsymbol{\pi})}$, where $n=1,2,\ldots$ and
$\boldsymbol{\pi}\in S^N_n$. This completes the proof.
\end{proof}

\section{Proof of Theorem~\ref{th:Bargmann-generators}}\label{app:2}

In this section, we first establish an intriguing formula
(Lemma~\ref{lem:A1}) concerning operator products. Subsequently, we
reformulate the 18 Makhlin invariants $I_k$'s using 18 LU invariant
generators, denoted as $L_k$'s (Proposition~\ref{prop:LI}). With
these foundational steps completed, we can express all 18 Bargmann
generators $B_k$'s as polynomials in terms of the 18 LU invariant
generators $L_k$'s (see Lemma~\ref{lem:LB}). Building on this, we
derive expressions for the $L_k$'s in terms of the $B_k$'s. Through
the interrelationships between the $L_k$'s and $B_k$'s, we deduce
that the set of 18 Bargmann invariants $B_k$'s constitutes a
complete set that determines the local unitary equivalence of
two-qubit states.

\subsection{Product formula for two-qubit observables}\label{sub:1}

Let us fix some notations used in this section. Firstly, we recall
the notion of the \emph{cross product} in the real Euclidean space
$\real^3$. We will make the convention by assuming that the cross
product of two row(column) vectors will be a row(column) vector
according to the definition of the cross product. For instance, for
two column vectors $\bsx=(x_1,x_2,x_3)^\t$ and
$\bsy=(y_1,y_2,y_3)^\t$ in $\real^3$, where $^\t$ means the
transpose, their cross product $\bsx\times\bsy$ is identified with
$$
\bsx\times\bsy = \Pa{\Abs{\begin{array}{cc}
                            x_2 & x_3 \\
                            y_2 & y_3
                          \end{array}
}, -\Abs{\begin{array}{cc}
           x_1 & x_3 \\
           y_1 & y_3
         \end{array}
},\Abs{\begin{array}{cc}
           x_1 & x_2 \\
           y_1 & y_2
         \end{array}
}}^\t.
$$
Moreover the cross product $\bsx^\t\times\bsy^\t$ is identified with
$$
\bsx^\t\times\bsy^\t = \Pa{\Abs{\begin{array}{cc}
                            x_2 & x_3 \\
                            y_2 & y_3
                          \end{array}
}, -\Abs{\begin{array}{cc}
           x_1 & x_3 \\
           y_1 & y_3
         \end{array}
},\Abs{\begin{array}{cc}
           x_1 & x_2 \\
           y_1 & y_2
         \end{array}
}}.
$$
According to this convention, we find that
$(\bsx\times\bsy)^\t=\bsx^\t\times\bsy^\t$.

In what follows, we will use exchangeably the notation of column
(row) vector $\bsx(\bsx^\t)$ and the Dirac notation ket (bra)
$\ket{\bsx}(\bra{\bsx})$. The inner products between two \emph{real}
3-dimensional column vectors $\bsx$ and $\bsy$ and two \emph{real}
$3\times3$ matrices $\bsM$ and $\bsN$, are defined by, respectively,
\begin{eqnarray*}
\Inner{\bsx}{\bsy}
:=\bsx^\t\bsy\quad\text{and}\quad\Inner{\bsM}{\bsN}
:=\Tr{\bsM^\t\bsN},
\end{eqnarray*}
where $\trace$ stands for the usual matrix trace. We often write
$\Inner{\bsx}{\bsM\bsy}$ as $\Innerm{\bsx}{\bsM}{\bsy}$. Denote
$\abs{\bsx}:=\sqrt{\Inner{\bsx}{\bsx}}$ and
$\norm{\bsM}:=\sqrt{\Inner{\bsM}{\bsM}}$.

We also use the notion of the \emph{cofactors} \cite{Horn2013} of
entries in a matrix is defined as follows.
\begin{definition}
For any (real or complex) square matrix $\bsM=(m_{ij})_{n\times n}$,
the so-called \emph{cofactor} of entry $m_{ij}$ is defined as the
factor $(-1)^{i+j}$ times the determinant of the $(n-1)\times(n-1)$
matrix (denoted by $\bsM[\hat i|\hat j]$) obtained by deleting the
$i$-th row and $j$-th column of $\bsM$. That is, the cofactor of
$m_{ij}$ is
\begin{eqnarray}
\widehat m_{ij}\defeq (-1)^{i+j}\det\Pa{\bsM[\hat i|\hat j]}.
\end{eqnarray}
Denote by $\widehat \bsM:=(\widehat m_{ij})_{n\times n}$, which is
called the \emph{cofactor matrix}. Then
$\bsM^*\defeq{\widehat\bsM}^\t$ is called the \emph{adjugate matrix}
of $\bsM$.
\end{definition}
In Linear Algebra, for any two square matrices $\bsM$ and $\bsN$ of
order $n$, it is well-known that
\begin{eqnarray}
\widehat{\bsM^\t} = (\widehat\bsM)^\t\quad \text{and}\quad
\widehat{\bsM\bsN}= \widehat\bsM\widehat\bsN.
\end{eqnarray}
Let the characteristic polynomial of the $n\times n$ matrix $\bsM$
be $f_n(\lambda)$. Then
\begin{eqnarray}\label{eq:chpoly}
f_n(x)=\sum^n_{k=0}(-1)^k e_k(\bsM)x^{n-k},
\end{eqnarray}
where
\begin{eqnarray}
\begin{cases}
e_0(\bsM)&\equiv1\\
e_1(\bsM)&=\Tr{\bsM}\\
&\vdots\\
e_{n-1}(\bsM)&=\Tr{\widehat\bsM}\\
e_n(\bsM)&=\det(\bsM).
\end{cases}
\end{eqnarray}
We can use Hamilton-Cayley theorem in Linear Algebra, together with
the continuity argument, to give a formula towards the computation
of adjugate matrix, which can be described as follows:
\begin{prop}\label{prop:adjugate}
For any $n\times n$ matrix $\bsM$, its adjugate matrix can be
determined by
\begin{eqnarray}
\bsM^* =\sum^{n-1}_{k=0} e_k(\bsM)(-\bsM)^{n-1-k}.
\end{eqnarray}
\end{prop}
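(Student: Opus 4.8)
The plan is to derive the identity from the Cayley--Hamilton theorem on the open dense set of invertible matrices and then extend it to all matrices by a density argument, precisely as the statement anticipates.

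First I would apply Cayley--Hamilton to Eq.~\eqref{eq:chpoly}: the matrix $\bsM$ annihilates its own characteristic polynomial, so
\[
f_n(\bsM)=\sum_{k=0}^{n}(-1)^k e_k(\bsM)\bsM^{n-k}=\zero ,
\]
with the convention $\bsM^0=\I$. I would then split off the $k=n$ term, which equals $(-1)^n\det(\bsM)\I$, and factor a single copy of $\bsM$ out of each of the remaining terms (legitimate, since they are all powers of $\bsM$), obtaining
\[
\bsM\Br{\sum_{k=0}^{n-1}(-1)^k e_k(\bsM)\bsM^{n-1-k}}=(-1)^{n-1}\det(\bsM)\I .
\]
Multiplying through by $(-1)^{n-1}$ and using $(-1)^{n-1}(-1)^k=(-1)^{n-1-k}$ together with $(-1)^{n-1-k}\bsM^{n-1-k}=(-\bsM)^{n-1-k}$ converts this into
\[
\bsM\Br{\sum_{k=0}^{n-1}e_k(\bsM)(-\bsM)^{n-1-k}}=\det(\bsM)\I .
\]
Writing $\bsN$ for the bracketed matrix, the same rearrangement applied on the other side gives $\bsN\bsM=\det(\bsM)\I$, since every term commutes with $\bsM$.

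Next, on the set $\GL$ of invertible matrices I would combine this with the fundamental adjugate relation $\bsM\bsM^*=\bsM^*\bsM=\det(\bsM)\I$: dividing by $\det(\bsM)\neq0$ shows $\bsM^*=\det(\bsM)\bsM^{-1}=\bsN$, so the claimed formula holds for every invertible $\bsM$. To reach arbitrary (possibly singular) $\bsM$, I would invoke continuity. Each entry of $\bsM^*$ is, up to sign, an $(n-1)\times(n-1)$ minor of $\bsM$ and hence a polynomial in the entries of $\bsM$; likewise each entry of $\bsN$ is a polynomial in those entries, since the $e_k(\bsM)$ and the powers of $\bsM$ are. Two polynomial (hence continuous) maps that agree on the dense subset $\GL$ must agree everywhere, which establishes the identity for all $\bsM$.

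The computational heart is routine, so the only points requiring genuine care are the sign bookkeeping when matching the rearranged Cayley--Hamilton identity to the target $\sum_{k}e_k(\bsM)(-\bsM)^{n-1-k}$, and the density step---namely the observations that both sides depend polynomially on the entries of $\bsM$ and that the invertible matrices are dense in the space of all matrices. I expect the sign tracking to be the main place one could slip, but neither step presents a real obstacle.
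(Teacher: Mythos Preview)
Your proposal is correct and follows essentially the same approach as the paper: apply Cayley--Hamilton, factor out one copy of $\bsM$, compare with the adjugate relation $\bsM\bsM^*=\det(\bsM)\I$ on invertible matrices, and then extend to all matrices by continuity. If anything, your treatment is more careful about the sign bookkeeping and the density argument than the paper's own proof.
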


\begin{proof}
Indeed, This indicates by Hamilton-Cayley Theorem that
\begin{eqnarray*}
\bsM^n-e_1(\bsM)\bsM^{n-1}+\cdots+(-1)^{n-1}e_{n-1}(\bsM)\bsM+(-1)^n\det(\bsM)\I_n=0.
\end{eqnarray*}
Thus
\begin{eqnarray*}
\Pa{\bsM^{n-1}-e_1(\bsM)\bsM^{n-2}+(-1)^{n-1}e_{n-1}(\bsM)\I_n}\bsM=(-1)^{n-1}\det(\bsM)\I_n=(-1)^{n-1}\bsM^*\bsM.
\end{eqnarray*}
Then
\begin{eqnarray*}
\bsM^*&=&(-\bsM)^{n-1}+e_1(\bsM)(-\bsM)^{n-2}+\cdots+
e_k(\bsM)(-\bsM)^{n-1-k}+e_{n-1}(\bsM)\I_n\\
&=&\sum^{n-1}_{k=0} e_k(\bsM)(-\bsM)^{n-1-k}
\end{eqnarray*}
holds true if $\bsM$ is invertible. By the continuity argument, this
holds true for all square matrix $\bsM$.
\end{proof}

\begin{cor}\label{lem:adjugate}
For any square matrix $\bsM\in\real^{3\times3}$, it holds that
\begin{enumerate}[(i)]
\item $\Tr{\widehat\bsM} = \frac{\Tr{\bsM}^2-\Tr{\bsM^2}}2$;
\item $\widehat\bsM^\t\widehat\bsM = (\bsM^\t\bsM)^2 -
\Inner{\bsM}{\bsM}\bsM^\t\bsM+\Inner{\widehat\bsM}{\widehat\bsM}\I_3$;
\item
$\Inner{\widehat\bsM}{\widehat\bsM}=\frac12\Pa{\Inner{\bsM}{\bsM}^2-\Inner{\bsM^\t\bsM}{\bsM^\t\bsM}}$;
\item
$\widehat{\widehat\bsM}=\bsM^4-c_2(\bsM)\bsM^2+c_1(\bsM)\bsM+c_0(\bsM)\I_3$,
where three coefficients $c_k(\bsM)(k=0,1,2)$ are identified with
\begin{eqnarray}
\begin{cases}
c_0(\bsM) &= \frac{-\Tr{\bsM}^4+2\Tr{\bsM}^2\Tr{\bsM^2}+\Tr{\bsM^2}^2-2\Tr{\bsM^4}}8,\\
c_1(\bsM) &= \frac{\Tr{\bsM}\Pa{\Tr{\bsM}^2-\Tr{\bsM^2}}}2,\\
c_2(\bsM) &= \frac{\Tr{\bsM}^2+\Tr{\bsM^2}}2.
\end{cases}
\end{eqnarray}
\end{enumerate}
\end{cor}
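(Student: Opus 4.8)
The plan is to derive all four identities from Proposition~\ref{prop:adjugate} specialized to $n=3$, together with the multiplicativity of cofactors ($\widehat{\bsM\bsN}=\widehat\bsM\widehat\bsN$ and $\widehat{\bsM^\t}=(\widehat\bsM)^\t$) and Newton's identities relating the elementary symmetric polynomials $e_k(\bsM)$ to traces of powers of $\bsM$. Identity (i) is immediate: specializing $e_{n-1}(\bsM)=\Tr{\widehat\bsM}$ to $n=3$ gives $\Tr{\widehat\bsM}=e_2(\bsM)$, and the Newton identity $e_2(\bsM)=\tfrac12\big(\Tr{\bsM}^2-\Tr{\bsM^2}\big)$ finishes it. For (iii) I would apply (i) to the symmetric matrix $\bsS:=\bsM^\t\bsM$. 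Using multiplicativity, $\widehat\bsS=\widehat{\bsM^\t}\,\widehat\bsM=(\widehat\bsM)^\t\widehat\bsM$, so that $\Tr{\widehat\bsS}=\Inner{\widehat\bsM}{\widehat\bsM}$; likewise $\Tr{\bsS}=\Inner{\bsM}{\bsM}$ and, since $\bsS$ is symmetric, $\Tr{\bsS^2}=\Inner{\bsM^\t\bsM}{\bsM^\t\bsM}$. Substituting these into (i) applied to $\bsS$ yields (iii) directly.

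For (ii) I would again set $\bsS=\bsM^\t\bsM$ and apply Proposition~\ref{prop:adjugate} at $n=3$, which reads $\bsS^*=\bsS^2-\Tr{\bsS}\bsS+e_2(\bsS)\I_3$. Because $\bsS$ is symmetric, its cofactor matrix coincides with its adjugate, $\widehat\bsS=\bsS^*$, and by the computation above $\widehat\bsS=(\widehat\bsM)^\t\widehat\bsM=\widehat\bsM^\t\widehat\bsM$. Identifying $\Tr{\bsS}=\Inner{\bsM}{\bsM}$ and $e_2(\bsS)=\Tr{\widehat\bsS}=\Inner{\widehat\bsM}{\widehat\bsM}$ (the last step being exactly (iii)) turns this into the claimed identity $\widehat\bsM^\t\widehat\bsM=(\bsM^\t\bsM)^2-\Inner{\bsM}{\bsM}\bsM^\t\bsM+\Inner{\widehat\bsM}{\widehat\bsM}\I_3$.

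For (iv) the strategy is to compute the iterated cofactor $\widehat{\widehat\bsM}$ by applying Proposition~\ref{prop:adjugate} twice. Writing $\bsN:=\widehat\bsM$ and using $\bsN^\t=\bsM^*=\bsM^2-\Tr{\bsM}\bsM+e_2(\bsM)\I_3$ together with $\Tr{\bsN}=e_2(\bsM)$ and $e_2(\bsN)=e_1(\bsM)e_3(\bsM)$ (the latter read off from the eigenvalues $\lambda_j\lambda_k$ of the adjugate, whence $e_2(\widehat\bsM)=e_1(\bsM)e_3(\bsM)$), one expands $\widehat{\widehat\bsM}=(\bsN^\t)^2-\Tr{\bsN}\bsN^\t+e_2(\bsN)\I_3$ into a degree-four polynomial in $\bsM$ and reduces its $\bsM^3$ term via the Cayley--Hamilton relation $\bsM^3=e_1(\bsM)\bsM^2-e_2(\bsM)\bsM+e_3(\bsM)\I_3$. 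This produces the stated form with $c_2=e_1^2-e_2$, $c_1=e_1e_2$ and $c_0=-e_1e_3$ (equivalently the clean statement $\widehat{\widehat\bsM}=\det(\bsM)\bsM$). It remains to rewrite these coefficients purely in traces of powers of $\bsM$ through Newton's identities. The one genuinely fiddly point—and the step I expect to be the main obstacle—is matching the displayed form of $c_0$, which contains $\Tr{\bsM^4}$: this requires the Newton relation special to $3\times3$ matrices, $\Tr{\bsM^4}=e_1\Tr{\bsM^3}-e_2\Tr{\bsM^2}+e_3\Tr{\bsM}$, to trade $\Tr{\bsM^3}$ for $\Tr{\bsM^4}$, so that $-e_1e_3$ assumes exactly the claimed expression; everything else is routine bookkeeping.
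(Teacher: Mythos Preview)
Your proposal is correct and follows essentially the same approach as the paper: all four identities are derived from Proposition~\ref{prop:adjugate} specialized to $n=3$ together with the multiplicativity $\widehat{\bsM\bsN}=\widehat\bsM\,\widehat\bsN$, with (ii) and (iii) obtained by substituting $\bsS=\bsM^\t\bsM$ and reading off traces. For (iv) the paper differs only in execution: it uses $(\widehat\bsM)^2=\widehat{\bsM^2}$ and applies the adjugate formula to $\bsM^2$, which lands directly on a degree-four polynomial in $\bsM$ without an intermediate $\bsM^3$ term, whereas you expand $(\bsM^*)^2$ and then eliminate $\bsM^3$ via Cayley--Hamilton; both routes are equally valid and lead to the same coefficients.
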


\begin{proof}
The proof is conceptually simple. We can also use
\textsc{Mathematica} to do this. In what follows, we give analytical
reasoning. By Proposition~\ref{prop:adjugate}, we see that
\begin{eqnarray}\label{eq:3rdadj}
\bsM^*=\widehat\bsM^\t = \bsM^2-\Tr{\bsM}\bsM+\Tr{\widehat\bsM}\I_3.
\end{eqnarray}
(i) By taking the traces on both sides, we get that
\begin{eqnarray*}
\Tr{\widehat\bsM} = \frac{\Tr{\bsM}^2-\Tr{\bsM^2}}2.
\end{eqnarray*}
(ii) Now we use $\bsM^\t\bsM$ to replace $\bsM$ in
Eq.~\eqref{eq:3rdadj}, then
\begin{eqnarray*}
\widehat\bsM^\t\widehat\bsM  &=& \widehat{\bsM^\t\bsM} =
(\bsM^\t\bsM)^2 - \Tr{\bsM^\t\bsM}\bsM^\t\bsM +
\Tr{\widehat{\bsM^\t\bsM}}\I_3\\
&=&(\bsM^\t\bsM)^2 - \Inner{\bsM}{\bsM}\bsM^\t\bsM
+\Inner{\widehat\bsM}{\widehat\bsM}\I_3.
\end{eqnarray*}
(iii) By taking the traces on both sides of the identity in (ii),
after simplifying it, we get the desired result.\\
(iv) Apparently,
\begin{eqnarray*}
\widehat{\widehat\bsM} &=& \Pa{\widehat\bsM^2 -
\Tr{\widehat\bsM}\widehat\bsM +
\Tr{\widehat{\widehat\bsM}}\I_3}^\t\\
&=&\Pa{\widehat\bsM^2}^\t - \Tr{\widehat\bsM}\widehat\bsM^\t +
\Tr{\widehat{\widehat\bsM}}\I_3,
\end{eqnarray*}
where
\begin{eqnarray*}
\Pa{\widehat\bsM^2}^\t &=& \widehat{\bsM^2}^\t =
\bsM^4-\Tr{\bsM^2}\bsM^2+\Tr{\widehat{\bsM^2}}\I_3.
\end{eqnarray*}
Thus substituting this into the expression of
$\widehat{\widehat\bsM}$, we get that
\begin{eqnarray*}
\widehat{\widehat\bsM} &=&
\Pa{\bsM^4-\Tr{\bsM^2}\bsM^2+\Tr{\widehat{\bsM^2}}\I_3} \\
&&- \Tr{\widehat\bsM}\Pa{\bsM^2-\Tr{\bsM}\bsM+\Tr{\widehat\bsM}\I_3}
+ \Tr{\widehat{\widehat\bsM}}\I_3\\
&=&\bsM^4 - \Br{\Tr{\bsM^2}+\Tr{\widehat\bsM}}\bsM^2 +
\Tr{\bsM}\Tr{\widehat\bsM}\bsM\\
&&+\Br{\Tr{\widehat{\bsM^2}}-\Tr{\widehat\bsM}^2+\Tr{\widehat{\widehat\bsM}}}\I_3.
\end{eqnarray*}
Using many times the result obtained in (i), finally we obtain the
desired identity.
\end{proof}

\subsubsection{Product formula}

As conventions, three Pauli matrices are given below:
\begin{eqnarray}
\sigma_1=\Pa{\begin{array}{cc}
0 & 1 \\
1 & 0
\end{array}},\quad \sigma_2=\Pa{\begin{array}{cc}
0 & -\mathrm{i} \\
\mathrm{i} & 0
\end{array}},\quad \sigma_3=\Pa{\begin{array}{cc}
1 & 0 \\
0 & -1
\end{array}}.
\end{eqnarray}
For any two-qubit \emph{observable} $\bsX$, we can decompose it as
\begin{eqnarray}
\bsX = t\I_4 + \bsr\cdot\bsigma\ot\I_2 + \I_2\ot\bss\cdot\bsigma +
\sum^3_{i,j=1}t_{ij}\sigma_i\ot\sigma_j,
\end{eqnarray}
where
$t\in\real,\bsr:=(r_1,r_2,r_3)^\t,\bss:=(s_1,s_2,s_3)^\t\in\real^3$,
and $\bsT:=(t_{ij})_{3\times3}\in\real^{3\times3}$. Here
$\bsr\cdot\bsigma:=\sum^3_{i=1}r_i\sigma_i$. By mimicking this
notation, we introduce the following notation:
$\bsF_k=(\varepsilon_{ijk})_{3\times 3}$, where
$\varepsilon_{ijk}:=\sign[(j-i)(k-i)(k-j)]$ for
$i,j,k\in[3]:=\set{1,2,3}$. Indeed,
\begin{eqnarray}
\bsF_1=\Pa{\begin{array}{ccc}
0 & 0 & 0\\
0 & 0 & 1\\
0 & -1 & 0
\end{array}},\quad \bsF_2=\Pa{\begin{array}{ccc}
0 & 0 & -1\\
0 & 0 & 0\\
1 & 0 & 0
\end{array}},\quad \bsF_3=\Pa{\begin{array}{ccc}
0 & 1 & 0\\
-1 & 0 & 0\\
0 & 0 & 0
\end{array}}.
\end{eqnarray}
Denote $\bsx\cdot\cF:=\sum^3_{k=1}x_k\bsF_k$, where
$\cF:=(\bsF_1,\bsF_2,\bsF_3)$. It is easily seen that the cross
product can be realized as
\begin{eqnarray}
\bsx\times \bsy =
\Pa{\Innerm{\bsx}{\bsF_1}{\bsy},\Innerm{\bsx}{\bsF_2}{\bsy},\Innerm{\bsx}{\bsF_3}{\bsy}}^\t.
\end{eqnarray}
For convenience, we parameterize $\bsX$ in the notation
$(t,\bsr,\bss,\bsT)$ for $\bsX$, denoted by
$\bsX\approx(t,\bsr,\bss,\bsT)$, and $(t',\bsr',\bss',\bsT')$ for
$\bsX'$, denoted by $\bsX'\approx(t',\bsr',\bss',\bsT')$,
respectively. Consider the product $\tilde \bsX:=\bsX\bsX'$ with
parameters $(\tilde t,\tilde\bsr,\tilde\bss,\tilde\bsT)$.

In order to describe our product formula for $\tilde\bsX$, we
introduce the following notations: Denote
\begin{eqnarray}\label{eq:Omegasymbol}
\Omega(\bsM,\bsN):=\Pa{\begin{array}{c}
      \bse^\t_2\bsM\times\bse^\t_3\bsN+\bse^\t_2\bsN\times\bse^\t_3\bsM  \\
      \bse^\t_3\bsM\times\bse^\t_1\bsN+\bse^\t_3\bsN\times\bse^\t_1\bsM\\
      \bse^\t_1\bsM\times\bse^\t_2\bsN+\bse^\t_1\bsN\times\bse^\t_2\bsM
    \end{array}
},
\end{eqnarray}
where $\bsM,\bsN\in\real^{3\times 3}$ and
$\set{\bse_1,\bse_2,\bse_3}$ is the computational basis of
$\real^3$, defined by $\bse_1=(1,0,0)^\t,\bse_2=(0,1,0)^\t$, and
$\bse_3=(0,0,1)^\t$. Clearly $\Omega$ is symmetric bilinear mapping
in the sense that $\Omega(\bsM,\bsN)=\Omega(\bsN,\bsM)$. Let
\begin{eqnarray}
\Psi(\bsx,\bsM,\bsy):=(\bsx\cdot\cF)^\t\bsM+\bsM(\bsy\cdot\cF),
\end{eqnarray}
where $\bsx,\bsy\in\real^3$ and $\bsM\in\real^{3\times3}$.

\begin{prop}\label{prop:Omega}
For the matrix $\Omega(\bsM,\bsN)$, its entries can be identified as
\begin{eqnarray}
\Omega(\bsM,\bsN)_{p,q} =-
\Inner{\bsF_p\bsM\bsF_q}{\bsN}\quad(\forall p,q\in\set{1,2,3}).
\end{eqnarray}
Moreover, it holds that
\begin{eqnarray}
\Omega(\bsM,\bsN) =
\frac12\sum^3_{i,j=1}\out{\bse_i\times\bse_j}{\bse^\t_i\bsM\times\bse^\t_j\bsN+\bse^\t_i\bsN\times\bse^\t_j\bsM}.
\end{eqnarray}
\end{prop}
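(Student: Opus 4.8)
The plan is to reduce both identities to bookkeeping with the Levi-Civita symbol $\varepsilon_{ijk}$ and then match coefficients entry by entry. First I would record three elementary facts. From the definition $\bsF_k=(\varepsilon_{ijk})_{3\times3}$ we have $(\bsF_k)_{ij}=\varepsilon_{ijk}$; from the stated convention $(\bsx\times\bsy)^\t=\bsx^\t\times\bsy^\t$ together with $(\bsx\times\bsy)_q=\Innerm{\bsx}{\bsF_q}{\bsy}$ we get the component form of a cross product of rows, $(\bsu^\t\times\bsv^\t)_q=\sum_{i,j}\varepsilon_{ijq}u_i v_j$; and, since $\varepsilon_{ijp}$ is supported on $(i,j)\in\{(b,c),(c,b)\}$ with values $\pm1$ whenever $(p,b,c)$ is a cyclic permutation of $(1,2,3)$, the single \emph{collapsing identity}
$$\sum_{i,j}\varepsilon_{ijp}\,g(i,j)=g(b,c)-g(c,b).$$
Equivalently this is the statement $\bsF_p=\out{\bse_b}{\bse_c}-\out{\bse_c}{\bse_b}$ for $(p,b,c)$ cyclic, which I would use to keep the trace manipulations transparent.

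For the first identity I would read off, with $(p,b,c)$ cyclic, the $p$-th row of $\Omega(\bsM,\bsN)$, namely $\bse_b^\t\bsM\times\bse_c^\t\bsN+\bse_b^\t\bsN\times\bse_c^\t\bsM$, and take its $q$-th entry with the cross-product-in-components formula to obtain $\Omega(\bsM,\bsN)_{p,q}=\sum_{i,j}\varepsilon_{ijq}(M_{bi}N_{cj}+N_{bi}M_{cj})$. On the other side I would expand $-\Inner{\bsF_p\bsM\bsF_q}{\bsN}=-\Tr{(\bsF_p\bsM\bsF_q)^\t\bsN}$ fully in components as $-\sum_{\alpha,\beta,\gamma,\delta}\varepsilon_{\alpha\gamma p}\varepsilon_{\delta\beta q}M_{\gamma\delta}N_{\alpha\beta}$, collapse the factor $\varepsilon_{\alpha\gamma p}$ using the identity above, and relabel the two remaining summation indices (using $\varepsilon_{jiq}=-\varepsilon_{ijq}$ to fix the signs). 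This should reproduce exactly $\sum_{i,j}\varepsilon_{ijq}(M_{bi}N_{cj}+N_{bi}M_{cj})$, giving the first claim.

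For the second identity I would compute the $(p,q)$ entry of the right-hand side directly: since $(\bse_i\times\bse_j)_p=\varepsilon_{ijp}$, it equals $\frac{1}{2}\sum_{i,j}\varepsilon_{ijp}f(i,j)$ with $f(i,j):=(\bse_i^\t\bsM\times\bse_j^\t\bsN+\bse_i^\t\bsN\times\bse_j^\t\bsM)_q$. The key observation is that $f$ is antisymmetric, $f(j,i)=-f(i,j)$: relabeling the cross-product indices and invoking $\varepsilon_{lkq}=-\varepsilon_{klq}$, together with the obvious symmetry of the product $M_{bk}N_{cl}+N_{bk}M_{cl}$, flips the overall sign. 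Feeding this antisymmetry into the collapsing identity gives $\sum_{i,j}\varepsilon_{ijp}f(i,j)=f(b,c)-f(c,b)=2f(b,c)$, so the $\frac{1}{2}$ cancels and the entry becomes $f(b,c)$, which is precisely the $\Omega(\bsM,\bsN)_{p,q}$ computed in the previous step.

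I expect the main obstacle to be organizational rather than conceptual: keeping the row-versus-column convention for the cross product straight (each entry of $\Omega$ is itself a row vector obtained by crossing rows of $\bsM$ and $\bsN$), and tracking signs carefully so that the antisymmetrization producing the factor $2$ — which is exactly what justifies the $\frac{1}{2}$ in the second formula — comes out correctly. Once the collapsing identity and the antisymmetry of $f$ are in place, both claims follow immediately. As a sanity check I would test the formulas on $\bsM=\bsN=\I_3$ and exploit the symmetry $\Omega(\bsM,\bsN)=\Omega(\bsN,\bsM)$ noted just before the statement.
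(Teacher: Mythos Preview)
Your proposal is correct. Both identities reduce, as you say, to Levi--Civita bookkeeping, and your collapsing identity $\sum_{i,j}\varepsilon_{ijp}\,g(i,j)=g(b,c)-g(c,b)$ together with the antisymmetry of $f$ handles everything cleanly.

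The paper's argument differs in flavor for the first identity. Rather than expanding $-\Inner{\bsF_p\bsM\bsF_q}{\bsN}$ directly in components, it first rewrites the $q$-th entry of the $p$-th row of $\Omega(\bsM,\bsN)$ as $\Innerm{\bse_b}{\bsM\bsF_q\bsN^\t+\bsN\bsF_q\bsM^\t}{\bse_c}$, then observes that $\bsM\bsF_q\bsN^\t+\bsN\bsF_q\bsM^\t$ is skew-symmetric and hence expands as $\sum_k c_k^{(q)}\bsF_k$; the coefficients $c_k^{(q)}$ are read off from the orthogonality $\Tr{\bsF_i\bsF_j}=-2\delta_{ij}$, yielding $c_k^{(q)}=-\Inner{\bsF_k\bsM\bsF_q}{\bsN}$. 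This structural step (projecting onto the $\bsF_k$ basis of skew-symmetric matrices) replaces your explicit collapse of $\varepsilon_{\alpha\gamma p}$ and is arguably more transparent, since it foreshadows the later uses of the skew-symmetric decomposition in Lemma~\ref{lem:A3}. For the second identity the paper does essentially what you do: discard the $i=j$ terms, pair up $i\neq j$ by the symmetry you call antisymmetry of $f$, and read off the three rows directly. Your route is a perfectly good alternative and has the virtue of being entirely mechanical once the collapsing identity is in hand.
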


\begin{proof}
For the first row of $\Omega(\bsM,\bsN)$, we find that
\begin{eqnarray*}
&&\bse^\t_2\bsM\times\bse^\t_3\bsN+\bse^\t_2\bsN\times\bse^\t_3\bsM\\
&&=\Pa{\Innerm{\bse_2}{\bsM\bsF_1\bsN^\t+\bsN\bsF_1\bsM^\t}{\bse_3},\Innerm{\bse_2}{\bsM\bsF_2\bsN^\t+\bsN\bsF_2\bsM^\t}{\bse_3},\Innerm{\bse_2}{\bsM\bsF_3\bsN^\t+\bsN\bsF_3\bsM^\t}{\bse_3}}.
\end{eqnarray*}
Next, we determine such three components as follows. In fact,
$\bsM\bsF_j\bsN^\t+\bsN\bsF_j\bsM^\t$ is skew symmetric, and thus it
can be decomposed as
\begin{eqnarray*}
\bsM\bsF_j\bsN^\t+\bsN\bsF_j\bsM^\t =
c^{(j)}_1\bsF_1+c^{(j)}_2\bsF_2+c^{(j)}_3\bsF_3.
\end{eqnarray*}
This implies that
\begin{eqnarray*}
\Tr{\bsF_i(\bsM\bsF_j\bsN^\t+\bsN\bsF_j\bsM^\t)} =
c^{(j)}_1\Tr{\bsF_i\bsF_1}+c^{(j)}_2\Tr{\bsF_i\bsF_2}+c^{(j)}_3\Tr{\bsF_i\bsF_3}.
\end{eqnarray*}
That is,
\begin{eqnarray*}
\Tr{\bsF_i\bsN\bsF_j\bsM^\t} =
-c^{(j)}_1\delta_{1i}-c^{(j)}_2\delta_{2i}-c^{(j)}_3\delta_{3i}\Longrightarrow
c^{(j)}_i=-\Tr{(\bsF_i\bsM\bsF_j)^\t\bsN}=-\Inner{\bsF_i\bsM\bsF_j}{\bsN}.
\end{eqnarray*}
From this observation, we get that
$\Innerm{\bse_2}{\bsM\bsF_j\bsN^\t+\bsN\bsF_j\bsM^\t}{\bse_3}=-\Tr{\bsF_1\bsM\bsF_j\bsN^\t}$,
which implies that
\begin{eqnarray*}
\bse^\t_2\bsM\times\bse^\t_3\bsN+\bse^\t_2\bsN\times\bse^\t_3\bsM
=-\Pa{\Inner{\bsF_1\bsM\bsF_1}{\bsN},\Inner{\bsF_1\bsM\bsF_2}{\bsN},\Inner{\bsF_1\bsM\bsF_3}{\bsN}}
\end{eqnarray*}
Similar procedures for second and third rows are performed,
respectively, and thus we get the desired result:
$\Omega(\bsM,\bsN)_{p,q}=-\Inner{\bsF_p\bsM\bsF_q}{\bsN}$. The
second item can be checked as follows: Clearly $i=j$,
$\out{\bse_i\times\bse_j}{\bse^\t_i\bsM\times\bse^\t_j\bsN+\bse^\t_i\bsN\times\bse^\t_j\bsM}=0$
due to the fact that $\bse_i\times\bse_j=0$ if $i=j$. Besides, for
$i\neq j$,
$$
\out{\bse_i\times\bse_j}{\bse^\t_i\bsM\times\bse^\t_j\bsN+\bse^\t_i\bsN\times\bse^\t_j\bsM}
=
\out{\bse_j\times\bse_i}{\bse^\t_j\bsM\times\bse^\t_i\bsN+\bse^\t_j\bsN\times\bse^\t_i\bsM}.
$$
It suffices to consider $(i,j)=(1,2),(1,3),(2,3)$. Note that
$\bse_1\times\bse_2=\bse_3,\bse_2\times\bse_3=\bse_1$, and
$\bse_3\times\bse_1=\bse_2$. Thus we get that
\begin{eqnarray*}
&&\frac12\sum^3_{i,j=1}\out{\bse_i\times\bse_j}{\bse^\t_i\bsM\times\bse^\t_j\bsN+\bse^\t_i\bsN\times\bse^\t_j\bsM}=\sum_{1\leqslant
i<j\leqslant
3}\out{\bse_i\times\bse_j}{\bse^\t_i\bsM\times\bse^\t_j\bsN+\bse^\t_i\bsN\times\bse^\t_j\bsM}\\
&&=\out{\bse_1\times\bse_2}{\bse^\t_1\bsM\times\bse^\t_2\bsN+\bse^\t_1\bsN\times\bse^\t_2\bsM}+\out{\bse_1\times\bse_3}{\bse^\t_1\bsM\times\bse^\t_3\bsN+\bse^\t_1\bsN\times\bse^\t_3\bsM}\\
&&~~~+\out{\bse_2\times\bse_3}{\bse^\t_2\bsM\times\bse^\t_3\bsN+\bse^\t_2\bsN\times\bse^\t_3\bsM}\\
&&=\out{\bse_3}{\bse^\t_1\bsM\times\bse^\t_2\bsN+\bse^\t_1\bsN\times\bse^\t_2\bsM}+\out{\bse_2}{\bse^\t_3\bsM\times\bse^\t_1\bsN+\bse^\t_3\bsN\times\bse^\t_1\bsM}\\
&&~~~+\out{\bse_1}{\bse^\t_2\bsM\times\bse^\t_3\bsN+\bse^\t_2\bsN\times\bse^\t_3\bsM},
\end{eqnarray*}
which implies the desired result when writing it in matrix form.
\end{proof}

We have the following formula for the product $\tilde\bsX=\bsX\bsX'$
of $\bsX$ and $\bsX'$.
\begin{lem}[Product formula of two-qubit observables]\label{lem:A1}
If $\bsX\approx(t,\bsr,\bss,\bsT)$ and
$\bsX'\approx(t',\bsr',\bss',\bsT')$, then $\tilde\bsX\approx(\tilde
t,\tilde\bsr,\tilde\bss,\tilde\bsT)$ is given by the following
formulae:
\begin{eqnarray}
\begin{cases}
\tilde t = tt' + \inner{\bsr}{\bsr'} + \inner{\bss}{\bss'} + \inner{\bsT}{\bsT'},\\
\tilde\bsr = t'\bsr+t\bsr'+\bsT'\bss+\bsT\bss'+\mathrm{i}\Pa{\bsr\times\bsr'+\sum^3_{i=1}\bsT\bse_i\times\bsT'\bse_i},\\
\tilde\bss = t'\bss+t\bss'+{\bsT'}^\t\bsr+\bsT^\t\bsr'+\mathrm{i}\Pa{\bss\times\bss'+\sum^3_{i=1}\bsT^\t\bse_i\times{\bsT'}^\t\bse_i},\\
\tilde\bsT = t'\bsT + t\bsT' + \out{\bsr}{\bss'} + \out{\bsr'}{\bss}
-\Omega(\bsT,\bsT')+\mathrm{i}\Pa{\Psi(\bsr,\bsT',\bss)-\Psi(\bsr',\bsT,\bss')}.
\end{cases}
\end{eqnarray}
Moreover, $\Tr{\bsX\bsX'}=4(tt' + \inner{\bsr}{\bsr'} +
\inner{\bss}{\bss'} + \inner{\bsT}{\bsT'})$.
\end{lem}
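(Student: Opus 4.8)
The plan is to work in the orthogonal Pauli basis $\set{\sigma_\mu\ot\sigma_\nu}_{\mu,\nu=0}^{3}$ with $\sigma_0:=\I_2$, in which the decomposition of $\bsX$ reads $\bsX=\sum_{\mu,\nu=0}^{3} X_{\mu\nu}\,\sigma_\mu\ot\sigma_\nu$ with $X_{00}=t$, $X_{i0}=r_i$, $X_{0j}=s_j$, and $X_{ij}=t_{ij}$ (and analogously for $\bsX'$). The whole computation rests on the single-qubit rule $\sigma_i\sigma_j=\delta_{ij}\I_2+\mathrm{i}\sum_k\varepsilon_{ijk}\sigma_k$ together with $\sigma_0\sigma_\mu=\sigma_\mu\sigma_0=\sigma_\mu$, which I encode as structure constants $c_{\mu\alpha}^{\gamma}$ via $\sigma_\mu\sigma_\alpha=\sum_\gamma c_{\mu\alpha}^{\gamma}\sigma_\gamma$. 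Since the product factorizes across the tensor slots, $(\sigma_\mu\ot\sigma_\nu)(\sigma_\alpha\ot\sigma_\beta)=(\sigma_\mu\sigma_\alpha)\ot(\sigma_\nu\sigma_\beta)$, the coefficient of $\sigma_\gamma\ot\sigma_\delta$ in $\tilde\bsX=\bsX\bsX'$ is the double convolution $\tilde X_{\gamma\delta}=\sum_{\mu\nu\alpha\beta}X_{\mu\nu}X'_{\alpha\beta}\,c_{\mu\alpha}^{\gamma} c_{\nu\beta}^{\delta}$, and the four displayed formulae are exactly the four cases $(\gamma,\delta)\in\set{(0,0),(i,0),(0,j),(i,j)}$.

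For the scalar and vector parts I would first record $c_{\mu\alpha}^{0}=\delta_{\mu\alpha}$, which collapses $\tilde X_{00}$ to $\sum_{\mu\nu}X_{\mu\nu}X'_{\mu\nu}=tt'+\inner{\bsr}{\bsr'}+\inner{\bss}{\bss'}+\inner{\bsT}{\bsT'}$, giving $\tilde t$. For $\tilde X_{i0}$ the second slot forces $\nu=\beta$, and $c_{\mu\alpha}^{i}$ is nonzero only for $(\mu,\alpha)\in\set{(0,i),(i,0)}$ (value $1$) and $(\mu,\alpha)=(p,q)$ with $p,q\in\set{1,2,3}$ (value $\mathrm{i}\varepsilon_{pqi}$). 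Splitting the remaining $\nu$-sum into $\nu=0$ and $\nu\in\set{1,2,3}$ turns the first two cases into $t\bsr'+\bsT'\bss$ and $t'\bsr+\bsT\bss'$, while the third yields $\mathrm{i}\bigl(\bsr\times\bsr'+\sum_i\bsT\bse_i\times\bsT'\bse_i\bigr)$ once $\sum_{pq}\varepsilon_{pqi}(\cdot)_p(\cdot)_q$ is recognized as the $i$-th cross-product component. This reproduces $\tilde\bsr$, and $\tilde\bss$ follows immediately by the factor-swap symmetry $\bsr\leftrightarrow\bss$, $\bsT\leftrightarrow\bsT^\t$.

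The matrix part $\tilde X_{ij}$ is the crux, since now \emph{both} structure constants are nontrivial and split into three cases each, producing nine index combinations. The five combinations carrying at most one Levi-Civita factor are elementary and assemble into $t\bsT'+t'\bsT+\out{\bsr}{\bss'}+\out{\bsr'}{\bss}$. The four combinations with exactly one $\varepsilon$ each carry a factor $\mathrm{i}$; using $(\bsx\cdot\cF)_{ij}=\sum_k x_k\varepsilon_{ijk}$, the identity $(\bsx\cdot\cF)^\t_{ij}=-\sum_k x_k\varepsilon_{ijk}$, and the cyclic and antisymmetry relations of $\varepsilon$, the two terms touching $\bsr,\bss$ reorganize into $\mathrm{i}\,\Psi(\bsr,\bsT',\bss)$ and the two touching $\bsr',\bss'$ into $-\mathrm{i}\,\Psi(\bsr',\bsT,\bss')$. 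The single combination with two $\varepsilon$'s produces $-\sum_{pquv}\varepsilon_{pqi}\varepsilon_{uvj}t_{pu}t'_{qv}$, and I would identify it with $-\Omega(\bsT,\bsT')_{ij}$ by invoking the entrywise formula $\Omega(\bsT,\bsT')_{ij}=-\Inner{\bsF_i\bsT\bsF_j}{\bsT'}$ from Proposition~\ref{prop:Omega} and expanding $\Inner{\bsF_i\bsT\bsF_j}{\bsT'}=\sum_{abcd}\varepsilon_{aci}\varepsilon_{dbj}t_{cd}t'_{ab}$, the signs matching after the single transposition $\varepsilon_{qpi}=-\varepsilon_{pqi}$. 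Collecting the five real and four imaginary contributions yields the stated $\tilde\bsT$.

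Finally, the trace identity is immediate from orthogonality $\Tr{(\sigma_\mu\ot\sigma_\nu)(\sigma_\alpha\ot\sigma_\beta)}=\Tr{\sigma_\mu\sigma_\alpha}\Tr{\sigma_\nu\sigma_\beta}=4\delta_{\mu\alpha}\delta_{\nu\beta}$, so $\Tr{\bsX\bsX'}=4\sum_{\mu\nu}X_{\mu\nu}X'_{\mu\nu}=4\tilde t$. The main obstacle is the bookkeeping inside $\tilde\bsT$: keeping the nine index combinations straight and, above all, correctly matching the double-$\varepsilon$ contraction to $\Omega$ via Proposition~\ref{prop:Omega} and the single-$\varepsilon$ contractions to $\Psi$, where the transpose convention in $(\bsx\cdot\cF)^\t$ and the cyclic identities for $\varepsilon$ are the easiest places to drop a sign.
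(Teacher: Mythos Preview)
Your proof is correct and follows essentially the same route as the paper: both expand $\bsX\bsX'$ in the Pauli basis, apply $\sigma_i\sigma_j=\delta_{ij}\I_2+\mathrm{i}\sum_k\varepsilon_{ijk}\sigma_k$ in each tensor slot, and then identify the double-$\varepsilon$ contraction with $-\Omega(\bsT,\bsT')$ via Proposition~\ref{prop:Omega}; your structure-constant bookkeeping is just a tidier packaging of the paper's term-by-term expansion. One minor slip: in the $\tilde\bsT$ discussion you write ``the five combinations carrying at most one Levi-Civita factor'', but there are exactly four zero-$\varepsilon$ combinations (yielding $t\bsT'+t'\bsT+\out{\bsr}{\bss'}+\out{\bsr'}{\bss}$), four single-$\varepsilon$ combinations (yielding the $\Psi$ terms), and one double-$\varepsilon$ combination (yielding $-\Omega$); the argument itself is unaffected.
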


\begin{proof}
The proof is conceptually, but needs tedious algebraic computations.
Indeed,
\begin{eqnarray*}
\tilde t &=& \frac14\Tr{\tilde\bsX}=\frac14\Tr{\bsX\bsX'},\\
\tilde r_i &=& \frac14\Tr{\bsX\bsX'(\sigma_i\ot\I_2)},\\
\tilde s_j &=& \frac14\Tr{\bsX\bsX'(\I_2\ot\sigma_j)},\\
\tilde t_{ij}&=& \frac14\Tr{\bsX\bsX'(\sigma_i\ot\sigma_j)}.
\end{eqnarray*}
The next step is to check the correctness of the desired formula.
This can be done by using the symbolic computation of the
mathematical software \textsc{Mathematica}. Assume that
$\bsX\approx(t,\bsr,\bss,\bsT)$ and
$\bsX'\approx(t',\bsr',\bss',\bsT')$. Then
\begin{eqnarray*}
\bsX\bsX' &=&\Pa{t\I_4 + \bsr\cdot\bsigma\ot\I_2 +
\I_2\ot\bss\cdot\bsigma +
\sum^3_{i,j=1}t_{ij}\sigma_i\ot\sigma_j}\\
&&\times\Pa{t'\I_4 + \bsr'\cdot\bsigma\ot\I_2 +
\I_2\ot\bss'\cdot\bsigma +
\sum^3_{i,j=1}t'_{ij}\sigma_i\ot\sigma_j}\\
&=&\Pa{tt'\I_4 + t\bsr'\cdot\bsigma\ot\I_2 + \I_2\ot
t\bss'\cdot\bsigma + \sum^3_{i,j=1}tt'_{ij}\sigma_i\ot\sigma_j}\\
&&+\Pa{t'\bsr\cdot\bsigma\ot\I_2 +
(\bsr\cdot\bsigma)(\bsr'\cdot\bsigma)\ot\I_2 +
\bsr\cdot\bsigma\ot\bss'\cdot\bsigma +
\sum^3_{i,j=1}t'_{ij}(\bsr\cdot\bsigma)\sigma_i\ot\sigma_j}\\
&&+\Pa{\I_2\ot t'\bss\cdot\bsigma +
\bsr'\cdot\bsigma\ot\bss\cdot\bsigma +
\I_2\ot(\bss\cdot\bsigma)(\bss'\cdot\bsigma) +
\sum^3_{i,j=1}t'_{ij}\sigma_i\ot(\bss\cdot\bsigma)\sigma_j}\\
&&+\Pa{t'\sum^3_{i,j=1}t_{ij}\sigma_i\ot\sigma_j +
\sum^3_{i,j=1}t_{ij}\sigma_i(\bsr'\cdot\bsigma)\ot\sigma_j +
\sum^3_{i,j=1}t_{ij}\sigma_i\ot\sigma_j(\bss'\cdot\bsigma)}\\
&&+
\Pa{\sum^3_{i,j=1}t_{ij}\sigma_i\ot\sigma_j}\Pa{\sum^3_{i,j=1}t'_{ij}\sigma_i\ot\sigma_j}.
\end{eqnarray*}
Furthermore
\begin{eqnarray*}
\bsX\bsX' &=&tt'\I_4 + (t\bsr'+t'\bsr)\cdot\bsigma\ot\I_2 + \I_2\ot
(t\bss'+t'\bss)\cdot\bsigma + \sum^3_{i,j=1}(tt'_{ij}+t't_{ij})\sigma_i\ot\sigma_j\\
&&+\Pa{(\bsr\cdot\bsigma)(\bsr'\cdot\bsigma)\ot\I_2 +
\bsr\cdot\bsigma\ot\bss'\cdot\bsigma +
\sum^3_{i,j=1}t'_{ij}(\bsr\cdot\bsigma)\sigma_i\ot\sigma_j+\sum^3_{i,j=1}t_{ij}\sigma_i(\bsr'\cdot\bsigma)\ot\sigma_j}\\
&&+\Pa{\bsr'\cdot\bsigma\ot\bss\cdot\bsigma +
\I_2\ot(\bss\cdot\bsigma)(\bss'\cdot\bsigma) +
\sum^3_{i,j=1}t'_{ij}\sigma_i\ot(\bss\cdot\bsigma)\sigma_j+\sum^3_{i,j=1}t_{ij}\sigma_i\ot\sigma_j(\bss'\cdot\bsigma)}\\
&&+
\Pa{\sum^3_{i,j=1}t_{ij}\sigma_i\ot\sigma_j}\Pa{\sum^3_{i,j=1}t'_{ij}\sigma_i\ot\sigma_j}.
\end{eqnarray*}
Note that
$(\bsr\cdot\bsigma)(\bsr'\cdot\bsigma)\ot\I_2=\Inner{\bsr}{\bsr'}\I_4+\mathrm{i}(\bsr\times\bsr')\cdot\bsigma\ot\I_2$
and
$\I_2\ot(\bss\cdot\bsigma)(\bss'\cdot\bsigma)=\Inner{\bss}{\bss'}\I_4+\I_2\ot\mathrm{i}(\bss\times\bss')\cdot\bsigma$.
Then we see that
\begin{eqnarray*}
\bsX\bsX' &=&(tt'+\Inner{\bsr}{\bsr'}+\Inner{\bss}{\bss'})\I_4 +
(t\bsr'+t'\bsr+\mathrm{i}\bsr\times\bsr')\cdot\bsigma\ot\I_2  \\
&&+ \I_2\ot
(t\bss'+t'\bss+\mathrm{i}\bss\times\bss')\cdot\bsigma+ \sum^3_{i,j=1}(t\bsT'+t'\bsT)_{ij}\sigma_i\ot\sigma_j\\
&&+\Pa{\bsr\cdot\bsigma\ot\bss'\cdot\bsigma +
\sum^3_{i,j=1}t'_{ij}(\bsr\cdot\bsigma)\sigma_i\ot\sigma_j+\sum^3_{i,j=1}t_{ij}\sigma_i(\bsr'\cdot\bsigma)\ot\sigma_j}\\
&&+\Pa{\bsr'\cdot\bsigma\ot\bss\cdot\bsigma +
\sum^3_{i,j=1}t'_{ij}\sigma_i\ot(\bss\cdot\bsigma)\sigma_j+\sum^3_{i,j=1}t_{ij}\sigma_i\ot\sigma_j(\bss'\cdot\bsigma)}\\
&&+
\Pa{\sum^3_{i,j=1}t_{ij}\sigma_i\ot\sigma_j}\Pa{\sum^3_{i,j=1}t'_{ij}\sigma_i\ot\sigma_j}.
\end{eqnarray*}
Now we use the fact that
$\sigma_i\sigma_j=\mathrm{i}\sum^3_{k=1}\varepsilon_{ijk}\sigma_k+\delta_{ij}\I_2$
and get that
\begin{eqnarray*}
\bsr\cdot\bsigma\ot\bss'\cdot\bsigma=\sum^3_{i,j=1}(\out{\bsr}{\bss'})_{ij}\sigma_i\ot\sigma_j,\\
\bsr'\cdot\bsigma\ot\bss\cdot\bsigma=\sum^3_{i,j=1}(\out{\bsr'}{\bss})_{ij}\sigma_i\ot\sigma_j.
\end{eqnarray*}
We also have
\begin{eqnarray*}
&&\sum^3_{i,j=1}t'_{ij}(\bsr\cdot\bsigma)\sigma_i\ot\sigma_j =
\sum^3_{k=1}\sum^3_{i,j=1}t'_{ij}r_k\sigma_k\sigma_i\ot\sigma_j=\sum^3_{i,j,k=1}t'_{ij}r_k\Pa{\mathrm{i}\sum^3_{j'=1}\varepsilon_{kij'}\sigma_{j'}+\delta_{ki}\I_2}\ot\sigma_j\\
&&=\mathrm{i}\sum^3_{j',j=1}\Pa{\sum^3_{k,i=1}t'_{ij}\varepsilon_{kij'}r_k}\sigma_{j'}\ot\sigma_j
+ \I_2\ot\sum^3_{i,j,k=1}\Pa{t'_{ij}\delta_{ki}r_k}\sigma_j\\
&&=\mathrm{i}\sum^3_{j',j=1}\Pa{(\bsr\cdot\cF)^\t\bsT'}_{j'j}\sigma_{j'}\ot\sigma_j
+ \I_2\ot \Pa{{\bsT'}^\t\bsr}\cdot\bsigma
=\mathrm{i}\sum^3_{i,j=1}\Pa{(\bsr\cdot\cF)^\t\bsT'}_{ij}\sigma_i\ot\sigma_j
+ \I_2\ot \Pa{{\bsT'}^\t\bsr}\cdot\bsigma
\end{eqnarray*}
and
\begin{eqnarray*}
&&\sum^3_{i,j=1}t_{ij}\sigma_i(\bsr'\cdot\bsigma)\ot\sigma_j =
\sum^3_{k=1}\sum^3_{i,j=1}t_{ij}r'_k\sigma_i\sigma_k\ot\sigma_j=\sum^3_{i,j,k=1}t_{ij}r'_k\Pa{\mathrm{i}\sum^3_{j'=1}\varepsilon_{ikj'}\sigma_{j'}+\delta_{ik}\I_2}\ot\sigma_j\\
&&=-\mathrm{i}\sum^3_{j',j=1}\Pa{\sum^3_{i,k=1}t_{ij}\varepsilon_{ij'k}r'_k}\sigma_{j'}\ot\sigma_j
+ \I_2\ot\sum^3_{i,j,k=1}\Pa{t_{ij}\delta_{ik}r'_k}\sigma_j\\
&&=-\mathrm{i}\sum^3_{j',j=1}\Pa{(\bsr'\cdot\cF)^\t\bsT}_{j'j}\sigma_{j'}\ot\sigma_j
+ \I_2\ot \Pa{\bsT^\t\bsr}\cdot\bsigma
=-\mathrm{i}\sum^3_{i,j=1}\Pa{(\bsr'\cdot\cF)^\t\bsT}_{ij}\sigma_i\ot\sigma_j
+ \I_2\ot \Pa{\bsT^\t\bsr'}\cdot\bsigma.
\end{eqnarray*}
Similarly, we get that
\begin{eqnarray*}
\sum^3_{i,j=1}t'_{ij}\sigma_i\ot(\bss\cdot\bsigma)\sigma_j =
\mathrm{i}\sum^3_{i,j=1}\Pa{\bsT'(\bss\cdot\cF)}_{ij}\sigma_i\ot\sigma_j
+ (\bsT'\bss)\cdot\bsigma\ot\I_2,\\
\sum^3_{i,j=1}t_{ij}\sigma_i\ot\sigma_j(\bss'\cdot\bsigma) =
-\mathrm{i}\sum^3_{i,j=1}\Pa{\bsT(\bss'\cdot\cF)}_{ij}\sigma_i\ot\sigma_j
+ (\bsT\bss')\cdot\bsigma\ot\I_2.
\end{eqnarray*}
At last,
\begin{eqnarray*}
&&\Pa{\sum^3_{i,j=1}t_{ij}\sigma_i\ot\sigma_j}\Pa{\sum^3_{i,j=1}t'_{ij}\sigma_i\ot\sigma_j}=\sum^3_{i,j,k,l=1}t_{ij}t'_{kl}\sigma_i\sigma_k\ot\sigma_j\sigma_l\\
&&=\sum^3_{i,j,k,l=1}t_{ij}t'_{kl}\Pa{\mathrm{i}\sum^3_{p=1}\varepsilon_{ikp}\sigma_p+\delta_{ik}\I_2}\ot\Pa{\mathrm{i}\sum^3_{q=1}\varepsilon_{jlq}\sigma_q+\delta_{jl}\I_2}\\
&&=-\sum^3_{i,j,k,l,p,q=1}t_{ij}t'_{kl}\varepsilon_{ikp}\varepsilon_{jlq}\sigma_p\ot\sigma_q+\mathrm{i}\sum^3_{i,j,k,l,p=1}t_{ij}t'_{kl}\varepsilon_{ikp}\delta_{jl}\sigma_p\ot\I_2+\mathrm{i}\sum^3_{i,j,k,l,q=1}t_{ij}t'_{kl}\varepsilon_{jlq}\delta_{ik}\I_2\ot\sigma_q\\
&&~~~~~~+\sum^3_{i,j,k,l=1}t_{ij}t'_{kl}\delta_{ik}\delta_{jl}\I_2\ot\I_2\\
&&=\sum^3_{p,q=1}\Pa{\sum^3_{i,j,k,l=1}t_{ij}t'_{kl}\varepsilon_{ikp}\varepsilon_{ljq}}\sigma_p\ot\sigma_q+
\mathrm{i}\sum^3_{p=1}\Pa{\sum^3_{i,j,k,l=1}t_{ij}t'_{kl}\varepsilon_{ikp}\delta_{jl}}\sigma_p\ot\I_2\\
&&~~~~~~+\mathrm{i}\I_2\ot\sum^3_{q=1}\Pa{\sum^3_{i,j,k,l=1}t_{ij}t'_{kl}\varepsilon_{jlq}\delta_{ik}}\sigma_q+\Inner{\bsT}{\bsT'}\I_4\\
&&=-\sum^3_{p,q=1}\Omega(\bsT,\bsT')_{p,q}\sigma_p\ot\sigma_q+\mathrm{i}\Pa{\sum^3_{i=1}\bsT\bse_i\times\bsT'\bse_i}\cdot\bsigma\ot\I_2+\mathrm{i}\I_2\ot\Pa{\sum^3_{i=1}\bsT^\t\bse_i\times{\bsT'}^\t\bse_i}\cdot\bsigma\\
&&~~~~~~+\Inner{\bsT}{\bsT'}\I_4,
\end{eqnarray*}
where we used the facts that
\begin{enumerate}[(1)]
\item $\Omega(\bsT,\bsT')_{p,q}=-\Inner{\bsF_p\bsT\bsF_q}{\bsT'}$;
\item $(\sum^3_{j=1}\bsT\bse_j\times\bsT'\bse_j)\cdot\bsigma = \sum^3_{j=1} \Pa{\sum^3_{i,k,p=1}(\bsT\bse_j)_i(\bsT'\bse_j)_k\varepsilon_{ikp}\sigma_p} =
\sum^3_{j=1}\Pa{\sum^3_{i,k,p=1}(\bsT\bse_j)_i(\bsT'\bse_l)_k\varepsilon_{ikp}\delta_{jl}\sigma_p}=\sum^3_{i,j,k,l,p=1}t_{ij}t'_{kl}\varepsilon_{ikp}\delta_{jl}\sigma_p$;
\item
$\Pa{\sum^3_{i=1}\bsT^\t\bse_i\times{\bsT'}^\t\bse_i}\cdot\bsigma=\sum^3_{i,j,k,l,q=1}t_{ij}t'_{kl}\varepsilon_{jlq}\delta_{ik}\sigma_q$.
\end{enumerate}
We are done.
\end{proof}

The advantage of this product formula for two-qubit observables lies
in its independence from the components of vectors (or matrix
entries).

\begin{cor}\label{cor:B1}
The commutator $[\bsX,\bsX']:=\bsX\bsX'-\bsX'\bsX$ is identified as
\begin{eqnarray}
[\bsX,\bsX']&=&
2\mathrm{i}\Big[\Big(\bsr\times\bsr'+\sum^3_{i=1}\bsT\bse_i\times\bsT'\bse_i\Big)\cdot\sigma\ot\I_2
+
\I_2\ot\Big(\bss\times\bss'+\sum^3_{i=1}\bsT^\t\bse_i\times{\bsT'}^\t\bse_i\Big)\cdot\sigma\notag\\
&&~~~~~+\sum^3_{i,j=1}\Big(\Psi(\bsr,\bsT',\bss)-\Psi(\bsr',\bsT,\bss')\Big)_{ij}\sigma_i\ot\sigma_j\Big].
\end{eqnarray}
Moreover $[\bsX,\bsX']=0$ if and only if
\begin{eqnarray}
\begin{cases}
\bsr\times\bsr'+\sum^3_{i=1}\bsT\bse_i\times\bsT'\bse_i&=\zero,\\
\bss\times\bss'+\sum^3_{i=1}\bsT^\t\bse_i\times{\bsT'}^\t\bse_i&=\zero,\\
\Psi(\bsr,\bsT',\bss)&=\Psi(\bsr',\bsT,\bss').
\end{cases}
\end{eqnarray}
\end{cor}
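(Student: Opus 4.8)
The plan is to read $[\bsX,\bsX']$ directly off the product formula of Lemma~\ref{lem:A1}, exploiting the symmetry or antisymmetry of the various bilinear objects appearing there. First I would apply Lemma~\ref{lem:A1} to obtain the parameter tuple $(\tilde t,\tilde\bsr,\tilde\bss,\tilde\bsT)$ of $\bsX\bsX'$, and then apply it a second time with the roles of $\bsX$ and $\bsX'$ interchanged to obtain the parameter tuple of $\bsX'\bsX$ (i.e.\ swap primed and unprimed data throughout). Since the assignment $\bsX\mapsto(t,\bsr,\bss,\bsT)$ is linear, the parameter tuple of the commutator $[\bsX,\bsX']=\bsX\bsX'-\bsX'\bsX$ is the termwise difference of these two tuples, so it remains only to simplify each of the four components.

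The scalar component cancels: $\tilde t$ depends on the two observables only through the symmetric quantities $tt'$, $\inner{\bsr}{\bsr'}$, $\inner{\bss}{\bss'}$ and $\inner{\bsT}{\bsT'}$, so $[\bsX,\bsX']$ carries no $\I_4$ term, consistent with the tracelessness of a commutator. For the $\bsigma\ot\I_2$ component, the real contributions $t'\bsr+t\bsr'+\bsT'\bss+\bsT\bss'$ are symmetric under the interchange and drop out, while the imaginary part is governed by $\bsr\times\bsr'+\sum^3_{i=1}\bsT\bse_i\times\bsT'\bse_i$; because the cross product is antisymmetric ($\bsr\times\bsr'=-\bsr'\times\bsr$, and likewise for the $\bsT\bse_i\times\bsT'\bse_i$ sum), this contribution doubles under subtraction, producing the factor $2\mathrm{i}$ shown. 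The $\I_2\ot\bsigma$ component is handled identically. For the $\sigma_i\ot\sigma_j$ component, the real terms $t'\bsT+t\bsT'+\out{\bsr}{\bss'}+\out{\bsr'}{\bss}$ again cancel, and the $\Omega$-term cancels precisely because $\Omega$ is symmetric bilinear, $\Omega(\bsT,\bsT')=\Omega(\bsT',\bsT)$; the surviving imaginary part $\Psi(\bsr,\bsT',\bss)-\Psi(\bsr',\bsT,\bss')$ is antisymmetric under the interchange and so doubles, giving the displayed $2\mathrm{i}$ prefactor. Reassembling the four components into operator form yields the stated expression.

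For the ``moreover'' clause I would invoke that the Pauli decomposition is faithful: a two-qubit observable vanishes if and only if every entry of its tuple $(t,\bsr,\bss,\bsT)$ vanishes. Since the scalar part of $[\bsX,\bsX']$ is automatically zero, the condition $[\bsX,\bsX']=\zero$ is equivalent to the simultaneous vanishing of its $\bsr$-, $\bss$- and $\bsT$-parameters, which are exactly the three listed equations. There is no genuine analytic obstacle here, as every ingredient---the product formula, the symmetry of $\Omega$, the antisymmetry of the cross product---is already in hand. The only point needing care is the bookkeeping confirming that all the symmetric (real) contributions cancel while the antisymmetric (imaginary) ones survive and double; this is the one step I would check componentwise, optionally cross-verifying with the symbolic computation already used for Lemma~\ref{lem:A1}.
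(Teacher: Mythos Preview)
Your proposal is correct and matches the paper's approach: the paper states this result as a corollary of Lemma~\ref{lem:A1} with no separate proof, precisely because it follows by subtracting the two applications of the product formula and noting the symmetry/antisymmetry of the pieces, exactly as you outline. Your observation that the $\Omega$-term cancels by its symmetric bilinearity and that the Pauli decomposition is faithful for the ``moreover'' clause completes the argument.
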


\begin{prop}\label{prop:crossproperty}
It holds that
\begin{eqnarray}
\sum^3_{i=1}\bsA\bse_i\times\bsB\bse_i =
\sum^3_{i=1}(\bsA\bsB^\t\bse_i)\times\bse_i=\sum^3_{i=1}\bse_i\times(\bsB\bsA^\t\bse_i),
\end{eqnarray}
where $\bsA,\bsB\in\real^{3\times3}$.
\end{prop}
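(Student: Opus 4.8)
The plan is to prove the proposition by verifying that all three expressions reduce to one and the same coordinate formula, working entirely at the level of the $p$-th Cartesian component and using the Levi-Civita realization of the cross product recorded earlier, namely $(\bsx\times\bsy)_p=\Innerm{\bsx}{\bsF_p}{\bsy}=\sum_{i,j}\varepsilon_{ijp}x_iy_j$ with $(\bsF_p)_{ij}=\varepsilon_{ijp}$. The common ``hub'' formula I am aiming at is $\sum_{a,b}\varepsilon_{abp}(\bsA\bsB^\t)_{ab}=\Inner{\bsF_p}{\bsA\bsB^\t}$, and the whole argument consists in showing that each of the three sums produces exactly this.

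First I would expand the left-hand side. Writing $A_{ai}$ and $B_{bi}$ for the entries of $\bsA$ and $\bsB$, the $p$-th component of $\sum_{i}\bsA\bse_i\times\bsB\bse_i$ equals $\sum_{i}\sum_{a,b}\varepsilon_{abp}A_{ai}B_{bi}$. Carrying out the sum over $i$ first and recognizing $\sum_i A_{ai}B_{bi}=(\bsA\bsB^\t)_{ab}$, this collapses to $\sum_{a,b}\varepsilon_{abp}(\bsA\bsB^\t)_{ab}$, which is the hub. Next, for the middle expression $\sum_i(\bsA\bsB^\t\bse_i)\times\bse_i$ the factor $(\bse_i)_b=\delta_{ib}$ kills one summation, giving $\sum_{i,a}\varepsilon_{aip}(\bsA\bsB^\t)_{ai}$, and relabeling the dummy $i$ recovers the hub verbatim. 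For the right expression $\sum_i\bse_i\times(\bsB\bsA^\t\bse_i)$ the same collapse yields $\sum_{i,b}\varepsilon_{ibp}(\bsB\bsA^\t)_{bi}$; the only extra ingredient is the elementary transpose identity $(\bsB\bsA^\t)_{bi}=(\bsA\bsB^\t)_{ib}$, i.e.\ $\bsB\bsA^\t=(\bsA\bsB^\t)^\t$, after which relabeling once more reproduces the hub.

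There is no genuine obstacle here: the statement is an exercise in index bookkeeping, and the only point demanding a little care is tracking which slot of $\bse_i$ is contracted by the Kronecker delta in the two right-hand sums and matching it against the free antisymmetric pair $(a,b)$ of $\varepsilon_{abp}$. Since $\varepsilon_{abp}$ is antisymmetric in $a,b$, only the skew part of $\bsA\bsB^\t$ survives the contraction, which also explains conceptually why the three sums must agree: each of them extracts precisely the axial vector associated with the antisymmetric part of $\bsA\bsB^\t$.
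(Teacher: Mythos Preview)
Your proof is correct. The paper takes a slightly different, more ``basis-free'' route: it inserts the resolution of the identity $\sum_j\proj{\bse_j}$ into the second factor of $\bsA\bse_i\times\bsB\bse_i$, pulls the scalar $\Innerm{\bse_j}{\bsB}{\bse_i}=\Innerm{\bse_i}{\bsB^\t}{\bse_j}$ across the cross product, and then re-sums over $i$ to reconstitute $\bsA\bsB^\t\bse_j$, yielding $\sum_j\bsA\bsB^\t\bse_j\times\bse_j$ directly; the remaining equality with $\sum_i\bse_i\times\bsB\bsA^\t\bse_i$ is left implicit (it follows by swapping $\bsA\leftrightarrow\bsB$ in the just-proved identity and using antisymmetry of $\times$). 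Your approach instead fixes a component and reduces all three sums to the common scalar $\sum_{a,b}\varepsilon_{abp}(\bsA\bsB^\t)_{ab}=\Inner{\bsF_p}{\bsA\bsB^\t}$. The index method is more explicit---you verify all three expressions against a single hub rather than chaining two equalities---and your closing remark that only the skew part of $\bsA\bsB^\t$ survives nicely anticipates the corollary $\sum_k\bsA\bse_k\times\bsB\bse_k=-\tfrac12\Tr{\cF(\bsA\bsB^\t-\bsB\bsA^\t)}$ that the paper derives next. The paper's Dirac-notation argument is a touch slicker but less self-contained on the third expression; either approach is perfectly adequate here.
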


\begin{proof}
Indeed,
\begin{eqnarray*}
\sum^3_{i=1}\bsA\bse_i\times\bsB\bse_i
&=&\sum^3_{i=1}\bsA\bse_i\times\sum^3_{j=1}\proj{\bse_j}\bsB\bse_i =
\sum^3_{j=1}\sum^3_{i=1}\bsA\bse_i\Innerm{\bse_j}{\bsB}{\bse_i}\times
\bse_j\\
&=&\sum^3_{j=1}\sum^3_{i=1}\bsA\ket{\bse_i}
\Innerm{\bse_i}{\bsB^\t}{\bse_j}\times\bse_j=\sum^3_{j=1}\bsA\sum^3_{i=1}\ket{\bse_i}
\Innerm{\bse_i}{\bsB^\t}{\bse_j}\times\bse_j\\
&=&\sum^3_{j=1}\bsA\bsB^\t\bse_j\times\bse_j,
\end{eqnarray*}
completing the proof.
\end{proof}

\begin{cor}
For $\bsA,\bsB\in\real^{3\times3}$, we have
\begin{eqnarray}
\sum^3_{k=1}\bsA\bse_k\times\bsB\bse_k =
-\frac12\Tr{\cF(\bsA\bsB^\t-\bsB\bsA^\t)} =
-\frac12\sum^3_{k=1}\Tr{\bsF_k(\bsA\bsB^\t-\bsB\bsA^\t)}\bse_k,
\end{eqnarray}
where
$$
\Tr{\cF(\bsA\bsB^\t-\bsB\bsA^\t)}:=(\Tr{\bsF_1(\bsA\bsB^\t-\bsB\bsA^\t)},\Tr{\bsF_2(\bsA\bsB^\t-\bsB\bsA^\t)},\Tr{\bsF_3(\bsA\bsB^\t-\bsB\bsA^\t)}).
$$
\end{cor}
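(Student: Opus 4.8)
The plan is to verify the identity componentwise: for each $p\in\set{1,2,3}$ I would compute the $p$-th entry of the left-hand vector $\sum_{k=1}^3\bsA\bse_k\times\bsB\bse_k$ and match it against $-\frac12\Tr{\bsF_p(\bsA\bsB^\t-\bsB\bsA^\t)}$. The two right-hand expressions in the claim are just two notations for the same tuple (the second being $\sum_p(\,\cdot\,)_p\bse_p$), so it suffices to treat the middle one entry by entry.

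First I would invoke the representation of the cross product through the matrices $\bsF_p$ established above, namely $(\bsx\times\bsy)_p=\Innerm{\bsx}{\bsF_p}{\bsy}=\bsx^\t\bsF_p\bsy$. Applying it with $\bsx=\bsA\bse_k$ and $\bsy=\bsB\bse_k$ and summing over $k$ gives
$$\Pa{\sum_{k=1}^3\bsA\bse_k\times\bsB\bse_k}_p=\sum_{k=1}^3\bse_k^\t\bsA^\t\bsF_p\bsB\bse_k=\Tr{\bsA^\t\bsF_p\bsB}=\Tr{\bsF_p\bsB\bsA^\t},$$
where the third equality uses $\sum_k\bse_k^\t\bsM\bse_k=\Tr{\bsM}$ for any $\bsM\in\real^{3\times3}$ and the last uses cyclicity of the trace. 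Equivalently one could start from Proposition~\ref{prop:crossproperty}, which already rewrites the sum as $\sum_k(\bsA\bsB^\t\bse_k)\times\bse_k$, and read off the same trace $\Tr{\bsF_p\bsB\bsA^\t}$ directly.

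The remaining and only substantive step is to antisymmetrize. Each $\bsF_p$ is skew-symmetric, $\bsF_p^\t=-\bsF_p$, as is visible from the explicit matrices (or from the antisymmetry of $\varepsilon_{ijp}$ in $i,j$). Hence for any matrix $\bsM$ one has $\Tr{\bsF_p\bsM}=\Tr{(\bsF_p\bsM)^\t}=\Tr{\bsM^\t\bsF_p^\t}=-\Tr{\bsF_p\bsM^\t}$, so that $\Tr{\bsF_p\bsM}=\frac12\Tr{\bsF_p(\bsM-\bsM^\t)}$. Taking $\bsM=\bsB\bsA^\t$, whose transpose is $\bsA\bsB^\t$, yields $\Tr{\bsF_p\bsB\bsA^\t}=\frac12\Tr{\bsF_p(\bsB\bsA^\t-\bsA\bsB^\t)}=-\frac12\Tr{\bsF_p(\bsA\bsB^\t-\bsB\bsA^\t)}$, which is exactly the claimed $p$-th component. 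I do not anticipate any genuine obstacle: the argument is pure linear algebra, and the only point requiring care is keeping the transpose and the sign coming from the skew-symmetry of $\bsF_p$ straight.
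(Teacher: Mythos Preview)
Your proof is correct. The paper takes an almost identical route, differing only in the order of the two steps: it first invokes Proposition~\ref{prop:crossproperty} to rewrite the sum as both $\sum_i(\bsA\bsB^\t\bse_i)\times\bse_i$ and $\sum_i\bse_i\times(\bsB\bsA^\t\bse_i)$, averages the two to obtain $-\tfrac12\sum_i\bse_i\times[(\bsA\bsB^\t-\bsB\bsA^\t)\bse_i]$, and only then extracts components as traces. You instead compute the $p$-th component $\Tr{\bsF_p\bsB\bsA^\t}$ directly and antisymmetrize at the trace level using $\bsF_p^\t=-\bsF_p$. Your ordering is slightly more economical since it avoids the detour through Proposition~\ref{prop:crossproperty}, but the ingredients (the $\bsF_p$ representation of the cross product, cyclicity, skew-symmetry) are the same in both arguments.
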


\begin{proof}
Indeed, by Proposition~\ref{prop:crossproperty}, we get that
\begin{eqnarray*}
\sum^3_{k=1}\bsA\bse_k\times\bsB\bse_k =
\sum^3_{i=1}(\bsA\bsB^\t\bse_i)\times\bse_i=\sum^3_{i=1}\bse_i\times(\bsB\bsA^\t\bse_i)
=-\sum^3_{i=1}(\bsB\bsA^\t\bse_i)\times\bse_i,
\end{eqnarray*}
implying that
\begin{eqnarray*}
&&\sum^3_{k=1}\bsA\bse_k\times\bsB\bse_k =-\frac12
\sum^3_{i=1}\bse_i\times[(\bsA\bsB^\t-\bsB\bsA^\t)\bse_i]\\
&&=-\frac12(\sum^3_{i=1}\Innerm{\bse_i}{\bsF_1(\bsA\bsB^\t-\bsB\bsA^\t)}{\bse_i},\sum^3_{i=1}\Innerm{\bse_i}{\bsF_2(\bsA\bsB^\t-\bsB\bsA^\t)}{\bse_i},\sum^3_{i=1}\Innerm{\bse_i}{\bsF_3(\bsA\bsB^\t-\bsB\bsA^\t)}{\bse_i})\\
&&=-\frac12(\Tr{\bsF_1(\bsA\bsB^\t-\bsB\bsA^\t)},\Tr{\bsF_2(\bsA\bsB^\t-\bsB\bsA^\t)},\Tr{\bsF_3(\bsA\bsB^\t-\bsB\bsA^\t)}).
\end{eqnarray*}
This can be written down in a simplified notation:
$$
\Tr{\cF(\bsA\bsB^\t-\bsB\bsA^\t)}:=(\Tr{\bsF_1(\bsA\bsB^\t-\bsB\bsA^\t)},\Tr{\bsF_2(\bsA\bsB^\t-\bsB\bsA^\t)},\Tr{\bsF_3(\bsA\bsB^\t-\bsB\bsA^\t)}).
$$
We are done.
\end{proof}

\subsubsection{Auxiliary results}

To establish a rigorous relationship between Makhlin's invariants
and the Bargmann invariants under local unitary (LU)
transformations, we need to perform detailed calculations.
Throughout this process, numerous intriguing insights and findings
will emerge, which can be immediately utilized for simplifications
and reductions.

\begin{lem}\label{lem:A2}
For two given vectors
$\bsx=(x_1,x_2,x_3)^\t,\bsy=(y_1,y_2,y_3)^\t\in\real^3$, it holds
that
\begin{enumerate}[(i)]
\item $(\bsx\cdot\cF)^\t=-\bsx\cdot\cF$;
\item $(\bsx\cdot\cF)^\t\bsy=\bsx\times\bsy$;
\item
$\bsx\cdot\cF=\sum^3_{j=1}\out{\bse_j\times\bsx}{\bse_j}=\sum^3_{j=1}\out{\bse_j}{\bsx\times\bse_j}$.
\item
$(\bsx\cdot\cF)^\t(\bsy\cdot\cF)=\sum^3_{j=1}\bsF_j\out{\bsx}{\bsy}\bsF^\t_j=\Inner{\bsy}{\bsx}\I_3-\out{\bsy}{\bsx}$
and thus $\Inner{\bsx\cdot\cF}{\bsy\cdot\cF}=2\Inner{\bsx}{\bsy}$;
\item $(\bsx\times\bsy)\cdot\cF=\out{\bsx}{\bsy}-\out{\bsy}{\bsx}$.
\end{enumerate}
\end{lem}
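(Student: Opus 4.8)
The plan is to treat all five identities as consequences of two elementary facts about the structure constants $\bsF_k$: first, that $(\bsF_k)_{ij}=\varepsilon_{ijk}$ is antisymmetric and cyclically invariant in its three indices, and second, the Levi-Civita contraction identity $\sum_{k=1}^3\varepsilon_{ijk}\varepsilon_{lmk}=\delta_{il}\delta_{jm}-\delta_{im}\delta_{jl}$. Everything then reduces to index bookkeeping against these two facts; the explicit $3\times3$ forms of $\bsF_1,\bsF_2,\bsF_3$ listed above provide an alternative brute-force route (or a \textsc{Mathematica} check), but the index method is cleaner and keeps the structural origin visible.

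First I would dispatch (i): since $\varepsilon_{ijk}=-\varepsilon_{jik}$, each $\bsF_k$ is skew-symmetric, hence so is any real combination $\bsx\cdot\cF=\sum_k x_k\bsF_k$, which is exactly $(\bsx\cdot\cF)^\t=-\bsx\cdot\cF$. For (ii) I would read off the $i$-th component of $(\bsx\cdot\cF)^\t\bsy$, namely $\sum_{j,k}\varepsilon_{jik}x_ky_j$, and use cyclic invariance $\varepsilon_{jik}=\varepsilon_{abi}$ (after relabelling $k\mapsto a,\ j\mapsto b$) to recognise it as $(\bsx\times\bsy)_i=\sum_{a,b}\varepsilon_{abi}x_ay_b$, the defining realisation of the cross product given earlier. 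Part (iii) then follows with no further computation: evaluating on a basis vector and combining (i) with (ii) gives $(\bsx\cdot\cF)\bse_k=-(\bsx\cdot\cF)^\t\bse_k=-\bsx\times\bse_k=\bse_k\times\bsx$, which is precisely the action of $\sum_j\out{\bse_j\times\bsx}{\bse_j}$ on $\bse_k$; the second representation $\sum_j\out{\bse_j}{\bsx\times\bse_j}$ is the transpose form and is checked the same way.

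The computational heart is (iv), and this is the step I expect to cost the most. I would expand $(\bsx\cdot\cF)^\t(\bsy\cdot\cF)$ entrywise and apply the $\varepsilon$--$\delta$ contraction identity to collapse the sum over $k$, obtaining the $(a,b)$ entry $(\bsx\cdot\bsy)\delta_{ab}-y_ax_b$, i.e. $\Inner{\bsx}{\bsy}\I_3-\out{\bsy}{\bsx}$. The middle expression $\sum_j\bsF_j\out{\bsx}{\bsy}\bsF_j^\t$ is handled by the same contraction (its $(a,b)$ entry is $\sum_{c,d}\bigl(\sum_j\varepsilon_{acj}\varepsilon_{bdj}\bigr)x_cy_d$) and yields the identical answer. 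The scalar identity $\Inner{\bsx\cdot\cF}{\bsy\cdot\cF}=2\Inner{\bsx}{\bsy}$ is then immediate by taking the trace, since $\Tr{\I_3}=3$ and $\Tr{\out{\bsy}{\bsx}}=\Inner{\bsx}{\bsy}$. Finally, (v) is the same pattern once more: the $(a,b)$ entry of $(\bsx\times\bsy)\cdot\cF$ is $\sum_{k,c,d}\varepsilon_{abk}\varepsilon_{cdk}x_cy_d$, which the contraction reduces to $x_ay_b-x_by_a$, i.e. $\out{\bsx}{\bsy}-\out{\bsy}{\bsx}$. The only genuine obstacle throughout is keeping the index conventions and the signs coming from $\bsF_k^\t=-\bsF_k$ consistent; no deeper idea is required beyond the single contraction identity.
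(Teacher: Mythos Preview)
Your argument is correct. The route differs from the paper's in a useful way: the paper handles each item with ad hoc Dirac-notation manipulations (for (ii) it multiplies out the explicit $3\times3$ matrix; for (iii) and the first equality of (iv) it inserts the resolution of identity $\sum_j\proj{\bse_j}$ and uses (ii) on each leg; for (v) it expands in the skew-symmetric basis $\{\bsF_k\}$ and invokes $\Inner{\bsF_i}{\bsF_j}=2\delta_{ij}$), whereas you reduce everything uniformly to the single contraction identity $\sum_k\varepsilon_{ijk}\varepsilon_{lmk}=\delta_{il}\delta_{jm}-\delta_{im}\delta_{jl}$. Your approach is more systematic and has the advantage of actually deriving the closed form $\Inner{\bsx}{\bsy}\I_3-\out{\bsy}{\bsx}$ in (iv), which the paper's written proof states but does not verify; the paper's approach, on the other hand, stays closer to the bra-ket language used throughout the subsequent sections and makes the operator interpretation of $\bsx\cdot\cF$ as a cross-product map more visible.
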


\begin{proof}
For the first item, it is trivial result. For the second item, in
fact, we can check this identity directly as follows:
\begin{eqnarray*}
(\bsx\cdot\cF)^\t\bsy &=& -\Pa{\begin{array}{ccc}
 0 & x_3 & -x_2 \\
 -x_3 & 0 & x_1 \\
 x_2 & -x_1 & 0 \\
\end{array}}
\Pa{\begin{array}{c}
y_1\\
y_2\\
y_3
\end{array}} = -\Pa{\begin{array}{c}
x_3y_2-x_2y_3\\
x_1y_3-x_3y_1\\
x_2y_1-x_1y_2
\end{array}}\\
&=&\Pa{\Abs{\begin{array}{cc}
x_2 & x_3\\
y_2 & y_3
\end{array}},-\Abs{\begin{array}{cc}
x_1 & x_3\\
y_1 & y_3
\end{array}},\Abs{\begin{array}{cc}
x_1 & x_2\\
y_1 & y_2
\end{array}}}^\t\\
&=&\bsx\times\bsy.
\end{eqnarray*}
The third item can be also calculated immediately. Indeed, note that
$(\bsx\cdot\cF)^\t\bse_j=\bsx\times \bse_j$ or in Dirac notation,
$$
(\bsx\cdot\cF)^\t\ket{\bse_j}=\ket{\bsx\times\bse_j},\quad j=1,2,3,
$$
we get that
\begin{eqnarray*}
\bsx\cdot\cF&=&-(\bsx\cdot\cF)^\t\sum^3_{j=1}\proj{\bse_j}=-\sum^3_{j=1}(\bsx\cdot\cF)^\t\proj{\bse_j}\\
&=&-\sum^3_{j=1}\out{\bsx\times
\bse_j}{\bse_j}=\sum^3_{j=1}\out{\bse_j\times\bsx}{\bse_j}.
\end{eqnarray*}
Analogously,
\begin{eqnarray*}
\bsx\cdot\cF&=&\sum^3_{j=1}\proj{\bse_j}(\bsx\cdot\cF)=\sum^3_{j=1}\out{\bse_j}{\bsx\times
\bse_j}.
\end{eqnarray*}
For the 4th item, Furthermore,
\begin{eqnarray*}
(\bsx\cdot\cF)^\t(\bsy\cdot\cF)&=&(\bsx\cdot\cF)^\t\sum^3_{j=1}\proj{\bse_j}(\bsy\cdot\cF)=\sum^3_{j=1}(\bsx\cdot\cF)^\t\proj{\bse_j}(\bsy\cdot\cF)\\
&=&\sum^3_{j=1}\out{\bsx\times
\bse_j}{\bsy\times\bse_j}=\sum^3_{j=1}\out{\bse_j\times\bsx}{\bse_j\times\bsy}=\sum^3_{j=1}\bsF_j\out{\bsx}{\bsy}\bsF^\t_j.
\end{eqnarray*}
Note that $\Inner{\bsF_i}{\bsF_j}=2\delta_{ij}$. We get that
$\Inner{\bsx\cdot\cF}{\bsy\cdot\cF}=2\Inner{\bsx}{\bsy}$. For the
last item, we see that
\begin{eqnarray*}
(\bsx\times\bsy)\cdot\cF &=&
\sum^3_{k=1}\Innerm{\bsx}{\bsF_k}{\bsy}\bsF_k=
\sum^3_{k=1}\Tr{\bsF_k\out{\bsy}{\bsx}}\bsF_k=-\sum^3_{k=1}\Tr{\bsF_k\out{\bsx}{\bsy}}\bsF_k\\
&=&-\sum^3_{k=1}\frac12\Tr{\bsF_k(\out{\bsx}{\bsy}-\out{\bsy}{\bsx})}\bsF_k
=\out{\bsx}{\bsy}-\out{\bsy}{\bsx}.
\end{eqnarray*}
This completes the proof.
\end{proof}

In fact, the second item in Lemma~\ref{lem:A2} can be viewed as the
implementation of cross product by matrix multiplication. This
observation is simple but very important throughout this paper.

Another important fact is paramount in the following development. In
fact,
\begin{lem}\label{lem:A3}
For arbitrary two matrices $\bsM,\bsN\in\real^{3\times3}$ and any
two vectors $\bsx,\bsy\in\real^3$, it holds that
\begin{enumerate}[(i)]
\item $\Omega(\bsM,\bsM)=2\widehat\bsM$.
\item $\bsM(\bsx\cdot\cF)\bsN^\t + \bsN(\bsx\cdot\cF)\bsM^\t = (\Omega(\bsM,\bsN)\bsx)\cdot\cF$. In particular, for $\bsM=\bsN$, we get that
\begin{eqnarray}
\bsM(\bsx\cdot\cF)\bsM^\t =
\frac12(\Omega(\bsM,\bsM)\bsx)\cdot\cF=(\widehat\bsM\bsx)\cdot\cF.
\end{eqnarray}
\item $\bsM^\t[(\bsM\bsx)\cdot\cF]\bsM =\det(\bsM) (\bsx\cdot\cF)$ and
$\bsM[(\bsM^\t\bsx)\cdot\cF]\bsM^\t =\det(\bsM) (\bsx\cdot\cF)$.
\item $\Inner{(\bsx\cdot\cF)\bsM}{\bsM(\bsy\cdot\cF)}=2\Innerm{\bsx}{\widehat\bsM}{\bsy}$.
\end{enumerate}
\end{lem}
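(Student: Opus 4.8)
The engine of the whole lemma is the classical \emph{cross--adjugate identity}
\[
(\bsA\bsu)\times(\bsA\bsv)=\widehat{\bsA}\,(\bsu\times\bsv)\qquad(\bsA\in\real^{3\times3},\ \bsu,\bsv\in\real^3),
\]
which I would record first as a preliminary. It follows from the scalar--triple--product relation $[\bsA\bsu,\bsA\bsv,\bsA\bsw]=\det(\bsA)[\bsu,\bsv,\bsw]$ for invertible $\bsA$ (so that $(\bsA\bsu)\times(\bsA\bsv)=\det(\bsA)\bsA^{-\t}(\bsu\times\bsv)=\widehat{\bsA}(\bsu\times\bsv)$, using $\widehat{\bsA}=\det(\bsA)\bsA^{-\t}$), and then for all $\bsA$ by continuity. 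Throughout I shall use the rewriting of Lemma~\ref{lem:A2}(ii) as $(\bsa\cdot\cF)\bsz=\bsz\times\bsa$ (from $(\bsa\cdot\cF)^\t=-\bsa\cdot\cF$), the orthogonality $\Inner{\bsF_i}{\bsF_j}=2\delta_{ij}$, and the observation that every identity below is polynomial in the entries of $\bsM$; hence it suffices to argue for invertible $\bsM$ and invoke continuity, freely using $\bsM\widehat{\bsM}^\t=\widehat{\bsM}^\t\bsM=\bsM^\t\widehat{\bsM}=\det(\bsM)\I_3$ and $\widehat{\bsM^\t}=\widehat{\bsM}^\t$.

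For (i) I would start from the second representation of $\Omega$ in Proposition~\ref{prop:Omega}, specialised to $\bsN=\bsM$, namely $\Omega(\bsM,\bsM)=\sum_{i,j}\out{\bse_i\times\bse_j}{\bse^\t_i\bsM\times\bse^\t_j\bsM}$. Writing the $i$-th row as $\bse^\t_i\bsM=(\bsM^\t\bse_i)^\t$ and applying the cross--adjugate identity to $\bsA=\bsM^\t$ gives $(\bsM^\t\bse_i)\times(\bsM^\t\bse_j)=\widehat{\bsM}^\t(\bse_i\times\bse_j)$; transposing (the formula uses the row convention) turns each dyad into $\out{\bse_i\times\bse_j}{\bse^\t_i\bsM\times\bse^\t_j\bsM}=\proj{\bse_i\times\bse_j}\,\widehat{\bsM}$. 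It then remains to evaluate $\sum_{i,j}\proj{\bse_i\times\bse_j}$: the diagonal terms vanish and the off--diagonal ones pair up (e.g.\ $\bse_1\times\bse_2=\bse_3$ and $\bse_2\times\bse_1=-\bse_3$ both give $\proj{\bse_3}$), yielding $2\I_3$. Hence $\Omega(\bsM,\bsM)=2\widehat{\bsM}$.

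For (ii) I would first prove the diagonal case $\bsM(\bsx\cdot\cF)\bsM^\t=(\widehat{\bsM}\bsx)\cdot\cF$ by testing against an arbitrary $\bsy$: the left side equals $\bsM\big[(\bsM^\t\bsy)\times\bsx\big]$, and writing $\bsx=\bsM^\t(\bsM^{-\t}\bsx)$ and using the cross--adjugate identity with $\bsA=\bsM^\t$ turns this into $\bsM\widehat{\bsM}^\t(\bsy\times\bsM^{-\t}\bsx)=\det(\bsM)(\bsy\times\bsM^{-\t}\bsx)$, which matches $((\widehat{\bsM}\bsx)\cdot\cF)\bsy=\bsy\times(\widehat{\bsM}\bsx)$ since $\widehat{\bsM}=\det(\bsM)\bsM^{-\t}$. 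The general identity then follows by polarisation: both sides are symmetric bilinear in $(\bsM,\bsN)$, and by (i) the symmetric bilinear form whose diagonal is $2\widehat{\bsM}$ is precisely $\Omega$, i.e.\ $\Omega(\bsM,\bsN)=\widehat{\bsM+\bsN}-\widehat{\bsM}-\widehat{\bsN}$; substituting the diagonal identity into the polarisation formula gives $\bsM(\bsx\cdot\cF)\bsN^\t+\bsN(\bsx\cdot\cF)\bsM^\t=(\Omega(\bsM,\bsN)\bsx)\cdot\cF$. The stated ``in particular'' case is the diagonal identity combined with (i).

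Parts (iii) and (iv) are then short. For (iii) I would again test on $\bsy$: $\bsM^\t[(\bsM\bsx)\cdot\cF]\bsM\bsy=\bsM^\t[(\bsM\bsy)\times(\bsM\bsx)]=\bsM^\t\widehat{\bsM}(\bsy\times\bsx)=\det(\bsM)(\bsy\times\bsx)=\det(\bsM)(\bsx\cdot\cF)\bsy$, using the cross--adjugate identity and $\bsM^\t\widehat{\bsM}=\det(\bsM)\I_3$; the companion identity is identical after swapping $\bsM\leftrightarrow\bsM^\t$ and using $\bsM\widehat{\bsM}^\t=\det(\bsM)\I_3$. For (iv), applying the ``in particular'' case of (ii) to $\bsM^\t$ gives $\bsM^\t(\bsx\cdot\cF)\bsM=(\widehat{\bsM}^\t\bsx)\cdot\cF$; then $\Inner{(\bsx\cdot\cF)\bsM}{\bsM(\bsy\cdot\cF)}=-\Tr{\bsM^\t(\bsx\cdot\cF)\bsM\,(\bsy\cdot\cF)}=-\Tr{((\widehat{\bsM}^\t\bsx)\cdot\cF)(\bsy\cdot\cF)}$, and the identity $\Tr{(\bsa\cdot\cF)(\bsb\cdot\cF)}=-2\Inner{\bsa}{\bsb}$ (a restatement of Lemma~\ref{lem:A2}(iv)) yields $2\Inner{\widehat{\bsM}^\t\bsx}{\bsy}=2\Innerm{\bsx}{\widehat{\bsM}}{\bsy}$. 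The only genuine obstacle is bookkeeping: keeping the row/column cross--product convention and the signs from $\bsF_k^\t=-\bsF_k$ straight, since all the mathematical content is concentrated in the single cross--adjugate identity together with the density of invertible matrices.
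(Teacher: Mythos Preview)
Your argument is correct, but it is organised quite differently from the paper's. The paper proves the general bilinear identity (ii) \emph{first} and \emph{directly}: it observes that $\bsM(\bsx\cdot\cF)\bsN^\t+\bsN(\bsx\cdot\cF)\bsM^\t$ is skew--symmetric, expands it in the basis $\{\bsF_k\}$, and identifies the coefficients with $(\Omega(\bsM,\bsN)\bsx)_k$ via the entrywise formula $\Omega(\bsM,\bsN)_{p,q}=-\Inner{\bsF_p\bsM\bsF_q}{\bsN}$ of Proposition~\ref{prop:Omega}; items (iii) and (iv) are then read off from (ii), and (i) is stated as a direct computation. In particular, the cross--adjugate identity $(\bsA\bsu)\times(\bsA\bsv)=\widehat\bsA(\bsu\times\bsv)$ that you take as a \emph{starting point} is in the paper a \emph{consequence} of this lemma (it is derived afterwards as Corollaries~\ref{cor:B3a}--\ref{cor:B3b}).

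Your route reverses the logical flow: you isolate the cross--adjugate identity up front (via the scalar triple product and density of $\GL(3)$), get (i) cleanly from the dyadic form of $\Omega$ in Proposition~\ref{prop:Omega}, prove the diagonal case of (ii) by testing against a vector, and then obtain the full (ii) by polarisation together with (i); (iii) and (iv) drop out the same way. This is a perfectly valid and arguably more conceptual organisation---everything is reduced to one classical identity plus polarisation---whereas the paper's approach is more self--contained (no appeal to density or continuity for (ii)) and avoids the forward reference to the cross--adjugate formula. Either way the content is the same; your bookkeeping of signs and of the row/column cross--product conventions is consistent with Lemma~\ref{lem:A2}.
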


\begin{proof}
For the first item, the proof can be obtained immediately by direct
computation. Indeed, using Proposition~\ref{prop:Omega}, for
$\bsM=\bsN$, we get that
\begin{eqnarray*}
\Omega(\bsM,\bsM)_{p,q} &=& -\Inner{\bsF_p\bsM\bsF_q}{\bsM} =
\Inner{\bsF_p\bsM}{\bsM\bsF_q} \\
&=& \Inner{(\bse_p\cdot\cF)\bsM}{\bsM(\bse_q\cdot\cF)} =
2\Innerm{\bse_p}{\widehat\bsM}{\bse_q},
\end{eqnarray*}
implying that $\Omega(\bsM,\bsM)=2\widehat\bsM$. For the second
item, it is easily seen that
\begin{eqnarray*}
\Pa{\bsM(\bsx\cdot\cF)\bsN^\t + \bsN(\bsx\cdot\cF)\bsM^\t}^\t =
-\Pa{\bsM(\bsx\cdot\cF)\bsN^\t + \bsN(\bsx\cdot\cF)\bsM^\t}.
\end{eqnarray*}
Thus it can be decomposed as
\begin{eqnarray*}
\bsM(\bsx\cdot\cF)\bsN^\t + \bsN(\bsx\cdot\cF)\bsM^\t=
\sum^3_{k=1}c_k(\bsM,\bsN)\bsF_k,
\end{eqnarray*}
where the coefficients $c_k$ can be identified with
\begin{eqnarray*}
c_k &=& -\frac12\Tr{\bsM(\bsx\cdot\cF)\bsN^\t\bsF_k}-\frac12\Tr{\bsN(\bsx\cdot\cF)\bsM^\t\bsF_k}\\
&=& \Inner{\bse_k}{\Omega(\bsM,\bsN)\bsx},
\end{eqnarray*}
implying that $\bsM(\bsx\cdot\cF)\bsN^\t + \bsN(\bsx\cdot\cF)\bsM^\t
= (\Omega(\bsM,\bsN)\bsx)\cdot\cF$. In particular, for $\bsM=\bsN$,
the desired identity follows immediately from
$\Omega(\bsM,\bsM)=2\widehat\bsM$. For the third item, we see from
the obtained result in (ii) that
\begin{eqnarray*}
\bsM^\t[(\bsM\bsx)\cdot\cF]\bsM =
({\widehat\bsM}^\t\bsM\bsx)\cdot\cF = \det(\bsM)(\bsx\cdot\cF).
\end{eqnarray*}
For the 4th item,
\begin{eqnarray*}
\Inner{(\bsx\cdot\cF)\bsM}{\bsM(\bsy\cdot\cF)}&=& \Tr{\bsM^\t(\bsx\cdot\cF)^\t\bsM(\bsy\cdot\cF)} = \Tr{(\bsx\cdot\cF)^\t\bsM(\bsy\cdot\cF)\bsM^\t}\\
&=&\Tr{(\bsx\cdot\cF)^\t[(\widehat\bsM\bsy)\cdot\cF]} = \Inner{\bsx\cdot\cF}{(\widehat\bsM\bsy)\cdot\cF}\\
&=&2\Innerm{\bsx}{\widehat\bsM}{\bsy}.
\end{eqnarray*}
In the first equality, we used the definition of Hilbert-Schmidt
inner product. For the second equality, we used the cyclicity of
trace. In the third equality, we used the obtained result in (ii).
In the last equality, we used the fact obtained in (iii) of
Lemma~\ref{lem:A2}.
\end{proof}

\begin{cor}\label{cor:B3a}
For an arbitrary invertible matrix $\bsL\in\real^{3\times3}$ and any
two vectors $\bsu,\bsv\in\real^3$, we have that
\begin{enumerate}[(i)]
\item $\bsL(\bsu\times\bsv) =\widehat\bsL\bsu\times
(\bsL^\t)^{-1}\bsv$, in fact, we see that
\begin{eqnarray}
\bsL(\bsu\times\bsv) &=& \widehat\bsL\bsu\times (\bsL^\t)^{-1}\bsv\\
&=& \det(\bsL)\Pa{(\bsL^\t)^{-1}\bsu\times (\bsL^\t)^{-1}\bsv}\\
&=& \frac1{\det(\bsL)}\Pa{\widehat\bsL\bsu\times \widehat\bsL\bsv}.
\end{eqnarray}
In particular, for $\bsR\in\SO(3)$, the special orthogonal group of
order $3$, we recover the well-known formula:
\begin{eqnarray}
\bsR(\bsu\times\bsv)=\bsR\bsu\times\bsR\bsv.
\end{eqnarray}
\item
$\bsL\bsu\times\bsL\bsv=\det(\bsL)(\bsL^\t)^{-1}(\bsu\times\bsv)=\widehat\bsL(\bsu\times\bsv)$.
\end{enumerate}
\end{cor}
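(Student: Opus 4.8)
The plan is to prove (ii) first by a short algebraic manipulation, and then to deduce (i) from (ii) by substituting the adjugate. Both identities are linear in $\bsu$ and $\bsv$, so no case analysis is needed; all the work lies in correctly tracking transposes and determinants, and the two classical ingredients are the matrix realization of the cross product from Lemma~\ref{lem:A2}(ii) and the conjugation identity from Lemma~\ref{lem:A3}(iii).

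First I would rewrite the cross product via Lemma~\ref{lem:A2}(ii), $\bsu\times\bsv=(\bsu\cdot\cF)^\t\bsv$, so that $\bsL\bsu\times\bsL\bsv=\bigl((\bsL\bsu)\cdot\cF\bigr)^\t(\bsL\bsv)$. Since $\bsL$ is invertible, Lemma~\ref{lem:A3}(iii) in the form $\bsL^\t[(\bsL\bsu)\cdot\cF]\bsL=\det(\bsL)(\bsu\cdot\cF)$ can be solved to give $(\bsL\bsu)\cdot\cF=\det(\bsL)(\bsL^\t)^{-1}(\bsu\cdot\cF)\bsL^{-1}$. Transposing this (using $(\bsL^{-1})^\t=(\bsL^\t)^{-1}$ and $((\bsL^\t)^{-1})^\t=\bsL^{-1}$), inserting it into the previous line, and cancelling $\bsL^{-1}\bsL=\I_3$ collapses the whole expression to $\det(\bsL)(\bsL^\t)^{-1}(\bsu\cdot\cF)^\t\bsv=\det(\bsL)(\bsL^\t)^{-1}(\bsu\times\bsv)$. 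Finally the adjugate relation $\widehat\bsL=\det(\bsL)(\bsL^\t)^{-1}$, which is exactly $\widehat\bsL^\t\bsL=\det(\bsL)\I_3$ from Proposition~\ref{prop:adjugate}, rewrites this as $\widehat\bsL(\bsu\times\bsv)$, establishing (ii).

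With (ii) in hand, (i) follows by one substitution. Replacing $\bsL$ by $\widehat\bsL$ in (ii) yields $\widehat\bsL\bsu\times\widehat\bsL\bsv=\widehat{\widehat\bsL}(\bsu\times\bsv)$, and the double-adjugate identity for $3\times3$ matrices, $\widehat{\widehat\bsL}=\det(\bsL)\bsL$, gives $\widehat\bsL\bsu\times\widehat\bsL\bsv=\det(\bsL)\,\bsL(\bsu\times\bsv)$; dividing by $\det(\bsL)$ is precisely the third displayed equality of (i). I would justify $\widehat{\widehat\bsL}=\det(\bsL)\bsL$ from the classical law $\op{adj}(\op{adj}(\bsL))=\det(\bsL)\bsL$ together with $\bsM^*=\widehat\bsM^\t$ and $\widehat{\bsM^\t}=(\widehat\bsM)^\t$ (it can alternatively be read off Corollary~\ref{lem:adjugate}(iv) after a Cayley--Hamilton reduction). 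The first and second equalities of (i) then drop out by substituting $(\bsL^\t)^{-1}=\widehat\bsL/\det(\bsL)$ into $\widehat\bsL\bsu\times(\bsL^\t)^{-1}\bsv$ and into $\det(\bsL)\bigl((\bsL^\t)^{-1}\bsu\times(\bsL^\t)^{-1}\bsv\bigr)$ and comparing with the third. The $\SO(3)$ specialization is immediate: for $\bsR\in\SO(3)$ one has $\det(\bsR)=1$ and $\bsR^\t=\bsR^{-1}$, hence $\widehat\bsR=\bsR$, so both parts reduce to $\bsR(\bsu\times\bsv)=\bsR\bsu\times\bsR\bsv$.

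I expect the only real obstacle to be bookkeeping rather than anything conceptual: keeping the transpose manipulations straight when passing from $(\bsL\bsu)\cdot\cF$ to its transpose and remembering that conjugation by $\bsL$ through $\cF$ produces the factor $\det(\bsL)$ via Lemma~\ref{lem:A3}(iii). A convenient self-check is that all three right-hand sides of (i) must agree, since each equals $\bsL(\bsu\times\bsv)$; this built-in redundancy makes the final verification essentially automatic.
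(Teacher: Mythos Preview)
Your proof is correct, but the route differs from the paper's in its organization and in which part of Lemma~\ref{lem:A3} is invoked. The paper proves (i) first and directly: it writes $\bsL(\bsu\times\bsv)=\bsL(\bsu\cdot\cF)^\t\bsv$, inserts $\bsL^\t(\bsL^\t)^{-1}$, and applies Lemma~\ref{lem:A3}(ii) in the form $\bsL(\bsu\cdot\cF)\bsL^\t=(\widehat\bsL\bsu)\cdot\cF$ to obtain $\widehat\bsL\bsu\times(\bsL^\t)^{-1}\bsv$ in one stroke; the remaining equalities of (i) and all of (ii) then drop out from the single relation $\widehat\bsL\bsL^\t=\det(\bsL)\I_3$. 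You instead start from (ii), invoke Lemma~\ref{lem:A3}(iii) to solve for $(\bsL\bsu)\cdot\cF$, and then recover (i) by substituting $\widehat\bsL$ for $\bsL$ together with the double-adjugate identity $\widehat{\widehat\bsL}=\det(\bsL)\bsL$. Both arguments rest on the same cross-product--to--matrix dictionary of Lemma~\ref{lem:A2}(ii); the paper's path is marginally shorter since it avoids the double-adjugate step, while yours makes the dependence on (ii) as the basic identity more transparent.
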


\begin{proof}
It suffices to show that
$\bsL(\bsu\times\bsv)=\widehat\bsL\bsu\times (\bsL^\t)^{-1}\bsv$.
Indeed, via the fact that $\bsu\times\bsv=(\bsu\cdot\cF)^\t\bsv$,
using the result in (ii) of Lemma~\ref{lem:A3}
\begin{eqnarray*}
&&\bsL(\bsu\times\bsv)=\bsL(\bsu\cdot\cF)^\t\bsv =
\bsL(\bsu\cdot\cF)^\t\bsL^\t(\bsL^\t)^{-1}\bsv=[\bsL(\bsu\cdot\cF)\bsL^\t]^\t(\bsL^\t)^{-1}\bsv\\
&&=[(\widehat\bsL\bsu)\cdot\cF]^\t(\bsL^\t)^{-1}\bsv =
\widehat\bsL\bsu\times(\bsL^\t)^{-1}\bsv.
\end{eqnarray*}
Due to the fact that $\widehat\bsL\bsL^\t = \det(\bsL)\I_3$ and
$\bsL$ is invertible, we get the other two forms of this formula. In
particular, for $\bsL=\bsR\in\SO(3)$, then $\det(\bsR)=1$ and
$\widehat\bsR=\bsR$, which leads to the desired identity.
\end{proof}

The above results are obtained under the invertibility condition. In
fact, we can remove such condition, that is, the following
identities holds for any matrix $\bsL\in\real^{3\times3}$:
\begin{eqnarray}
\det(\bsL)\bsL(\bsu\times\bsv) &=&
\widehat\bsL\bsu\times\widehat\bsL\bsv,\\
\bsL\bsu\times\bsL\bsv&=&\widehat\bsL(\bsu\times\bsv).
\end{eqnarray}

\begin{cor}\label{cor:B3b}
For any two matrices $\bsM,\bsN\in\real^{3\times3}$ and any two
vectors $\bsu,\bsv\in\real^3$, it holds that
\begin{eqnarray}
\bsM\bsu\times\bsN\bsv + \bsN\bsu\times\bsM\bsv =
\Omega(\bsM,\bsN)(\bsu\times \bsv).
\end{eqnarray}
In particular, for $\bsM=\bsN$, we get that
$\bsM\bsu\times\bsM\bsv=\frac12\Omega(\bsM,\bsM)(\bsu\times\bsv)=\widehat\bsM(\bsu\times\bsv)$.
\end{cor}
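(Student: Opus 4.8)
The plan is to exploit the fact that the already-established identity $\bsL\bsu\times\bsL\bsv=\widehat\bsL(\bsu\times\bsv)$, valid for \emph{every} $\bsL\in\real^{3\times3}$, is quadratic in $\bsL$, and to recover the desired bilinear statement by polarization. Concretely, I would apply that identity with $\bsL=\bsM+\bsN$ and expand both sides, so that the mixed terms isolate precisely the left-hand side we are after.

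First, using bilinearity of the cross product, the left-hand side expands as
\begin{eqnarray*}
(\bsM+\bsN)\bsu\times(\bsM+\bsN)\bsv
&=&\bsM\bsu\times\bsM\bsv+\bsN\bsu\times\bsN\bsv\\
&&+\bsM\bsu\times\bsN\bsv+\bsN\bsu\times\bsM\bsv.
\end{eqnarray*}
The quadratic identity turns the two pure terms into $\widehat\bsM(\bsu\times\bsv)$ and $\widehat\bsN(\bsu\times\bsv)$, while the entire expression equals $\widehat{\bsM+\bsN}(\bsu\times\bsv)$. Rearranging gives
$$
\bsM\bsu\times\bsN\bsv+\bsN\bsu\times\bsM\bsv=\bigl(\widehat{\bsM+\bsN}-\widehat\bsM-\widehat\bsN\bigr)(\bsu\times\bsv).
$$

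The second step is to identify the matrix $\widehat{\bsM+\bsN}-\widehat\bsM-\widehat\bsN$ with $\Omega(\bsM,\bsN)$. Since $\Omega$ is symmetric and bilinear and satisfies $\Omega(\bsL,\bsL)=2\widehat\bsL$ by Lemma~\ref{lem:A3}(i), the polarization identity for symmetric bilinear forms yields
$$
\Omega(\bsM,\bsN)=\tfrac12\bigl(\Omega(\bsM+\bsN,\bsM+\bsN)-\Omega(\bsM,\bsM)-\Omega(\bsN,\bsN)\bigr)=\widehat{\bsM+\bsN}-\widehat\bsM-\widehat\bsN,
$$
which is exactly the factor appearing above. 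Substituting this back proves the identity $\bsM\bsu\times\bsN\bsv+\bsN\bsu\times\bsM\bsv=\Omega(\bsM,\bsN)(\bsu\times\bsv)$. The special case $\bsM=\bsN$ then follows by setting the two matrices equal: the left side becomes $2\,\bsM\bsu\times\bsM\bsv$ and the right side $\Omega(\bsM,\bsM)(\bsu\times\bsv)=2\widehat\bsM(\bsu\times\bsv)$, recovering $\bsM\bsu\times\bsM\bsv=\tfrac12\Omega(\bsM,\bsM)(\bsu\times\bsv)=\widehat\bsM(\bsu\times\bsv)$.

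There is essentially no computational obstacle here, since every nontrivial ingredient has already been proved; the only point requiring care is the logical justification that $\Omega$ is \emph{the} polarization of the cofactor map. This is guaranteed because a symmetric bilinear form is uniquely determined by its associated quadratic form, so the two symmetric bilinear forms $\Omega(\cdot,\cdot)$ and $(\bsM,\bsN)\mapsto\widehat{\bsM+\bsN}-\widehat\bsM-\widehat\bsN$—which agree on the diagonal since both equal $2\widehat{(\,\cdot\,)}$—must coincide. An alternative, fully self-contained route would verify the entrywise identity directly from $\Omega(\bsM,\bsN)_{p,q}=-\Inner{\bsF_p\bsM\bsF_q}{\bsN}$ together with the realization of the cross product via the generators $\bsF_k$, but the polarization argument is shorter and avoids the index bookkeeping.
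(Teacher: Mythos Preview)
Your proof is correct but takes a genuinely different route from the paper. The paper computes entrywise: it writes the $i$-th component of $\bsM\bsu\times\bsN\bsv+\bsN\bsu\times\bsM\bsv$ as $\Innerm{\bsu}{\bsM^\t\bsF_i\bsN+\bsN^\t\bsF_i\bsM}{\bsv}$, then expands the skew-symmetric matrix $\bsM^\t\bsF_i\bsN+\bsN^\t\bsF_i\bsM$ in the basis $\{\bsF_j\}$ to obtain $-\sum_j\Inner{\bsF_i\bsM\bsF_j}{\bsN}\bsF_j$, and finally reads off the answer using $\Omega(\bsM,\bsN)_{i,j}=-\Inner{\bsF_i\bsM\bsF_j}{\bsN}$. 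Your polarization argument bypasses all of this index bookkeeping by leveraging two already-proved quadratic identities, $\bsL\bsu\times\bsL\bsv=\widehat\bsL(\bsu\times\bsv)$ and $\Omega(\bsL,\bsL)=2\widehat\bsL$, and the uniqueness of the symmetric bilinear form associated with a quadratic form. In effect you reverse the logical order of the paper: the paper first proves Corollary~\ref{cor:B3b} directly and then deduces $\Omega(\bsM,\bsN)=\widehat{\bsM+\bsN}-\widehat\bsM-\widehat\bsN$ as a consequence, whereas you obtain that matrix identity first via polarization and then pull the corollary out of it. Your approach is shorter and more conceptual; the paper's approach has the minor advantage of not depending on the extension of $\bsL\bsu\times\bsL\bsv=\widehat\bsL(\bsu\times\bsv)$ to singular $\bsL$ (which the paper states but does not prove in detail before this point).
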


\begin{proof}
In fact,
\begin{eqnarray*}
\bsM\bsu\times\bsN\bsv + \bsN\bsu\times\bsM\bsv &=&
\Pa{\Innerm{\bsM\bsu}{\bsF_1}{\bsN\bsv},\Innerm{\bsM\bsu}{\bsF_2}{\bsN\bsv},\Innerm{\bsM\bsu}{\bsF_3}{\bsN\bsv}}^\t\\
&&+\Pa{\Innerm{\bsN\bsu}{\bsF_1}{\bsM\bsv},\Innerm{\bsN\bsu}{\bsF_2}{\bsM\bsv},\Innerm{\bsN\bsu}{\bsF_3}{\bsM\bsv}}^\t,
\end{eqnarray*}
which is equal to
$$
\Pa{\Innerm{\bsu}{\bsM^\t\bsF_1\bsN+\bsN^\t\bsF_1\bsM}{\bsv},\Innerm{\bsu}{\bsM^\t\bsF_2\bsN+\bsN^\t\bsF_2\bsM}{\bsv},\Innerm{\bsu}{\bsM^\t\bsF_3\bsN+\bsN^\t\bsF_3\bsM}{\bsv}}^\t.
$$
Now we can easily check that
\begin{eqnarray*}
\bsM^\t\bsF_i\bsN+\bsN^\t\bsF_i\bsM=-\sum^3_{j=1}\Inner{\bsF_i\bsM\bsF_j}{\bsN}\bsF_j,
\end{eqnarray*}
implying that
\begin{eqnarray*}
&&\Innerm{\bsu}{\bsM^\t\bsF_i\bsN+\bsN^\t\bsF_i\bsM}{\bsv}=-\sum^3_{j=1}\Inner{\bsF_i\bsM\bsF_j}{\bsN}\Innerm{\bsu}{\bsF_j}{\bsv}=\sum^3_{j=1}\Omega(\bsM,\bsN)_{i,j}\Innerm{\bsu}{\bsF_j}{\bsv}\\
&&=[\Omega(\bsM,\bsN)(\bsu\times\bsv)]_i.
\end{eqnarray*}
That is, the desired result is true.
\end{proof}

\begin{cor}
For any two matrices $\bsM,\bsN\in\real^{3\times3}$, it holds that
\begin{eqnarray}
\Omega(\bsM,\bsN) =
\widehat{\bsM+\bsN}-\Pa{\widehat\bsM+\widehat\bsN}.
\end{eqnarray}
\end{cor}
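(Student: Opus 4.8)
The plan is to reduce the identity to a simple \emph{polarization} argument, using two facts already in place: first, the remark immediately following Eq.~\eqref{eq:Omegasymbol} that $\Omega$ is a \emph{symmetric bilinear} mapping on pairs of $3\times3$ real matrices; and second, the diagonal evaluation $\Omega(\bsM,\bsM)=2\widehat\bsM$ proved in item (i) of Lemma~\ref{lem:A3}. Conceptually, every entry of $\widehat\bsM$ is a $2\times2$ minor of $\bsM$, hence a homogeneous quadratic polynomial in the entries of $\bsM$; so $\bsM\mapsto\widehat\bsM$ is a quadratic map, and $\tfrac12\Omega$ is precisely its associated symmetric bilinear form. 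The claimed formula is then nothing more than the universal relation between a quadratic map and its polarization.

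Concretely, I would first expand $\Omega$ on the diagonal argument $\bsM+\bsN$ using bilinearity,
\begin{eqnarray*}
\Omega(\bsM+\bsN,\bsM+\bsN)=\Omega(\bsM,\bsM)+\Omega(\bsM,\bsN)+\Omega(\bsN,\bsM)+\Omega(\bsN,\bsN),
\end{eqnarray*}
then invoke the symmetry $\Omega(\bsM,\bsN)=\Omega(\bsN,\bsM)$ to merge the two cross terms into $2\Omega(\bsM,\bsN)$, and finally substitute $\Omega(\bsM+\bsN,\bsM+\bsN)=2\widehat{\bsM+\bsN}$, $\Omega(\bsM,\bsM)=2\widehat\bsM$, and $\Omega(\bsN,\bsN)=2\widehat\bsN$ from Lemma~\ref{lem:A3}(i). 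This gives $2\widehat{\bsM+\bsN}=2\widehat\bsM+2\widehat\bsN+2\Omega(\bsM,\bsN)$, and dividing by $2$ yields $\Omega(\bsM,\bsN)=\widehat{\bsM+\bsN}-\Pa{\widehat\bsM+\widehat\bsN}$, as claimed.

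The only step that needs a word of justification — and it is the closest thing to an obstacle here — is the legitimacy of the polarization step, i.e.\ that $\Omega$ is genuinely bilinear. This is immediate from the defining formula in Eq.~\eqref{eq:Omegasymbol}: each summand, such as $\bse^\t_2\bsM\times\bse^\t_3\bsN$, is linear in $\bsM$ through the row $\bse^\t_2\bsM$ and linear in $\bsN$ through the row $\bse^\t_3\bsN$, the cross product being bilinear, and the companion term $\bse^\t_2\bsN\times\bse^\t_3\bsM$ (together with the entries in the other rows) is handled identically. Should one prefer to bypass even this observation, the entrywise representation $\Omega(\bsM,\bsN)_{p,q}=-\Inner{\bsF_p\bsM\bsF_q}{\bsN}$ from Proposition~\ref{prop:Omega} is manifestly bilinear in $(\bsM,\bsN)$ and confirms both the bilinearity and the symmetry directly; but the polarization route is cleaner and requires no further computation.
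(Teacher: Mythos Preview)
Your argument is correct and is, at heart, the same polarization idea the paper uses: expand the quadratic relation on $\bsM+\bsN$ and subtract the two diagonal contributions. The paper, however, routes the polarization through Corollary~\ref{cor:B3b}, writing $(\bsM+\bsN)\bsu\times(\bsM+\bsN)\bsv=\widehat{\bsM+\bsN}(\bsu\times\bsv)$ and its analogues, equating the cross term with $\Omega(\bsM,\bsN)(\bsu\times\bsv)$, and then tacitly passing from an identity valid for all $\bsu\times\bsv$ to a matrix identity. Your version polarizes $\Omega$ itself using only the symmetry/bilinearity recorded after Eq.~\eqref{eq:Omegasymbol} and the diagonal evaluation $\Omega(\bsM,\bsM)=2\widehat\bsM$ of Lemma~\ref{lem:A3}(i); this is a cleaner path since it bypasses Corollary~\ref{cor:B3b} entirely and avoids the extra step of cancelling the test vector.
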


\begin{proof}
Indeed, using Corollary~\ref{cor:B3b}, we get that
\begin{eqnarray*}
(\bsM+\bsN)\bsu\times(\bsM+\bsN)\bsv&=&\widehat{\bsM+\bsN}(\bsu\times\bsv),\\
\bsM\bsu\times \bsM\bsv&=&\widehat{\bsM}(\bsu\times\bsv),\\
\bsN\bsu\times \bsN\bsv&=&\widehat{\bsN}(\bsu\times\bsv),
\end{eqnarray*}
and thus
\begin{eqnarray*}
\bsM\bsu\times\bsN\bsv + \bsN\bsu\times\bsM\bsv
&=&(\bsM+\bsN)\bsu\times(\bsM+\bsN)\bsv - \bsM\bsu\times \bsM\bsv -
\bsN\bsu\times \bsN\bsv\\
&=&\widehat{\bsM+\bsN}(\bsu\times\bsv)-\widehat{\bsM}(\bsu\times\bsv)-\widehat{\bsN}(\bsu\times\bsv)\\
&=&
\Pa{\widehat{\bsM+\bsN}-\widehat{\bsM}-\widehat{\bsN}}(\bsu\times\bsv)
= \Omega(\bsM,\bsN)(\bsu\times\bsv),
\end{eqnarray*}
implying that
$\Omega(\bsM,\bsN)=\widehat{\bsM+\bsN}-\widehat{\bsM}-\widehat{\bsN}$.
\end{proof}

\begin{cor}\label{cor:B3c}
For any matrix $\bsM\in\real^{3\times3}$ and any vectors
$\bsx,\bsy\in\real^3$, it holds that
\begin{enumerate}[(i)]
\item $[(\bsM\bsy)\cdot\cF]\bsM=\widehat\bsM(\bsy\cdot\cF)$.
\item $\bsM[(\bsM^\t\bsx)\cdot\cF]=(\bsx\cdot\cF)\widehat\bsM$.
\end{enumerate}
\end{cor}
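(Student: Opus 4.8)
The plan is to prove part~(i) by verifying the matrix identity through its action on an arbitrary vector $\bsv\in\real^3$, and then to deduce part~(ii) from~(i) by a transpose argument, so that no second computation is required.

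For~(i), I would first record the elementary consequence of Lemma~\ref{lem:A2}(i)--(ii) that, for any $\bsz\in\real^3$, the matrix $\bsz\cdot\cF$ acts on a vector by $(\bsz\cdot\cF)\bsv = -(\bsz\cdot\cF)^\t\bsv = \bsv\times\bsz$. Applying the left-hand side of~(i) to $\bsv$ then gives $[(\bsM\bsy)\cdot\cF]\bsM\bsv = (\bsM\bsv)\times(\bsM\bsy)$, while the right-hand side gives $\widehat\bsM(\bsy\cdot\cF)\bsv = \widehat\bsM(\bsv\times\bsy)$. These coincide by the master formula $\bsM\bsu\times\bsM\bsv=\widehat\bsM(\bsu\times\bsv)$ established for every matrix $\bsM\in\real^{3\times3}$ in Corollary~\ref{cor:B3b}: taking the first argument to be $\bsv$ and the second to be $\bsy$ yields $(\bsM\bsv)\times(\bsM\bsy)=\widehat\bsM(\bsv\times\bsy)$. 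Since $\bsv$ is arbitrary, the two matrices agree, which is~(i).

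For~(ii), rather than repeat the calculation I would apply~(i) with $\bsM$ replaced by $\bsM^\t$ and $\bsy$ replaced by $\bsx$, obtaining $[(\bsM^\t\bsx)\cdot\cF]\bsM^\t = \widehat{\bsM^\t}(\bsx\cdot\cF) = (\widehat\bsM)^\t(\bsx\cdot\cF)$, where I use the standard fact $\widehat{\bsM^\t}=(\widehat\bsM)^\t$ recalled earlier in the excerpt. Transposing this identity and invoking the skew-symmetry $(\bsz\cdot\cF)^\t=-\bsz\cdot\cF$ of Lemma~\ref{lem:A2}(i) on both the factor $(\bsM^\t\bsx)\cdot\cF$ and the factor $\bsx\cdot\cF$ converts it into $\bsM[(\bsM^\t\bsx)\cdot\cF]=(\bsx\cdot\cF)\widehat\bsM$, which is precisely~(ii).

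Neither half poses a genuine obstacle; once the action $(\bsz\cdot\cF)\bsv=\bsv\times\bsz$ and the formula $\bsM\bsu\times\bsM\bsv=\widehat\bsM(\bsu\times\bsv)$ are in hand, the argument is short. The only point demanding care is the bookkeeping of signs and orientations---specifically, keeping straight that $\bsz\cdot\cF$ acts as $\bsv\mapsto\bsv\times\bsz$ rather than $\bsz\times\bsv$, and that in~(ii) the transpose flips both skew-symmetric factors so the two resulting sign changes cancel. An alternative direct route to~(ii) would apply the identity $\bsL(\bsu\times\bsv)=\widehat\bsL\bsu\times(\bsL^\t)^{-1}\bsv$ of Corollary~\ref{cor:B3a} to $\bsM(\bsv\times\bsM^\t\bsx)$ and then extend by continuity, but the transpose argument above is cleaner and sidesteps the invertibility hypothesis entirely.
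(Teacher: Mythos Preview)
Your proof is correct, but it takes a genuinely different and more economical route than the paper's. The paper first assumes $\bsM$ is invertible and invokes Lemma~\ref{lem:A3}(iii), namely $\bsM^\t[(\bsM\bsy)\cdot\cF]\bsM=\det(\bsM)(\bsy\cdot\cF)=\bsM^\t\widehat\bsM(\bsy\cdot\cF)$, cancelling the leading $\bsM^\t$; it then handles singular $\bsM$ by an explicit SVD perturbation $\bsM_\epsilon=\bsP(\Sigma+\epsilon\I_3)\bsQ^\t$ together with a verification that $\widehat{\bsM_\epsilon}\to\widehat\bsM$. You instead apply both sides of~(i) to an arbitrary vector and reduce the identity to the master formula $\bsM\bsu\times\bsM\bsv=\widehat\bsM(\bsu\times\bsv)$ of Corollary~\ref{cor:B3b}, which was already proved for \emph{all} $3\times3$ matrices without any invertibility hypothesis; this eliminates the continuity argument altogether. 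Your transpose deduction of~(ii) from~(i) is also a nice shortcut that the paper does not use (it argues~(ii) ``analogously''). The paper's approach has the mild virtue of exhibiting the connection to Lemma~\ref{lem:A3}(iii) and giving a template for continuity extensions, but your argument is shorter, self-contained, and avoids the case split entirely.
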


\begin{proof}
We can prove these results in two steps:
\begin{itemize}
\item Assume that $\bsM$ is invertible. Then by the result in (iii)
of Lemma~\ref{lem:A3}, we get that
\begin{eqnarray*}
\bsM^\t[(\bsM\bsy)\cdot\cF]\bsM = \det(\bsM)(\bsy\cdot\cF) =
\bsM^\t\widehat\bsM(\bsy\cdot\cF).
\end{eqnarray*}
Because $\bsM$ is invertible, i.e., $\bsM^\t$ is also invertible, we
get that
$$
[(\bsM\bsy)\cdot\cF]\bsM = \widehat\bsM(\bsy\cdot\cF).
$$
Analogously, we also have that
$\bsM[(\bsM^\t\bsx)\cdot\cF]=(\bsx\cdot\cF)\widehat\bsM$.
\item Now if $\bsM$ is not invertible, then we can take a net by
using Singular Value Decomposition such that $\bsM$ can be
approximated in any precision by such net. Indeed, via Singular
Value Decomposition, there there exist two orthogonal matrices
$\bsP$ and $\bsQ$ in $\O(3)$, the orthogonal group, such that
$\bsM=\bsP\Sigma\bsQ^\t$, where $\Sigma=\diag(m_1,m_2,m_3)$ consists
of singular values of $\bsM$. Let such net
$\set{\bsM_\epsilon:\epsilon>0}$ be given by
$\bsM_\epsilon=\bsP(\Sigma+\epsilon\I_3)\bsQ^\t$ for small enough
$\epsilon>0$. Now $\lim_{\epsilon\to0^+}\bsM_\epsilon=\bsM$ and
$$
[(\bsM_\epsilon\bsy)\cdot\cF]\bsM_\epsilon =
\widehat\bsM_\epsilon(\bsy\cdot\cF).
$$
The proof can be finished by taking the limit for $\epsilon\to0^+$
on both sides of the above expression due to the continuity argument
and the fact that
\begin{eqnarray}\label{eq:cofactormatrix}
\lim_{\epsilon\to0^+}\widehat\bsM_\epsilon=\widehat\bsM.
\end{eqnarray}
To this end, using the result (i) in Lemma~\ref{lem:adjugate}, we
see that
\begin{eqnarray*}
\lim_{\epsilon\to0^+}\Tr{\widehat\bsM_\epsilon} &=&
\frac12\Pa{\lim_{\epsilon\to0^+}\Tr{\bsM_\epsilon}^2-\lim_{\epsilon\to0^+}\Tr{\bsM_\epsilon^2}}\\
&=&
\frac12\Pa{\lim_{\epsilon\to0^+}\Tr{\bsM}^2-\lim_{\epsilon\to0^+}\Tr{\bsM^2}}
=\Tr{\widehat\bsM}.
\end{eqnarray*}

By Proposition~\ref{prop:adjugate}, we get
\begin{eqnarray*}
\lim_{\epsilon\to0^+}\widehat\bsM_\epsilon  &=&
\lim_{\epsilon\to0^+}
\Pa{\bsM^2_\epsilon-\Tr{\bsM_\epsilon}\bsM_\epsilon+\Tr{\widehat\bsM_\epsilon}\I_3}^\t\\
&=&
\Pa{\lim_{\epsilon\to0^+}\bsM^2_\epsilon-\lim_{\epsilon\to0^+}\Tr{\bsM_\epsilon}\lim_{\epsilon\to0^+}\bsM_\epsilon+\lim_{\epsilon\to0^+}\Tr{\widehat\bsM_\epsilon}\I_3}^\t\\
&=&\Pa{\bsM^2-\Tr{\bsM}\bsM+\Tr{\widehat\bsM}\I_3}^\t =
\widehat\bsM.
\end{eqnarray*}
\end{itemize}
The proof is complete.
\end{proof}
Next we summarize important properties concerning $\Omega$.
\begin{lem}
For $\Omega$, defined in Eq.~\eqref{eq:Omegasymbol}, it holds that
\begin{enumerate}[(i)]
\item $\Omega(\bsT,\out{\bsa}{\bsb}) = (\bsa\cdot\cF)\bsT(\bsb\cdot\cF)^\t$.
\item $\Omega(\bsM,\widehat\bsT)=\Tr{\bsM^\t\bsT}\bsT - \bsT\bsM^\t\bsT$, in particular, $\Omega(\bsT,\widehat\bsT) = \norm{\bsT}^2\bsT -
\bsT\bsT^\t\bsT$.
\item $\Omega(\bsT,\bsA\bsT) = \Tr{\bsA}\widehat\bsT - \bsA^\t\widehat\bsT
$; in particular, $\Omega(\bsT,\bsT\bsT^\t\bsT) =
\norm{\bsT}^2\widehat\bsT- \det(\bsT)\bsT$.
\item $\Omega(\bsT,\bsT\bsB)=\Tr{\bsB}\widehat\bsT - \widehat\bsT
\bsB^\t$.
\item $\Omega(\bsT,(\bsr\cdot\cF)\bsT(\bss\cdot\cF)^\t) = \Innerm{\bsr}{\bsT}{\bss}\bsT + \norm{\bsT}^2\out{\bsr}{\bss} -
(\out{\bsr}{\bss}\bsT^\t\bsT +\bsT\bsT^\t\out{\bsr}{\bss})$.
\item $\Omega(\bsT,\bsx\cdot\cF)=\bsT\bsx\cdot\cF+\out{\bsn}{\bsx}$, where $\bsn=\sum^3_{i=1}\bsT\bse_i\times\bse_i$ is determined from $\bsT-\bsT^\t=\bsn\cdot\cF$.
\item $\Omega(\bsx\cdot\cF,\bsy\cdot\cF)=\out{\bsx}{\bsy}+\out{\bsy}{\bsx}$. In particular, $\Omega(\bsx\cdot\cF,\bsx\cdot\cF)=2\proj{\bsx}$.
\end{enumerate}
\end{lem}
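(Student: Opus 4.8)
The plan is to make item~(i) the foundational identity, to isolate one auxiliary ``conjugation'' identity as the workhorse, and then to read items (ii)--(vi) off the reduction formula of Lemma~\ref{lem:A3}(ii),
\[
\bsM(\bsy\cdot\cF)\bsN^\t+\bsN(\bsy\cdot\cF)\bsM^\t=(\Omega(\bsM,\bsN)\bsy)\cdot\cF ,
\]
which pins down the matrix $\Omega(\bsM,\bsN)$ as soon as the skew-symmetric left-hand side is recognized as $\bsw\cdot\cF$ for a vector $\bsw$ depending linearly on $\bsy$. Item~(vii) I would dispatch separately through the additivity identity $\Omega(\bsM,\bsN)=\widehat{\bsM+\bsN}-\widehat\bsM-\widehat\bsN$ established earlier.

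First I would prove (i) by a direct entry computation. By Proposition~\ref{prop:Omega}, $\Omega(\bsT,\out{\bsa}{\bsb})_{p,q}=-\Inner{\bsF_p\bsT\bsF_q}{\out{\bsa}{\bsb}}=-\bsb^\t\bsF_q\bsT^\t\bsF_p\bsa$, using $\bsF_k^\t=-\bsF_k$. With $\bsF_p\bsa=\bsa\times\bse_p$ and $\bsb^\t\bsF_q=(\bse_q\times\bsb)^\t$ (both consequences of Lemma~\ref{lem:A2}(ii)) this entry collapses to $(\bsa\times\bse_p)^\t\bsT(\bsb\times\bse_q)$, which is precisely the $(p,q)$ entry of $(\bsa\cdot\cF)\bsT(\bsb\cdot\cF)^\t$ since $\bse_p^\t(\bsa\cdot\cF)=(\bsa\times\bse_p)^\t$ and $(\bsb\cdot\cF)^\t\bse_q=\bsb\times\bse_q$.

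Next I would establish the workhorse identity $\Omega(\I_3,\bsA)=\Tr{\bsA}\I_3-\bsA^\t$, equivalently $\bsA(\bsz\cdot\cF)+(\bsz\cdot\cF)\bsA^\t=\bigl((\Tr{\bsA}\I_3-\bsA^\t)\bsz\bigr)\cdot\cF$. Indeed $\Omega(\I_3,\bsA)_{p,q}=-\Inner{\bsF_p\bsF_q}{\bsA}=-\Tr{\bsF_q\bsF_p\bsA}$, and specializing Lemma~\ref{lem:A2}(iv) to $\bsx=\bse_p,\bsy=\bse_q$ yields $\bsF_q\bsF_p=\out{\bse_p}{\bse_q}-\delta_{pq}\I_3$, so the trace equals $A_{qp}-\delta_{pq}\Tr{\bsA}$, giving the claim. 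With this in hand, items (ii)--(vi) become routine inside the reduction formula: for (iii) and (iv) I would factor the left-hand side through the block $\bsT(\bsy\cdot\cF)\bsT^\t=(\widehat\bsT\bsy)\cdot\cF$ (the $\bsM=\bsN=\bsT$ case of Lemma~\ref{lem:A3}(ii)) and then apply the workhorse identity to the remaining sandwich by $\bsA$ or $\bsB$; for (ii) I would first move $\widehat\bsT$ past the $\cF$-factor using Corollary~\ref{cor:B3c}, reducing the left side to $\bsA(\bsz\cdot\cF)+(\bsz\cdot\cF)\bsA^\t$ with $\bsA=\bsM\bsT^\t$ and $\bsz=\bsT\bsy$; for (vi) I would simplify $(\bsy\cdot\cF)(\bsx\cdot\cF)$ via Lemma~\ref{lem:A2}(iv), re-express the resulting cross products via Lemma~\ref{lem:A2}(v), and substitute $\bsn\cdot\cF=\bsT-\bsT^\t$.

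Finally, for (vii) I would combine the additivity identity with the fact $\widehat{\bsx\cdot\cF}=\proj{\bsx}$, which follows from Proposition~\ref{prop:adjugate} in dimension three because $\Tr{\bsx\cdot\cF}=0$ and $(\bsx\cdot\cF)^2=\proj{\bsx}-\abs{\bsx}^2\I_3$ by Lemma~\ref{lem:A2}(iv); polarizing gives $\Omega(\bsx\cdot\cF,\bsy\cdot\cF)=(\bsx+\bsy)(\bsx+\bsy)^\t-\proj{\bsx}-\proj{\bsy}=\out{\bsx}{\bsy}+\out{\bsy}{\bsx}$. The main obstacle I expect is item~(v): the second argument $(\bsr\cdot\cF)\bsT(\bss\cdot\cF)^\t$ produces, inside the reduction formula, products of three cross-product matrices, so several nested applications of Lemma~\ref{lem:A2}(iv)--(v) are needed and the symmetric piece $\Innerm{\bsr}{\bsT}{\bss}\bsT+\norm{\bsT}^2\out{\bsr}{\bss}$ must be carefully separated from the terms $\out{\bsr}{\bss}\bsT^\t\bsT$ and $\bsT\bsT^\t\out{\bsr}{\bss}$; this bookkeeping is where sign and ordering errors are most likely, although the answer is forced by the same reduction scheme.
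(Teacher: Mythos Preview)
Your proposal is correct and takes a genuinely different route from the paper's proof. For item~(i) the paper works directly with the cross-product definition~\eqref{eq:Omegasymbol} of $\Omega$ and factors out diagonal and permutation matrices, whereas you go through the entrywise formula of Proposition~\ref{prop:Omega}; both are short. The real divergence is in items~(ii)--(iv): the paper first specializes~(i) to a rank-one argument and then extends by the singular value decomposition of $\bsM$, $\bsA$, or $\bsB$, while you funnel everything through the single reduction identity of Lemma~\ref{lem:A3}(ii) combined with your workhorse $\Omega(\I_3,\bsA)=\Tr{\bsA}\I_3-\bsA^\t$ and the relations of Corollary~\ref{cor:B3c}. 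Your scheme is more uniform, avoids any appeal to SVD, and makes the four identities~(ii)--(iv) parallel computations rather than separate case analyses; the paper's SVD-plus-bilinearity argument is conceptually transparent but less economical. For item~(v) the paper carries out a direct expansion via the triple-product formulae $(\bsu\times\bsv)\times\bsw$ and $\bsu\times(\bsv\times\bsw)$, which sidesteps some of the nested $\cF$-algebra you anticipate; your reduction-formula approach works too but, as you note, involves more bookkeeping. For~(vi) and~(vii) the paper simply says ``checked by direct calculation,'' so your arguments via Lemma~\ref{lem:A2}(iv)--(v) and the additivity identity $\Omega(\bsM,\bsN)=\widehat{\bsM+\bsN}-\widehat\bsM-\widehat\bsN$ together with $\widehat{\bsx\cdot\cF}=\proj{\bsx}$ actually supply more detail than the paper does.
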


\begin{proof}
(i) Indeed, for $\bsT'=\out{\bsa}{\bsb}$, we see that
\begin{eqnarray*}
\Omega(\bsT,\bsT') &=& \Pa{\begin{array}{c}
          a_3\bse^\t_2\bsT\times\bsb^\t \\
          a_1\bse^\t_3\bsT\times\bsb^\t \\
          a_2\bse^\t_1\bsT\times\bsb^\t
        \end{array}
}-\Pa{\begin{array}{c}
          a_2\bse^\t_3\bsT\times\bsb^\t \\
          a_3\bse^\t_1\bsT\times\bsb^\t \\
          a_1\bse^\t_2\bsT\times\bsb^\t
        \end{array}
}\\
&=&\diag(a_3,a_1,a_2)\Pa{\begin{array}{ccc}
                           0 & 1 & 0 \\
                           0 & 0 & 1 \\
                           1 & 0 & 0
                         \end{array}
}\Pa{\begin{array}{c}
       \bse^\t_1\bsT\times\bsb^\t \\
       \bse^\t_2\bsT\times\bsb^\t \\
       \bse^\t_3\bsT\times\bsb^\t
     \end{array}
}\\
&&-\diag(a_2,a_3,a_1)\Pa{\begin{array}{ccc}
                           0 & 0 & 1 \\
                           1 & 0 & 0 \\
                           0 & 1 & 0
                         \end{array}
}\Pa{\begin{array}{c}
       \bse^\t_1\bsT\times\bsb^\t \\
       \bse^\t_2\bsT\times\bsb^\t \\
       \bse^\t_3\bsT\times\bsb^\t
     \end{array}
},
\end{eqnarray*}
which is just equal to $(\bsa\cdot\cF)\bsT(\bsb\cdot\cF)^\t$, where
we used the facts that
$$
\diag(a_3,a_1,a_2)\Pa{\begin{array}{ccc}
                           0 & 1 & 0 \\
                           0 & 0 & 1 \\
                           1 & 0 & 0
                         \end{array}
}-\diag(a_2,a_3,a_1)\Pa{\begin{array}{ccc}
                           0 & 0 & 1 \\
                           1 & 0 & 0 \\
                           0 & 1 & 0
                         \end{array}
}=\bsa\cdot\cF
$$
and $\Pa{\begin{array}{c}
       \bse^\t_1\bsT\times\bsb^\t \\
       \bse^\t_2\bsT\times\bsb^\t \\
       \bse^\t_3\bsT\times\bsb^\t
     \end{array}
}=\bsT(\bsb\cdot\cF)^\t$. \\
(ii) The correctness of this result can be directly checked by
\textsc{Mathematica}. In what follows, we infer it by analytical
method. In fact, using the result obtained in (i) previously,
\begin{eqnarray*}
\Omega(\out{\bsa}{\bsb},\widehat\bsT)
&=&\Omega(\widehat\bsT,\out{\bsa}{\bsb}) =
(\bsa\cdot\cF)\widehat\bsT(\bsb\cdot\cF)^\t\\
&=&(\bsa\cdot\cF)(\bsT\bsb\cdot\cF)^\t\bsT =
(\Innerm{\bsa}{\bsT}{\bsb}\I_3-\bsT\out{\bsb}{\bsa})\bsT\\
&=&\Tr{\out{\bsb}{\bsa}\bsT}\bsT - \bsT\out{\bsb}{\bsa}\bsT.
\end{eqnarray*}
Here in the third equality, we used the first property in
Corollary~\ref{cor:B3c}; and in the 4th equality, we used the third
property in Lemma~\ref{lem:A2}. Now using Singular Value
Decomposition of $\bsM$: $\bsM=\sum^3_{j=1}s_j\out{\bsa_j}{\bsb_j}$,
we can finish the proof:
$$
\Omega(\bsM,\widehat\bsT)=\Tr{\bsM^\t\bsT}\bsT - \bsT\bsM^\t\bsT.
$$
Indeed, by the bi-linearity of $\Omega(\cdot,\cdot)$,
\begin{eqnarray*}
\Omega(\bsT,\widehat\bsT) &=&\sum^3_{j=1}s_j
\Omega(\out{\bsa_j}{\bsb_j},\widehat\bsT) =\sum^3_{j=1}s_j
\Pa{\Tr{\out{\bsb_j}{\bsa_j}\bsT}\bsT -
\bsT\out{\bsb_j}{\bsa_j}\bsT}\\
&=& \Tr{\sum^3_{j=1}s_j\out{\bsb_j}{\bsa_j}\bsT}\bsT -
\bsT\sum^3_{j=1}s_j\out{\bsb_j}{\bsa_j}\bsT\\
&=&\Tr{\bsT^\t\bsT}\bsT-\bsT\bsT^\t\bsT =
\norm{\bsT}^2\bsT-\bsT\bsT^\t\bsT.
\end{eqnarray*}
(iii) We prove a stronger result: If $\bsT'=\bsA\bsT$ for any
$3\times 3$ real matrix $\bsA$, where $\bse^\t_k\bsA =
(a_{k1},a_{k2},a_{k3})(k\in[3])$,
\begin{eqnarray*}
&&\Omega(\bsT,\bsA\bsT) = \Pa{\begin{array}{c}
      \bse^\t_2\bsT\times(a_{31}\bse^\t_1\bsT+a_{33}\bse^\t_3\bsT)+(a_{21}\bse^\t_1\bsT+a_{22}\bse^\t_2\bsT)\times\bse^\t_3\bsT  \\
      \bse^\t_3\bsT\times(a_{11}\bse^\t_1\bsT+a_{12}\bse^\t_2\bsT)+(a_{32}\bse^\t_2\bsT+a_{33}\bse^\t_3\bsT)\times\bse^\t_1\bsT \\
      \bse^\t_1\bsT\times(a_{22}\bse^\t_2\bsT+a_{23}\bse^\t_3\bsT)+(a_{11}\bse^\t_1\bsT+a_{13}\bse^\t_3\bsT)\times\bse^\t_2\bsT
    \end{array}
}\\
&&=\Pa{\begin{array}{c}
      (a_{22}+a_{33})\bse^\t_2\bsT\times\bse^\t_3\bsT  \\
      (a_{11}+a_{33})\bse^\t_3\bsT\times\bse^\t_1\bsT\\
      (a_{11}+a_{22})\bse^\t_1\bsT\times\bse^\t_2\bsT
    \end{array}
}+\Pa{\begin{array}{c}
      a_{31}\bse^\t_2\bsT\times\bse^\t_1\bsT  \\
      a_{12}\bse^\t_3\bsT\times\bse^\t_2\bsT \\
      a_{23}\bse^\t_1\bsT\times\bse^\t_3\bsT
    \end{array}
}+\Pa{\begin{array}{c}
      a_{21}\bse^\t_1\bsT\times\bse^\t_3\bsT  \\
      a_{32}\bse^\t_2\bsT\times\bse^\t_1\bsT \\
      a_{13}\bse^\t_3\bsT\times\bse^\t_2\bsT
    \end{array}
}\\
&&=[\Tr{\bsA}-\diag(a_{11},a_{22},a_{33})]\widehat\bsT
-\Pa{\begin{array}{c}
      a_{31}\bse^\t_1\bsT\times\bse^\t_2\bsT  \\
      a_{12}\bse^\t_2\bsT\times\bse^\t_3\bsT \\
      a_{23}\bse^\t_3\bsT\times\bse^\t_1\bsT
    \end{array}
}-\Pa{\begin{array}{c}
      a_{21}\bse^\t_3\bsT\times\bse^\t_1\bsT  \\
      a_{32}\bse^\t_1\bsT\times\bse^\t_2\bsT \\
      a_{13}\bse^\t_2\bsT\times\bse^\t_3\bsT
    \end{array}
}\\
&&=\Tr{\bsA}\widehat\bsT -\diag(a_{11},a_{22},a_{33})\widehat\bsT -
\Pa{\begin{array}{ccc}
                                                                      0 & a_{21} & a_{31} \\
                                                                      a_{12} & 0 & a_{32} \\
                                                                      a_{13} & a_{23} &
                                                                      0
                                                                    \end{array}}\widehat\bsT
=\Tr{\bsA}\widehat\bsT - \bsA^\t\widehat\bsT.
\end{eqnarray*}
That is, for $\bsT'=\bsA\bsT$,
\begin{eqnarray}
\Omega(\bsT,\bsA\bsT) = \Tr{\bsA}\widehat\bsT - \bsA^\t\widehat\bsT.
\end{eqnarray}
Letting in the above $\bsA=\bsT\bsT^\t$, we get that
$$
\Tr{\bsA}\widehat\bsT - \bsA^\t\widehat\bsT =
\Tr{\bsT\bsT^\t}\widehat\bsT -
\bsT\bsT^\t\widehat\bsT=\Inner{\bsT}{\bsT}\widehat\bsT-
\det(\bsT)\bsT.
$$
The another approach to this result can be described as follows.
Indeed, $\bsA$ can be decomposed as
$\bsA=\sum^3_{i=1}s_i\out{\bsx_i}{\bsy_i}$ by Singular Value
Decomposition. Then
\begin{eqnarray*}
\Omega(\bsT,\bsA\bsT) &=&
\sum^3_{i=1}s_i\Omega(\bsT,\out{\bsx_i}{\bsy_i}\bsT) =
\sum^3_{j=1}s_i (\bsx_i\cdot\cF)\bsT((\bsT^\t\bsy_j)\cdot\cF)^\t\\
&=& \sum^3_{i=1}s_i
(\bsx_i\cdot\cF)(\bsy_i\cdot\cF)^\t\widehat\bsT=\sum^3_{i=1}s_i\Pa{\Inner{\bsy_i}{\bsx_i}\I_3-\out{\bsy_i}{\bsx_i}}\widehat\bsT\\
&=&\Tr{\bsA}\widehat\bsT - \bsA^\t\widehat\bsT.
\end{eqnarray*}
(iv) By Singular Value Decomposition of $\bsB$,
$\bsB=\sum^3_{i=1}s_j\out{\bsx_j}{\bsy_j}$. Now
\begin{eqnarray*}
\Omega(\bsT,\bsT\bsB) &=&
\sum^3_{j=1}s_j\Omega(\bsT,\bsT\out{\bsx_j}{\bsy_j}) =
\sum^3_{j=1}s_j\Omega(\bsT,\out{\bsT\bsx_j}{\bsy_j}).
\end{eqnarray*}
Using (i), we get that
\begin{eqnarray*}
\Omega(\bsT,\bsT\bsB) &=&
\sum^3_{j=1}s_j\Omega(\bsT,\out{\bsT\bsx_j}{\bsy_j}) =
\sum^3_{j=1}s_j ((\bsT\bsx_j)\cdot\cF)\bsT(\bsy_j\cdot\cF)^\t\\
&=& \sum^3_{j=1}s_j
\widehat\bsT(\bsx_j\cdot\cF)(\bsy_j\cdot\cF)^\t=\widehat\bsT
\sum^3_{j=1}s_j (\bsx_j\cdot\cF)(\bsy_j\cdot\cF)^\t\\
&=&\widehat\bsT\sum^3_{j=1}s_j\Pa{\Inner{\bsy_j}{\bsx_j}\I_3-\out{\bsy_j}{\bsx_j}}=\Tr{\bsB}\widehat\bsT
- \widehat\bsT\bsB^\t.
\end{eqnarray*}
(v) Note that
$$
(\bsr\cdot\cF)\bsT(\bss\cdot\cF)^\t=\Pa{\begin{array}{c}
                                            \bse^\t_1(\bsr\cdot\cF)\bsT\times\bss^\t \\
                                            \bse^\t_2(\bsr\cdot\cF)\bsT\times\bss^\t \\
                                            \bse^\t_3(\bsr\cdot\cF)\bsT\times\bss^\t
                                          \end{array}
}
$$
implies that
$$
\bse^\t_k(\bsr\cdot\cF)\bsT(\bss\cdot\cF)^\t=\bse^\t_k(\bsr\cdot\cF)\bsT\times\bss^\t.
$$
Using the facts that
\begin{eqnarray*}
(\bsu\times\bsv)\times\bsw = \Inner{\bsw}{\bsu}\bsv -
\Inner{\bsw}{\bsv}\bsu,\\
\bsu\times(\bsv\times\bsw) = \Inner{\bsu}{\bsw}\bsv -
\Inner{\bsu}{\bsv}\bsw,
\end{eqnarray*}
we get that
\begin{eqnarray*}
&&\Omega(\bsT,(\bsr\cdot\cF)\bsT(\bss\cdot\cF)^\t) =
\Pa{\begin{array}{c}
 (\bse^\t_2(\bsr\cdot\cF)\bsT\times\bss^\t)\times\bse^\t_3\bsT
 + \bse^\t_2\bsT\times(\bse^\t_3(\bsr\cdot\cF)\bsT\times\bss^\t) \\
 (\bse^\t_3(\bsr\cdot\cF)\bsT\times\bss^\t)\times\bse^\t_1\bsT
 + \bse^\t_3\bsT\times(\bse^\t_1(\bsr\cdot\cF)\bsT\times\bss^\t)\\
(\bse^\t_1(\bsr\cdot\cF)\bsT\times\bss^\t)\times\bse^\t_2\bsT
 + \bse^\t_1\bsT\times(\bse^\t_2(\bsr\cdot\cF)\bsT\times\bss^\t)
                                                         \end{array}
}\\
&&= - \Pa{\begin{array}{c}
            (\bsT\bss)_3\bse^\t_2(\bsr\cdot\cF)\bsT - (\bsT\bss)_2\bse^\t_3(\bsr\cdot\cF)\bsT\\
             (\bsT\bss)_1\bse^\t_3(\bsr\cdot\cF)\bsT - (\bsT\bss)_3\bse^\t_1(\bsr\cdot\cF)\bsT\\
            (\bsT\bss)_2\bse^\t_1(\bsr\cdot\cF)\bsT - (\bsT\bss)_1\bse^\t_2(\bsr\cdot\cF)\bsT
          \end{array}
}+\Pa{\begin{array}{c}
        \Innerm{\bse_2}{\set{\bsr\cdot\cF,\bsT\bsT^\t}}{\bse_3} \\
        \Innerm{\bse_3}{\set{\bsr\cdot\cF,\bsT\bsT^\t}}{\bse_1} \\
        \Innerm{\bse_1}{\set{\bsr\cdot\cF,\bsT\bsT^\t}}{\bse_2}
      \end{array}
}\bss^\t\\
&&=[(\bsT\bss)\cdot\cF]^\t(\bsr\cdot\cF)\bsT+(\Inner{\bsT}{\bsT}\I_3-\bsT\bsT^\t)\out{\bsr}{\bss}\\
&&=\Innerm{\bsr}{\bsT}{\bss}\bsT - \out{\bsr}{\bss}\bsT^\t\bsT +
(\norm{\bsT}^2\I_3-\bsT\bsT^\t)\out{\bsr}{\bss}.
\end{eqnarray*}
Here $\Set{\bsA,\bsB}:=\bsA\bsB+\bsB\bsA$. Other items can be
checked by direct calculation. This completes the proof.
\end{proof}

\begin{lem}
For $\Omega$, defined in Eq.~\eqref{eq:Omegasymbol}, it holds that
\begin{enumerate}[(i)]
\item $\Omega(\bsA\widehat\bsT,\bsB) =\Omega(\bsA,\bsB\bsT^\t)\bsT$.
\item $\Omega(\widehat\bsT\bsA,\bsB) =\bsT\Omega(\bsA,\bsT^\t\bsB)$.
\end{enumerate}
\end{lem}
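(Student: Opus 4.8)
The plan is to test both identities against an arbitrary cross product $\bsu\times\bsv$, exploiting that such vectors already span $\real^3$; thus an equality of the form $\bsP(\bsu\times\bsv)=\bsQ(\bsu\times\bsv)$ valid for all $\bsu,\bsv$ forces the matrix identity $\bsP=\bsQ$. The two workhorses are Corollary~\ref{cor:B3b}, which rewrites $\Omega$ acting on a cross product as
\[
\Omega(\bsM,\bsN)(\bsu\times\bsv)=\bsM\bsu\times\bsN\bsv+\bsN\bsu\times\bsM\bsv,
\]
and Corollary~\ref{cor:B3a}(i), which pulls $\bsT$ through a cross product via $\bsT(\bsp\times\bsq)=\widehat\bsT\bsp\times(\bsT^\t)^{-1}\bsq$ for invertible $\bsT$. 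I will also use repeatedly the adjugate relations $\bsT^\t\widehat\bsT=\widehat\bsT\bsT^\t=\det(\bsT)\I_3$ and $\widehat\bsT=\det(\bsT)(\bsT^\t)^{-1}$, which are the transposes of $\bsM\bsM^*=\bsM^*\bsM=\det(\bsM)\I_3$ with $\bsM^*=\widehat\bsM^\t$.

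For (i), I first assume $\bsT$ invertible. Applying Corollary~\ref{cor:B3b} to the left-hand side gives $\Omega(\bsA\widehat\bsT,\bsB)(\bsu\times\bsv)=\bsA\widehat\bsT\bsu\times\bsB\bsv+\bsB\bsu\times\bsA\widehat\bsT\bsv$. On the right-hand side I rewrite $\bsT(\bsu\times\bsv)=\widehat\bsT\bsu\times(\bsT^\t)^{-1}\bsv$ by Corollary~\ref{cor:B3a}(i) and then expand $\Omega(\bsA,\bsB\bsT^\t)$ by Corollary~\ref{cor:B3b}. The first resulting term is $\bsA\widehat\bsT\bsu\times\bsB\bsT^\t(\bsT^\t)^{-1}\bsv=\bsA\widehat\bsT\bsu\times\bsB\bsv$, and the second is $\bsB\bsT^\t\widehat\bsT\bsu\times\bsA(\bsT^\t)^{-1}\bsv$; inserting $\bsT^\t\widehat\bsT=\det(\bsT)\I_3$ and then $\det(\bsT)(\bsT^\t)^{-1}=\widehat\bsT$ collapses it to $\bsB\bsu\times\bsA\widehat\bsT\bsv$. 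The two sides therefore agree on every $\bsu\times\bsv$, hence as matrices. Part (ii) is entirely parallel: expanding both sides on $\bsu\times\bsv$, pulling $\bsT$ through the cross product on the right by Corollary~\ref{cor:B3a}(i), and using $\widehat\bsT\bsT^\t=\det(\bsT)\I_3$ together with $\widehat\bsT=\det(\bsT)(\bsT^\t)^{-1}$, one finds both sides equal to $\widehat\bsT\bsA\bsu\times\bsB\bsv+\bsB\bsu\times\widehat\bsT\bsA\bsv$.

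Finally I remove the invertibility hypothesis by continuity: both sides of each identity are polynomial, hence continuous, in the entries of $\bsT,\bsA,\bsB$, since $\widehat\bsT$ is polynomial in $\bsT$ and $\Omega$ is bilinear; as the invertible matrices are dense in $\real^{3\times3}$, the identities established on that dense set extend to all $\bsT$. I expect the only genuine hazard to be bookkeeping: keeping the ordering of the factors inside each cross product straight and choosing the correct transpose in each adjugate relation. Once $\bsT(\bsp\times\bsq)=\widehat\bsT\bsp\times(\bsT^\t)^{-1}\bsq$ is applied with the right $\bsp,\bsq$, the cancellations $\bsT^\t(\bsT^\t)^{-1}=\I_3$ and $\bsT^\t\widehat\bsT=\det(\bsT)\I_3$ do all the remaining work, so no term-by-term trace computation (as in Proposition~\ref{prop:Omega}) is needed.
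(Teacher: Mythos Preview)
Your proof is correct and takes a genuinely different route from the paper's. The paper proves both items by testing the identity on rank-one matrices $\bsB=\out{\bsa}{\bsb}$, invoking the formula $\Omega(\bsM,\out{\bsa}{\bsb})=(\bsa\cdot\cF)\bsM(\bsb\cdot\cF)^\t$ from the preceding lemma together with Corollary~\ref{cor:B3c} (namely $\widehat\bsT(\bsb\cdot\cF)=[(\bsT\bsb)\cdot\cF]\bsT$ and $(\bsa\cdot\cF)\widehat\bsT=\bsT[(\bsT^\t\bsa)\cdot\cF]$), and then extends by bilinearity in $\bsB$. You instead test on input vectors of the form $\bsu\times\bsv$, use Corollary~\ref{cor:B3b} to unfold $\Omega$ as a symmetrized cross product, and pull $\bsT$ through via Corollary~\ref{cor:B3a}(i). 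Your approach is perhaps more geometric, working directly with cross products rather than the $\cF$-matrix calculus; the price is an explicit density/continuity step for non-invertible $\bsT$, whereas the paper's route needs no such step here because the continuity argument was already absorbed into the proof of Corollary~\ref{cor:B3c}. Both arguments are short and rely only on tools already in hand.
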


\begin{proof}
For the first item, note that
\begin{eqnarray*}
&&\Omega(\bsA\widehat\bsT,\out{\bsa}{\bsb}) =
(\bsa\cdot\cF)\bsA\widehat\bsT(\bsb\cdot\cF)^\t =
(\bsa\cdot\cF)\bsA(\bsT\bsb\cdot\cF)^\t\bsT\\
&&=\Omega(\bsA,\out{\bsa}{\bsb}\bsT^\t)\bsT,
\end{eqnarray*}
implying that
\begin{eqnarray*}
\Omega(\bsA\widehat\bsT,\bsB) =\Omega(\bsA,\bsB\bsT^\t)\bsT.
\end{eqnarray*}
For the second item, we see that
\begin{eqnarray*}
&&\Omega(\widehat\bsT\bsA,\out{\bsa}{\bsb}) =
(\bsa\cdot\cF)\widehat\bsT\bsA(\bsb\cdot\cF)^\t =
\bsT(\bsT^\t\bsa\cdot\cF)\bsA(\bsb\cdot\cF)^\t\\
&&=\bsT\Omega(\bsA,\bsT^\t\out{\bsa}{\bsb}),
\end{eqnarray*}
implying that $\Omega(\widehat\bsT\bsA,\bsB)
=\bsT\Omega(\bsA,\bsT^\t\bsB)$.
\end{proof}

\subsubsection{Recurrence relation for the matrix power}

Let $\bsX^1\approx
\Pa{t^{(1)},\bsr^{(1)},\bss^{(1)},\bsT^{(1)}}=(t,\bsr,\bss,\bsT)$
and $\bsX^k\approx \Pa{t^{(k)},\bsr^{(k)},\bss^{(k)},\bsT^{(k)}}$,
i.e.,
\begin{eqnarray}
\bsX^k =t^{(k)}\I_4
+\bsr^{(k)}\cdot\bsigma\ot\I_2+\I_2\ot\bss^{(k)}\cdot\bsigma
+\sum^3_{i,j=1}t^{(k)}_{ij}\sigma_i\ot\sigma_j\quad(k\geqslant1),
\end{eqnarray}
where $\bsT^{(k)}:=\Pa{t^{(k)}_{ij}}_{3\times 3}$. By
Lemma~\ref{lem:A1}, we get that
\begin{cor}\label{cor:nthpower}
The recurrence relations of coefficients between
$\bsX^{k+1}=\bsX^k\bsX\approx
\Pa{t^{(k+1)},\bsr^{(k+1)},\bss^{(k+1)},\bsT^{(k+1)}}$ and
$\bsX^k\approx \Pa{t^{(k)},\bsr^{(k)},\bss^{(k)},\bsT^{(k)}}$ can be
identified as:
\begin{eqnarray}
\begin{cases}
t^{(k+1)} &= t^{(k)}t+\Inner{\bsr^{(k)}}{\bsr}+\Inner{\bss^{(k)}}{\bss}+\Inner{\bsT^{(k)}}{\bsT}\\
\bsr^{(k+1)} &= t\bsr^{(k)}+t^{(k)}\bsr+
\bsT\bss^{(k)}+\bsT^{(k)}\bss\\
\bss^{(k+1)} &= t\bss^{(k)}+t^{(k)}\bss+
\bsT^\t\bsr^{(k)}+{\bsT^{(k)}}^\t\bsr\\
\bsT^{(k+1)}
&=\out{\bsr^{(k)}}{\bss}+\out{\bsr}{\bss^{(k)}}+t\bsT^{(k)}+t^{(k)}\bsT
- \Omega(\bsT^{(k)},\bsT)
\end{cases}
\end{eqnarray}
where $k\geqslant1$.
\end{cor}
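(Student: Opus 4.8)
The plan is to obtain the recurrence by specializing the product formula of Lemma~\ref{lem:A1} to the case in which both factors are powers of the same observable $\bsX$. Concretely, I would take the left factor in $\tilde\bsX=\bsX\bsX'$ to be $\bsX^k$, with parameters $(t^{(k)},\bsr^{(k)},\bss^{(k)},\bsT^{(k)})$, and the right factor $\bsX'$ to be $\bsX$ itself, with parameters $(t,\bsr,\bss,\bsT)$, so that $\tilde\bsX=\bsX^k\bsX=\bsX^{k+1}$. Substituting $(t,\bsr,\bss,\bsT)\mapsto(t^{(k)},\bsr^{(k)},\bss^{(k)},\bsT^{(k)})$ for the unprimed arguments and $(t',\bsr',\bss',\bsT')\mapsto(t,\bsr,\bss,\bsT)$ for the primed ones in Lemma~\ref{lem:A1}, the four coefficient tuples of $\bsX^{k+1}$ each appear as a real expression plus an explicitly imaginary correction. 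The real parts reproduce the four lines of the claimed recurrence verbatim: the scalar $t^{(k+1)}$ as the sum of $t^{(k)}t$ with the three inner products, the vectors $\bsr^{(k+1)},\bss^{(k+1)}$, and the matrix $\bsT^{(k+1)}=\out{\bsr^{(k)}}{\bss}+\out{\bsr}{\bss^{(k)}}+t\bsT^{(k)}+t^{(k)}\bsT-\Omega(\bsT^{(k)},\bsT)$.

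It then remains to show that the imaginary corrections vanish. The cleanest argument is that $\bsX$ is an observable, hence Hermitian, so every power $\bsX^{k+1}$ is Hermitian as well; since the parametrization $(t,\bsr,\bss,\bsT)$ with real entries is exactly the Pauli-basis expansion of a Hermitian two-qubit operator, all coefficients of $\bsX^{k+1}$ are forced to be real, and the purely imaginary terms in Lemma~\ref{lem:A1} must therefore be zero. Equivalently, and more explicitly, one may observe that $\bsX^k$ and $\bsX$ commute, so $[\bsX^k,\bsX]=\zero$; by Corollary~\ref{cor:B1} the vanishing of this commutator is exactly equivalent to the three identities $\bsr^{(k)}\times\bsr+\sum_i\bsT^{(k)}\bse_i\times\bsT\bse_i=\zero$, its transpose analogue, and $\Psi(\bsr^{(k)},\bsT,\bss^{(k)})=\Psi(\bsr,\bsT^{(k)},\bss)$, which are precisely the imaginary parts appearing in the specialized Lemma~\ref{lem:A1}. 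Either route removes the imaginary contributions and leaves the asserted recurrence.

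I do not expect a substantive obstacle here, since the result is a direct corollary of two already-established facts. The only care required is bookkeeping in the substitution: one must keep the argument order $\Omega(\bsT^{(k)},\bsT)$ correct, match the mixed outer products $\out{\bsr^{(k)}}{\bss}$ and $\out{\bsr}{\bss^{(k)}}$ to the right primed/unprimed labels, and confirm that the Hilbert--Schmidt inner product $\Inner{\bsT^{(k)}}{\bsT}$ in the scalar line arises from $\inner{\bsT}{\bsT'}$ under the substitution. I would verify each of the four lines against Lemma~\ref{lem:A1} separately before concluding.
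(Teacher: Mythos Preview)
Your proposal is correct and follows essentially the same route as the paper: specialize Lemma~\ref{lem:A1} to the product $\bsX^k\bsX$ and kill the imaginary parts via $[\bsX^k,\bsX]=0$ together with Corollary~\ref{cor:B1}. Your alternative Hermiticity argument for the vanishing of the imaginary terms is a pleasant bonus that the paper does not mention.
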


\begin{proof}
Using Corollary~\ref{cor:B1}, we see that $[\bsX^k,\bsX]=0$ if and
only if
\begin{eqnarray*}
\bsr^{(k)}\times\bsr+\sum^3_{i=1}\bsT^{(k)}\bse_i\times\bsT\bse_i&=&\zero,\\
\bss^{(k)}\times\bss+\sum^3_{i=1}{\bsT^{(k)}}^\t\bse_i\times\bsT^\t\bse_i&=&\zero,\\
\Psi(\bsr^{(k)},\bsT,\bss^{(k)})&=&\Psi(\bsr,\bsT^{(k)},\bss).
\end{eqnarray*}
The recurrence relation is obtained immediately.
\end{proof}

Using the previous results, we can list here the coefficients of
$\bsX^k(1\leqslant k\leqslant4)$ below:
\begin{enumerate}[(a)]
\item For $k=2$,
$\bsX^2\approx\Pa{t^{(2)},\bsr^{(2)},\bss^{(2)},\bsT^{(2)}}$ can be
identified as
\begin{eqnarray*}
\begin{cases}
t^{(2)} &= t^2+\abs{\bsr}^2+\abs{\bss}^2+\norm{\bsT}^2,\\
\bsr^{(2)} &= 2t\cdot\bsr + 2\bsT\bss,\\
\bss^{(2)} &= 2t\cdot\bss + 2\bsT^\t\bsr,\\
\bsT^{(2)} &= 2t\cdot\bsT+2\out{\bsr}{\bss}-2\widehat\bsT.
\end{cases}
\end{eqnarray*}
\item For $k=3$,
$\bsX^3\approx\Pa{t^{(3)},\bsr^{(3)},\bss^{(3)},\bsT^{(3)}}$ can be
identified as
\begin{eqnarray*}
\begin{cases}
t^{(3)} &= t^3+3t(\abs{\bsr}^2+\abs{\bss}^2+\norm{\bsT}^2)+6(\Innerm{\bsr}{\bsT}{\bss}-\det(\bsT)),\\
\bsr^{(3)} &=
\Pa{3t^2+\abs{\bsr}^2+3\abs{\bss}^2+\norm{\bsT}^2}\bsr+2\bsT\bsT^\t\bsr+6t\bsT\bss-2\widehat\bsT\bss,\\
\bss^{(3)} &=
\Pa{3t^2+3\abs{\bsr}^2+\abs{\bss}^2+\norm{\bsT}^2}\bss+2\bsT^\t\bsT\bss+6t\bsT^\t\bsr-2\widehat\bsT^\t\bsr,\\
\bsT^{(3)}
&=\Pa{3t^2+\abs{\bsr}^2+\abs{\bss}^2+3\norm{\bsT}^2}\bsT+6t(\out{\bsr}{\bss}-\widehat\bsT)\\
&~~~+2\Pa{\proj{\bsr}\bsT+\bsT\proj{\bss}-\bsT\bsT^\t\bsT-\Omega(\bsT,\out{\bsr}{\bss})}.
\end{cases}
\end{eqnarray*}
\item For $k=4$,
$\bsX^4\approx\Pa{t^{(4)},\bsr^{(4)},\bss^{(4)},\bsT^{(4)}}$ can be
identified as
\begin{eqnarray*}
\begin{cases}
t^{(4)} &= t^4+\abs{\bsr}^4+\abs{\bss}^4+\norm{\bsT}^4+6\abs{\bsr}^2\abs{\bss}^2+6t^2(\abs{\bsr}^2+\abs{\bss}^2+\norm{\bsT}^2)\\
&~~~+2(\abs{\bsr}^2+\abs{\bss}^2)\norm{\bsT}^2+4(\Innerm{\bsr}{\bsT\bsT^\t}{\bsr}+\Innerm{\bss}{\bsT^\t\bsT}{\bss}+\Inner{\widehat\bsT}{\widehat\bsT})\\
&~~~+24t(\Innerm{\bsr}{\bsT}{\bss}-\det(\bsT))-8\Innerm{\bsr}{\widehat\bsT}{\bss},\\
\bsr^{(4)} &=4\Big[\Pa{t(t^2+\abs{\bsr}^2+3\abs{\bss}^2+\norm{\bsT}^2)+2\Innerm{\bsr}{\bsT}{\bss}-2\det(\bsT)}\bsr\\
&~~~+(3t^2+\abs{\bsr}^2+\abs{\bss}^2+\norm{\bsT}^2)\bsT\bss+2t\bsT\bsT^\t\bsr-2t\widehat\bsT\bss\Big],\\
\bss^{(4)} &=4\Big[\Pa{t(t^2+3\abs{\bsr}^2+\abs{\bss}^2+\norm{\bsT}^2)+2\Innerm{\bsr}{\bsT}{\bss}-2\det(\bsT)}\bss\\
&~~~+(3t^2+\abs{\bsr}^2+\abs{\bss}^2+\norm{\bsT}^2)\bsT^\t\bsr+2t\bsT^\t\bsT\bss-2t\widehat\bsT^\t\bsr\Big],\\
\bsT^{(4)}&=4\Big[\Pa{t(t^2+\abs{\bsr}^2+\abs{\bss}^2+3\norm{\bsT}^2)+2\Innerm{\bsr}{\bsT}{\bss}-2\det(\bsT)}\bsT\\
&~~~+(3t^2+\abs{\bsr}^2+\abs{\bss}^2+\norm{\bsT}^2)(\out{\bsr}{\bss}-\widehat\bsT)
\\
&~~~+2t\Pa{\proj{\bsr}\bsT+\bsT\proj{\bss}-\bsT\bsT^\t\bsT-\Omega(\bsT,\out{\bsr}{\bss})}\Big]
\end{cases}
\end{eqnarray*}
\end{enumerate}
For instance, we give the details in calculating $\bsT^{(4)}$:
\begin{eqnarray*}
\bsT^{(4)} &=& \out{\bsr^{(3)}}{\bss} + \out{\bsr}{\bss^{(3)}} +
t\bsT^{(3)} + t^{(3)}\bsT - \Omega(\bsT^{(3)},\bsT).
\end{eqnarray*}
In what follows, we calculate
it term by term:
\begin{enumerate}[(i)]
\item $\out{\bsr^{(3)}}{\bss} =
(3t^2+\abs{\bsr}^2+3\abs{\bss}^2+\norm{\bsT}^2)\out{\bsr}{\bss}+2\bsT\bsT^\t\out{\bsr}{\bss}+2(3t\bsT-\widehat\bsT)\proj{\bss}$
\item $\out{\bsr}{\bss^{(3)}} =(3t^2+3\abs{\bsr}^2+\abs{\bss}^2+\norm{\bsT}^2)\out{\bsr}{\bss}+2\out{\bsr}{\bss}\bsT^\t\bsT+2\proj{\bsr}(3t\bsT-\widehat\bsT)$
\item $t\bsT^{(3)} = 2t(\proj{\bsr}\bsT+\bsT\proj{\bss})+t(3t^2+\abs{\bsr}^2+\abs{\bss}^2+3\norm{\bsT}^2)\bsT + 6t^2(\out{\bsr}{\bss}-\widehat\bsT)-2t[(\bsr\cdot\cF)\bsT(\bss\cdot\cF)^\t+\bsT\bsT^\t\bsT]$
\item $t^{(3)}\bsT = \Br{t^3+3t(\abs{\bsr}^2+\abs{\bss}^2+\norm{\bsT}^2)+6(\Innerm{\bsr}{\bsT}{\bss}-\det(\bsT))}\bsT$
\item Now we calculate $\Omega(\bsT^{(3)},\bsT)$. Indeed,
\begin{eqnarray*}
\Omega(\bsT^{(3)},\bsT) &=&
2[\Omega(\proj{\bsr}\bsT,\bsT)+\Omega(\bsT\proj{\bss},\bsT)]+(3t^2+\abs{\bsr}^2+\abs{\bss}^2+3\norm{\bsT}^2)\Omega(\bsT,\bsT)\\
&& 6t[\Omega(\out{\bsr}{\bss},\bsT) -
\Omega(\widehat\bsT,\bsT)]-2[\Omega((\bsr\cdot\cF)\bsT(\bss\cdot\cF)^\t,\bsT)+\Omega(\bsT\bsT^\t\bsT,\bsT)]\\
&=&
2[(\bsr\cdot\cF)\bsT((\bsT^\t\bsr)\cdot\cF)^\t+((\bsT\bss)\cdot\cF)\bsT(\bss\cdot\cF)^\t]+2(3t^2+\abs{\bsr}^2+\abs{\bss}^2+3\norm{\bsT}^2)\widehat\bsT\\
&&+6t[(\bsr\cdot\cF)\bsT(\bss\cdot\cF)^\t -
\Inner{\bsT}{\bsT}\bsT+\bsT\bsT^\t\bsT]\\
&&-2[\Innerm{\bsr}{\bsT}{\bss}\bsT - \out{\bsr}{\bss}\bsT^\t\bsT +
(\norm{\bsT}^2\I_3-\bsT\bsT^\t)\out{\bsr}{\bss}]-2[\Inner{\bsT}{\bsT}\widehat\bsT-
\det(\bsT)\bsT].
\end{eqnarray*}
\end{enumerate}
Thus
\begin{eqnarray*}
\bsT^{(4)}
&=&4\Big[\Pa{t^3+t(\abs{\bsr}^2+\abs{\bss}^2+3\norm{\bsT}^2)+2\Innerm{\bsr}{\bsT}{\bss}-2\det(\bsT)}\bsT
\\
&&
+(3t^2+\abs{\bsr}^2+\abs{\bss}^2+\norm{\bsT}^2)(\out{\bsr}{\bss}-\widehat\bsT)
\\
&&+2t\Pa{\proj{\bsr}\bsT+\bsT\proj{\bss}-\bsT\bsT^\t\bsT-(\bsr\cdot\cF)\bsT(\bss\cdot\cF)^\t}\Big]\\
&=&4\Big[\Pa{t^3+t(\abs{\bsr}^2+\abs{\bss}^2+3\norm{\bsT}^2)+2\Innerm{\bsr}{\bsT}{\bss}-2\det(\bsT)}\bsT
\\
&&
+(3t^2+\abs{\bsr}^2+\abs{\bss}^2+\norm{\bsT}^2)(\out{\bsr}{\bss}-\widehat\bsT)
\\
&&+2t\Pa{\proj{\bsr}\bsT+\bsT\proj{\bss}-\bsT\bsT^\t\bsT-\Omega(\bsT,\out{\bsr}{\bss})}\Big].
\end{eqnarray*}

\subsection{Some results about products involved two-qubit states}\label{sub:2}

We have already known that
\begin{eqnarray}
\bsX^k&\approx& \Pa{t^{(k)},\bsr^{(k)},\bss^{(k)},\bsT^{(k)}},\\
\rho_A\ot\I_B&\approx& \Pa{\frac12,\frac{\bsa}2,\zero,\zero},\\
\I_A\ot\rho_B&\approx& \Pa{\frac12,\zero,\frac{\bsb}2,\zero},\\
\rho_A\ot\rho_B&\approx&
\Pa{\frac14,\frac{\bsa}4,\frac{\bsb}4,\frac{\out{\bsa}{\bsb}}4}
\end{eqnarray}

\begin{prop}
Let $\rho^k_{AB}\approx
\frac1{4^k}\Pa{c^{(k)},\bsx^{(k)},\bsy^{(k)},\bsZ^{(k)}}$, where
$k=2,3,4$. We have the following results:
\begin{enumerate}[(i)]
\item For $k=2$,
\begin{eqnarray}
\begin{cases}
c^{(2)} &= 1+\abs{\bsa}^2+\abs{\bsb}^2+\norm{\bsC}^2,\\
\bsx^{(2)}&=2\bsa+2\bsC\bsb,\\
\bsy^{(2)}&=2\bsb+2\bsC^\t\bsa,\\
\bsZ^{(2)}&=2\Pa{\bsC+\out{\bsa}{\bsb}-\widehat\bsC}.
\end{cases}
\end{eqnarray}
\item For $k=3$,
\begin{eqnarray}
\begin{cases}
c^{(3)} &= 1+3\Pa{\abs{\bsa}^2+\abs{\bsb}^2+\norm{\bsC}^2}+6\Pa{\Innerm{\bsa}{\bsC}{\bsb}-\det(\bsC)},\\
\bsx^{(3)}&=\Pa{3+\abs{\bsa}^2+3\abs{\bsb}^2+\norm{\bsC}^2}\bsa+2\bsC\bsC^\t\bsa+6\bsC\bsb-2\widehat\bsC\bsb,\\
\bsy^{(3)}&=\Pa{3+3\abs{\bsa}^2+\abs{\bsb}^2+\norm{\bsC}^2}\bsb+2\bsC^\t\bsC\bsb+6\bsC^\t\bsa-2{\widehat\bsC}^\t\bsa,\\
\bsZ^{(3)}&=\Pa{3+\abs{\bsa}^2+\abs{\bsb}^2+3\norm{\bsC}^2}\bsC+6\Pa{\out{\bsa}{\bsb}-\widehat\bsC}\\
&~~~+2\Pa{\proj{\bsa}\bsC+\bsC\proj{\bsb}-\bsC\bsC^\t\bsC-\Omega(\bsC,\out{\bsa}{\bsb})}.
\end{cases}
\end{eqnarray}
\item For $k=4$,
\begin{eqnarray}
\begin{cases}
c^{(4)} &=
1+6\Pa{\abs{\bsa}^2+\abs{\bsb}^2+\abs{\bsa}^2\abs{\bsb}^2}+\abs{\bsa}^4+\abs{\bsb}^4+\norm{\bsC}^4+24\Innerm{\bsa}{\bsC}{\bsb}\\
&~~~+2\norm{\bsC}^2\Pa{3+\abs{\bsa}^2+\abs{\bsb}^2}
+4\Innerm{\bsa}{\bsC\bsC^\t}{\bsa}+4\Innerm{\bsb}{\bsC^\t\bsC}{\bsb}+4\abs{\widehat\bsC}^2\\
&~~~-8\Innerm{\bsa}{\widehat\bsC}{\bsb}-24\det(\bsC),\\
\bsx^{(4)}&=4\Pa{1+\abs{\bsa}^2+3\abs{\bsb}^2+\norm{\bsC}^2+2\Innerm{\bsa}{\bsC}{\bsb}-2\det(\bsC)}\bsa+8\bsC\bsC^\t\bsa\\
&~~~+4\Pa{3+\abs{\bsa}^2+\abs{\bsb}^2+\norm{\bsC}^2}\bsC\bsb-8\widehat\bsC\bsb,\\
\bsy^{(4)}&=4\Pa{1+3\abs{\bsa}^2+\abs{\bsb}^2+\norm{\bsC}^2+2\Innerm{\bsa}{\bsC}{\bsb}-2\det(\bsC)}\bsb+8\bsC^\t\bsC\bsb\\
&~~~+4\Pa{3+\abs{\bsa}^2+\abs{\bsb}^2+\norm{\bsC}^2}\bsC^\t\bsa-8\widehat\bsC^\t\bsa,\\
\bsZ^{(4)}&=4\Pa{1+\abs{\bsa}^2+\abs{\bsb}^2+3\norm{\bsC}^2+2\Innerm{\bsa}{\bsC}{\bsb}
-2\det(\bsC)}\bsC\\
&~~~+4\Pa{3+\abs{\bsa}^2+\abs{\bsb}^2+\norm{\bsC}^2}(\out{\bsa}{\bsb}-\widehat\bsC)\\
&~~~+8\Pa{\proj{\bsa}\bsC+\bsC\proj{\bsb}-\bsC\bsC^\t\bsC-\Omega(\bsC,\out{\bsa}{\bsb})}.
\end{cases}
\end{eqnarray}
\end{enumerate}
\end{prop}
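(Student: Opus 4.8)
The plan is to reduce the proposition to a pure substitution into the general power formulae for $\bsX^k$ already derived above, exploiting the fact that the Fano parametrization is linear. Concretely, the map $\bsX\mapsto(t,\bsr,\bss,\bsT)$ is $\real$-linear in $\bsX$, since its coefficients are recovered by $t=\frac14\Tr{\bsX}$, $r_i=\frac14\Tr{\bsX(\sigma_i\ot\I_2)}$, $s_j=\frac14\Tr{\bsX(\I_2\ot\sigma_j)}$ and $t_{ij}=\frac14\Tr{\bsX(\sigma_i\ot\sigma_j)}$, exactly as used in the proof of Lemma~\ref{lem:A1}. Hence $\lambda\bsX\approx(\lambda t,\lambda\bsr,\lambda\bss,\lambda\bsT)$, and because $(\lambda\bsX)^k=\lambda^k\bsX^k$, the parameter vector of $\bsX^k$ is homogeneous of degree $k$ in $(t,\bsr,\bss,\bsT)$.

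First I would write $\rho_{AB}=\frac14\bsX_0$, where $\bsX_0\approx(1,\bsa,\bsb,\bsC)$ is the unnormalized Fano form whose Bloch vectors $\bsa,\bsb$ are fixed by the marginals and whose correlation matrix is $\bsC$. Then $\rho^k_{AB}=4^{-k}\bsX_0^k$, so the parameter vector of $\rho^k_{AB}$ is precisely $4^{-k}$ times that of $\bsX_0^k$; this accounts for the prefactor $\frac1{4^k}$ in the statement. It therefore remains to identify $\bsX_0^k\approx(c^{(k)},\bsx^{(k)},\bsy^{(k)},\bsZ^{(k)})$, and by the homogeneity above this is obtained from the explicit formulae for $\bsX^k$ listed earlier (themselves generated by the recurrence of Corollary~\ref{cor:nthpower}) simply by setting $(t,\bsr,\bss,\bsT)=(1,\bsa,\bsb,\bsC)$.

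Next I would carry out this substitution for $k=2,3,4$, relabeling $t^{(k)}\to c^{(k)}$, $\bsr^{(k)}\to\bsx^{(k)}$, $\bss^{(k)}\to\bsy^{(k)}$, $\bsT^{(k)}\to\bsZ^{(k)}$. For instance $t^{(2)}=t^2+\abs{\bsr}^2+\abs{\bss}^2+\norm{\bsT}^2$ becomes $c^{(2)}=1+\abs{\bsa}^2+\abs{\bsb}^2+\norm{\bsC}^2$, and $\bsT^{(2)}=2t\bsT+2\out{\bsr}{\bss}-2\widehat\bsT$ becomes $\bsZ^{(2)}=2(\bsC+\out{\bsa}{\bsb}-\widehat\bsC)$, with the $k=3,4$ cases following in the same way. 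The only step that is not purely mechanical is the reduction of the $\Omega$-contributions appearing in $\bsZ^{(3)}$ and $\bsZ^{(4)}$ to the displayed closed form, which I would handle using the $\Omega$-identities already established, in particular $\Omega(\bsT,\bsT)=2\widehat\bsT$ from Lemma~\ref{lem:A3} together with the formula for $\Omega(\bsT,\out{\bsr}{\bss})$ and the adjugate/cofactor relations of Corollary~\ref{lem:adjugate}.

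I expect no genuine obstacle here, as the proposition is a specialization of the general power formulae rather than a new computation. The operator identity $(\lambda\bsX)^k=\lambda^k\bsX^k$ guarantees that the prefactor $4^{-k}$ factors out cleanly, and a useful internal consistency check is that each listed expression for $t^{(k)},\bsr^{(k)},\bss^{(k)},\bsT^{(k)}$ is homogeneous of total degree $k$ once one assigns degree $1$ to $t,\bsr,\bss,\bsT$, degree $2$ to the cofactor matrix $\widehat\bsT$ and to the bilinear map $\Omega$, and degree $3$ to $\det(\bsT)$. The main care required is the bookkeeping in the degree-three and degree-four terms where $\widehat\bsC$, $\det(\bsC)$, and $\Omega$ interact.
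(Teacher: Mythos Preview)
Your proposal is correct and matches the paper's own proof, which simply says to substitute $(t,\bsr,\bss,\bsT)=\tfrac14(1,\bsa,\bsb,\bsC)$ into the already-derived expressions for $\bsX^k$ coming from Corollary~\ref{cor:nthpower}; your use of homogeneity to factor out $4^{-k}$ is just a rephrasing of the same specialization. One small remark: the reduction of the $\Omega$-terms you flag as ``not purely mechanical'' is in fact already absorbed into the general formulae for $\bsT^{(3)}$ and $\bsT^{(4)}$, which retain $\Omega(\bsT,\out{\bsr}{\bss})$ explicitly, so after renaming $(\bsr,\bss,\bsT)\to(\bsa,\bsb,\bsC)$ nothing further needs to be simplified.
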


\begin{proof}
The proof follows immediately when we let
$$
(t,\bsr,\bss,\bsT)=\frac14\Pa{1,\bsa,\bsb,\bsC}
$$
in Corollary~\ref{cor:nthpower}.
\end{proof}

\begin{prop}
Let
\begin{eqnarray*}
\bsX^k(\rho_A\ot\I_B)\approx \frac12(\tilde c^{(k)}_A,\tilde
\bsx^{(k)}_A,\tilde\bsy^{(k)}_A,\tilde \bsZ^{(k)}_A),\\
\bsX^k(\I_A\ot\rho_B)\approx \frac12(\tilde c^{(k)}_B,\tilde
\bsx^{(k)}_B,\tilde\bsy^{(k)}_B,\tilde \bsZ^{(k)}_B),\\
\bsX^k(\rho_A\ot\rho_B)\approx \frac14(\tilde c^{(k)}_{AB},\tilde
\bsx^{(k)}_{AB},\tilde\bsy^{(k)}_{AB},\tilde \bsZ^{(k)}_{AB}).
\end{eqnarray*}
We have the following results:
\begin{enumerate}[(i)]
\item $\bsX^k(\rho_A\ot\I_B)\approx \frac12(\tilde c^{(k)}_A,\tilde
\bsx^{(k)}_A,\tilde\bsy^{(k)}_A,\tilde \bsZ^{(k)}_A)$ is determined
by
\begin{eqnarray}
\begin{cases}
\tilde c^{(k)}_A &= t^{(k)}+\Inner{\bsr^{(k)}}{\bsa},\\
\tilde \bsx^{(k)}_A&= \bsr^{(k)}+t^{(k)}\bsa+\mathrm{i}\bsr^{(k)}\times\bsa,\\
\tilde\bsy^{(k)}_A&= \bss^{(k)}+{\bsT^{(k)}}^\t\bsa,\\
\tilde
\bsZ^{(k)}_A&=\out{\bsa}{\bss^{(k)}}+\bsT^{(k)}-\mathrm{i}(\bsa\cdot\cF)^\t\bsT^{(k)},
\end{cases}
\end{eqnarray}
\item $\bsX^k(\I_A\ot\rho_B)\approx \frac12(\tilde c^{(k)}_B,\tilde
\bsx^{(k)}_B,\tilde\bsy^{(k)}_B,\tilde \bsZ^{(k)}_B)$ is determined
by
\begin{eqnarray}
\begin{cases}
\tilde c^{(k)}_B &= t^{(k)}+\Inner{\bss^{(k)}}{\bsb},\\
\tilde \bsx^{(k)}_B&= \bsr^{(k)}+\bsT^{(k)}\bsb,\\
\tilde\bsy^{(k)}_B&= \bss^{(k)}+t^{(k)}\bsb+\mathrm{i}\bss^{(k)}\times\bsb,\\
\tilde
\bsZ^{(k)}_B&=\out{\bsr^{(k)}}{\bsb}+\bsT^{(k)}-\mathrm{i}\bsT^{(k)}(\bsb\cdot\cF),
\end{cases}
\end{eqnarray}
\item $\bsX^k(\rho_A\ot\rho_B)\approx \frac14(\tilde c^{(k)}_{AB},\tilde
\bsx^{(k)}_{AB},\tilde\bsy^{(k)}_{AB},\tilde \bsZ^{(k)}_{AB})$ is
determined by
\begin{eqnarray}
\begin{cases}
\tilde c^{(k)}_{AB} &= t^{(k)}+\Inner{\bsr^{(k)}}{\bsa}+\Inner{\bss^{(k)}}{\bsb}+\Innerm{\bsa}{\bsT^{(k)}}{\bsb},\\
\tilde \bsx^{(k)}_{AB}&= \bsr^{(k)}+(t^{(k)}+\Inner{\bss^{(k)}}{\bsb})\bsa+\bsT^{(k)}\bsb + \mathrm{i}(\bsr^{(k)}\times\bsa+\bsT^{(k)}\bsb\times\bsa),\\
\tilde\bsy^{(k)}_{AB}&= \bss^{(k)}+(t^{(k)}+\Inner{\bsr^{(k)}}{\bsa})\bsb + {\bsT^{(k)}}^\t\bsa + \mathrm{i}(\bss^{(k)}\times\bsb+{\bsT^{(k)}}^\t\bsa\times\bsb),\\
\tilde
\bsZ^{(k)}_{AB}&=\out{\bsr^{(k)}}{\bsb}+\out{\bsa}{\bss^{(k)}}+t^{(k)}\out{\bsa}{\bsb}+\bsT^{(k)}-
\Omega(\bsT^{(k)},\out{\bsa}{\bsb})\\
&~~~+\mathrm{i}\Pa{\Psi(\bsr^{(k)},\out{\bsa}{\bsb},\bss^{(k)})-\Psi(\bsa,\bsT^{(k)},\bsb)}.
\end{cases}
\end{eqnarray}
\end{enumerate}
\end{prop}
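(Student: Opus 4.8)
The plan is to obtain all three families of coefficients by a single, direct application of the product formula for two-qubit observables (Lemma~\ref{lem:A1}), taking the first factor to be $\bsX^k\approx(t^{(k)},\bsr^{(k)},\bss^{(k)},\bsT^{(k)})$ and the second factor to be, in turn, each of the three observables $\rho_A\ot\I_B\approx(\frac12,\frac{\bsa}2,\zero,\zero)$, $\I_A\ot\rho_B\approx(\frac12,\zero,\frac{\bsb}2,\zero)$, and $\rho_A\ot\rho_B\approx(\frac14,\frac{\bsa}4,\frac{\bsb}4,\frac{\out{\bsa}{\bsb}}4)$. In each case the scalar prefactor ($\frac12$, $\frac12$, or $\frac14$) that is pulled out on the right-hand side is exactly the value of $t'$ for the second factor, so after substituting into the four formulae of Lemma~\ref{lem:A1} and factoring out that prefactor, the stated expressions for $\tilde c$, $\tilde\bsx$, $\tilde\bsy$, and $\tilde\bsZ$ drop out.

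For items (i) and (ii) the computation is short because the second factor has vanishing $\bsT'$. First I would substitute $t'=\frac12$, $\bsr'=\frac{\bsa}2$, $\bss'=\zero$, $\bsT'=\zero$ into Lemma~\ref{lem:A1}; every term containing $\bsT'$ or $\bss'$ disappears, the only surviving imaginary contribution in $\tilde\bsr$ is $\bsr^{(k)}\times\frac{\bsa}2$, and in $\tilde\bsT$ the only surviving $\Psi$ contribution is $-\Psi(\frac{\bsa}2,\bsT^{(k)},\zero)=-\frac12(\bsa\cdot\cF)^\t\bsT^{(k)}$ by the definition of $\Psi$. Dividing through by $\frac12$ yields item (i); item (ii) is the mirror computation with the roles of $(\bsr,\bsa)$ and $(\bss,\bsb)$ interchanged.

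Item (iii) is the substantive case, and the main obstacle. Here I would set $t'=\frac14$, $\bsr'=\frac{\bsa}4$, $\bss'=\frac{\bsb}4$, $\bsT'=\frac14\out{\bsa}{\bsb}$. The scalar $\tilde c^{(k)}_{AB}$ follows at once using $\Inner{\bsT^{(k)}}{\out{\bsa}{\bsb}}=\Innerm{\bsa}{\bsT^{(k)}}{\bsb}$. The delicate point is simplifying the two cross-product sums $\sum_i\bsT^{(k)}\bse_i\times\bsT'\bse_i$ and $\sum_i{\bsT^{(k)}}^\t\bse_i\times{\bsT'}^\t\bse_i$, which no longer vanish. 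Since $\bsT'$ is the rank-one matrix $\frac14\out{\bsa}{\bsb}$, I would invoke Proposition~\ref{prop:crossproperty}: with $\bsA=\bsT^{(k)}$ and $\bsB=\out{\bsa}{\bsb}$ one has $\bsA\bsB^\t=\out{\bsT^{(k)}\bsb}{\bsa}$, whence $\sum_i\bsT^{(k)}\bse_i\times\frac14\out{\bsa}{\bsb}\bse_i=\frac14(\bsT^{(k)}\bsb)\times\bsa$, and symmetrically $\sum_i{\bsT^{(k)}}^\t\bse_i\times\frac14\out{\bsb}{\bsa}\bse_i=\frac14({\bsT^{(k)}}^\t\bsa)\times\bsb$; these reproduce exactly the imaginary parts of $\tilde\bsx^{(k)}_{AB}$ and $\tilde\bsy^{(k)}_{AB}$. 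Finally, the matrix component $\tilde\bsZ^{(k)}_{AB}$ is read off directly: the outer products $\out{\bsr^{(k)}}{\bss'}$ and $\out{\bsr'}{\bss^{(k)}}$ become $\frac14\out{\bsr^{(k)}}{\bsb}$ and $\frac14\out{\bsa}{\bss^{(k)}}$, while the remaining terms are left unexpanded as $-\frac14\Omega(\bsT^{(k)},\out{\bsa}{\bsb})$ and $\frac{\mathrm{i}}4\Pa{\Psi(\bsr^{(k)},\out{\bsa}{\bsb},\bss^{(k)})-\Psi(\bsa,\bsT^{(k)},\bsb)}$, where I use the bilinearity of $\Omega$ and the linearity of $\Psi$ in each argument to extract the common factor $\frac14$. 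Dividing by $\frac14$ then completes item (iii), and hence the proposition.
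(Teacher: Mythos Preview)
Your proposal is correct and is exactly the approach implicit in the paper: the proposition is stated there without proof, as an immediate specialization of the product formula (Lemma~\ref{lem:A1}) to the three parametrizations $\rho_A\ot\I_B\approx(\tfrac12,\tfrac{\bsa}2,\zero,\zero)$, $\I_A\ot\rho_B\approx(\tfrac12,\zero,\tfrac{\bsb}2,\zero)$, $\rho_A\ot\rho_B\approx(\tfrac14,\tfrac{\bsa}4,\tfrac{\bsb}4,\tfrac{\out{\bsa}{\bsb}}4)$ recorded just before it. Your handling of the only nontrivial step in item~(iii), reducing the cross-product sums via the rank-one structure of $\bsT'=\tfrac14\out{\bsa}{\bsb}$, is correct (and can even be done directly without Proposition~\ref{prop:crossproperty}, since $\bsT'\bse_i=\tfrac14 b_i\bsa$).
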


\begin{prop}
Let
\begin{eqnarray*}
\rho^k_{AB}(\rho_A\ot\I_B)&\approx&\frac1{2\cdot 4^k}(c^{(k)}_A,\bsx^{(k)}_A,\bsy^{(k)}_A,\bsZ^{(k)}_A),\\
\rho^k_{AB}(\I_A\ot\rho_B)&\approx&\frac1{2\cdot 4^k}(c^{(k)}_B,\bsx^{(k)}_B,\bsy^{(k)}_B,\bsZ^{(k)}_B),\\
\rho^k_{AB}(\rho_A\ot\rho_B)&\approx&\frac1{4^{k+1}}(c^{(k)}_{AB},\bsx^{(k)}_{AB},\bsy^{(k)}_{AB},\bsZ^{(k)}_{AB}).
\end{eqnarray*}
Then we get the following statements:
\begin{enumerate}[(i)]
\item For $k=1$, it holds that
\begin{eqnarray*}
\begin{cases}
c^{(1)}_A &= 1+\abs{\bsa}^2,\\
\bsx^{(1)}_A &= 2\bsa,\\
\bsy^{(1)}_A &= \bsb+\bsC^\t\bsa,\\
\bsZ^{(1)}_A &=
\bsC+\out{\bsa}{\bsb}-\mathrm{i}(\bsa\cdot\cF)^\t\bsC,
\end{cases}\qquad
\begin{cases}
c^{(1)}_B &= 1+\abs{\bsb}^2,\\
\bsx^{(1)}_B &= \bsa+\bsC\bsb,\\
\bsy^{(1)}_B &= 2\bsb,\\
\bsZ^{(1)}_B &= \bsC+\out{\bsa}{\bsb}-\mathrm{i}\bsC(\bsb\cdot\cF).
\end{cases}
\end{eqnarray*}
and
\begin{eqnarray*}
\begin{cases}
c^{(1)}_{AB} &= 1+\abs{\bsa}^2+\abs{\bsb}^2+\Innerm{\bsa}{\bsC}{\bsb}\\
\bsx^{(1)}_{AB} &= (2+\abs{\bsb}^2)\bsa+\bsC\bsb+\mathrm{i}\bsC\bsb\times\bsa\\
\bsy^{(1)}_{AB} &= (2+\abs{\bsa}^2)\bsb+\bsC^\t\bsa+\mathrm{i}\bsC^\t\bsa\times\bsb\\
\bsZ^{(1)}_{AB} &=
\bsC+3\out{\bsa}{\bsb}-\Omega(\bsC,\out{\bsa}{\bsb})-\mathrm{i}\Psi(\bsa,\bsC,\bsb).
\end{cases}
\end{eqnarray*}
\item For $k=2$, it holds that
\begin{eqnarray*}
\begin{cases}
c^{(2)}_A &= 1+3\abs{\bsa}^2+\abs{\bsb}^2+\norm{\bsC}^2+2\Innerm{\bsa}{\bsC}{\bsb},\\
\bsx^{(2)}_A &= (3+\abs{\bsa}^2+\abs{\bsb}^2+\norm{\bsC}^2)\bsa+2\bsC\bsb+2\mathrm{i}\bsC\bsb\times\bsa,\\
\bsy^{(2)}_A &= 2(1+\abs{\bsa}^2)\bsb+4\bsC^\t\bsa-2\widehat\bsC^\t\bsa,\\
\bsZ^{(2)}_A &=2(
\bsC-\widehat\bsC)+2\proj{\bsa}\bsC+4\out{\bsa}{\bsb}-2\mathrm{i}(\bsa\cdot\cF)^\t(\bsC-\widehat\bsC),
\end{cases}
\end{eqnarray*}
\begin{eqnarray*}
\begin{cases}
c^{(2)}_B &= 1+\abs{\bsa}^2+3\abs{\bsb}^2+\norm{\bsC}^2+2\Innerm{\bsa}{\bsC}{\bsb},\\
\bsx^{(2)}_B &= 2(1+\abs{\bsb}^2)\bsa+4\bsC\bsb-2\widehat\bsC\bsb, \\
\bsy^{(2)}_B &= (3+\abs{\bsa}^2+\abs{\bsb}^2+\norm{\bsC}^2)\bsb+2\bsC^\t\bsa+2\mathrm{i}\bsC^\t\bsa\times\bsb,\\
\bsZ^{(2)}_B &=2(
\bsC-\widehat\bsC)+2\bsC\proj{\bsb}+4\out{\bsa}{\bsb}-2\mathrm{i}(\bsC-\widehat\bsC)(\bsb\cdot\cF).
\end{cases}
\end{eqnarray*}
and
\begin{eqnarray*}
\begin{cases}
c^{(2)}_{AB} &= 1+3\abs{\bsa}^2+3\abs{\bsb}^2+2\abs{\bsa}^2\abs{\bsb}^2+\norm{\bsC}^2+6\Innerm{\bsa}{\bsC}{\bsb}-2\Innerm{\bsa}{\widehat\bsC}{\bsb},\\
\bsx^{(2)}_{AB} &= (3+\abs{\bsa}^2+5\abs{\bsb}^2+\norm{\bsC}^2+2\Innerm{\bsa}{\bsC}{\bsb})\bsa+2(2\bsC\bsb-\widehat\bsC\bsb)+2\mathrm{i}(2\bsC\bsb-\widehat\bsC\bsb)\times\bsa,\\
\bsy^{(2)}_{AB} &= (3+5\abs{\bsa}^2+\abs{\bsb}^2+\norm{\bsC}^2+2\Innerm{\bsa}{\bsC}{\bsb})\bsb+2(2\bsC^\t\bsa-\widehat\bsC^\t\bsa)+2\mathrm{i}(2\bsC^\t\bsa-\widehat\bsC^\t\bsa)\times\bsb,\\
\bsZ^{(2)}_{AB} &=
2(\bsC-\widehat\bsC)+2(\proj{\bsa}\bsC+\bsC\proj{\bsb})+(7+\abs{\bsa}^2+\abs{\bsb}^2+\norm{\bsC}^2)\out{\bsa}{\bsb}\\
&~~~-2\Omega(\bsC-\widehat\bsC,\out{\bsa}{\bsb})+2\mathrm{i}(\Psi(\bsC\bsb,\out{\bsa}{\bsb},\bsC^\t\bsa)
- \Psi(\bsa,\bsC-\widehat\bsC,\bsb)).
\end{cases}
\end{eqnarray*}
\item For $k=3$, it holds that
\begin{eqnarray*}
\begin{cases}
c^{(3)}_A &= 1+6\abs{\bsa}^2+\abs{\bsa}^4+3\abs{\bsb}^2(1+\abs{\bsa}^2)+(3+\abs{\bsa}^2)\norm{\bsC}^2+12\Innerm{\bsa}{\bsC}{\bsb}+2\Innerm{\bsa}{\bsC\bsC^\t}{\bsa}\\
&~~~-6\det(\bsC)-2\Innerm{\bsa}{\widehat\bsC}{\bsb},\\
\bsx^{(3)}_A &= (4+4\abs{\bsa}^2+6\abs{\bsb}^2+4\norm{\bsC}^2+6\Innerm{\bsa}{\bsC}{\bsb}-6\det(\bsC))\bsa+2(\bsC\bsC^\t\bsa+3\bsC\bsb-\widehat\bsC\bsb)\\
&~~~+2\mathrm{i}(\bsC\bsC^\t\bsa+3\bsC\bsb-\widehat\bsC\bsb)\times\bsa,\\
\bsy^{(3)}_A &= (3+9\abs{\bsa}^2+\abs{\bsb}^2+\norm{\bsC}^2+2\Innerm{\bsa}{\bsC}{\bsb})\bsb+2\bsC^\t\bsC\bsb+(9+3\abs{\bsa}^2+\abs{\bsb}^2+3\norm{\bsC}^2)\bsC^\t\bsa\\
&~~~-2\bsC^\t\bsC\bsC^\t\bsa-8\widehat\bsC^\t\bsa,\\
\bsZ^{(3)}_A
&=(3+\abs{\bsa}^2+\abs{\bsb}^2+3\norm{\bsC}^2)\bsC-6\widehat\bsC+2\proj{\bsa}(4\bsC-\widehat\bsC)
+2\bsC\proj{\bsb}+2\out{\bsa}{\bsb}\bsC^\t\bsC\\
&~~~+(9+3\abs{\bsa}^2+\abs{\bsb}^2+\norm{\bsC}^2)\out{\bsa}{\bsb}-2\bsC\bsC^\t\bsC-2\Omega(\bsC,\out{\bsa}{\bsb})\\
&~~~-\mathrm{i}(\bsa\cdot\cF)^\t\Br{(3+\abs{\bsa}^2+\abs{\bsb}^2+3\norm{\bsC}^2)\bsC-6\widehat\bsC+2(\bsC\proj{\bsb}-\bsC\bsC^\t\bsC-\Omega(\bsC,\out{\bsa}{\bsb}))},
\end{cases}
\end{eqnarray*}
\begin{eqnarray*}
\begin{cases}
c^{(3)}_B &= 1+6\abs{\bsb}^2+\abs{\bsb}^4+3\abs{\bsa}^2(1+\abs{\bsb}^2)+(3+\abs{\bsb}^2)\norm{\bsC}^2+12\Innerm{\bsa}{\bsC}{\bsb}+2\Innerm{\bsb}{\bsC^\t\bsC}{\bsb}\\
&~~~-6\det(\bsC)-2\Innerm{\bsa}{\widehat\bsC}{\bsb},\\
\bsx^{(3)}_B &= (3+\abs{\bsa}^2+9\abs{\bsb}^2+\norm{\bsC}^2+2\Innerm{\bsa}{\bsC}{\bsb})\bsa+2\bsC\bsC^\t\bsa+(9+\abs{\bsa}^2+3\abs{\bsb}^2+3\norm{\bsC}^2)\bsC\bsb\\
&~~~-2\bsC\bsC^\t\bsC\bsb-8\widehat\bsC\bsb,\\
\bsy^{(3)}_B &= (4+6\abs{\bsa}^2+4\abs{\bsb}^2+4\norm{\bsC}^2+6\Innerm{\bsa}{\bsC}{\bsb}-6\det(\bsC))\bsb+2(\bsC^\t\bsC\bsb+3\bsC^\t\bsa-\widehat\bsC^\t\bsa)\\
&~~~+2\mathrm{i}(\bsC^\t\bsC\bsb+3\bsC^\t\bsa-\widehat\bsC^\t\bsa)\times\bsb,\\
\bsZ^{(3)}_B
&=(3+\abs{\bsa}^2+\abs{\bsb}^2+3\norm{\bsC}^2)\bsC-6\widehat\bsC+2(4\bsC-\widehat\bsC)\proj{\bsb}
+2\proj{\bsa}\bsC+2\bsC\bsC^\t\out{\bsa}{\bsb}\\
&~~~+(9+\abs{\bsa}^2+3\abs{\bsb}^2+\norm{\bsC}^2)\out{\bsa}{\bsb}-2\bsC\bsC^\t\bsC-2\Omega(\bsC,\out{\bsa}{\bsb})\\
&~~~-\mathrm{i}\Br{(3+\abs{\bsa}^2+\abs{\bsb}^2+3\norm{\bsC}^2)\bsC-6\widehat\bsC+2(\proj{\bsa}\bsC-\bsC\bsC^\t\bsC-\Omega(\bsC,\out{\bsa}{\bsb}))}(\bsb\cdot\cF),
\end{cases}
\end{eqnarray*}
and
\begin{eqnarray*}
\begin{cases}
c^{(3)}_{AB} &= 1+6(\abs{\bsa}^2+\abs{\bsb}^2)+12\abs{\bsa}^2\abs{\bsb}^2+\abs{\bsa}^4+\abs{\bsb}^4+(3+\abs{\bsa}^2+\abs{\bsb}^2)\norm{\bsC}^2\\
&~~~+3(7+\abs{\bsa}^2+\abs{\bsb}^2+\norm{\bsC}^2)\Innerm{\bsa}{\bsC}{\bsb}-10\Innerm{\bsa}{\widehat\bsC}{\bsb}-6\det(\bsC)\\
&~~~+2(\Innerm{\bsa}{\bsC\bsC^\t}{\bsa}+\Innerm{\bsb}{\bsC^\t\bsC}{\bsb}-\Innerm{\bsa}{\bsC\bsC^\t\bsC}{\bsb}),\\
\bsx^{(3)}_{AB} &= \Big(4+4\abs{\bsa}^2+15\abs{\bsb}^2+\abs{\bsb}^4+4\norm{\bsC}^2+(3\abs{\bsa}^2+\norm{\bsC}^2)\abs{\bsb}^2+14\Innerm{\bsa}{\bsC}{\bsb}+2\Innerm{\bsb}{\bsC^\t\bsC}{\bsb}\\
&~~~-2\Innerm{\bsa}{\widehat\bsC}{\bsb}-6\det(\bsC)\Big)\bsa+2\bsC\bsC^\t\bsa + (9+\abs{\bsa}^2+3\abs{\bsb}^2+3\norm{\bsC}^2)\bsC\bsb-8\widehat\bsC\bsb-2\bsC\bsC^\t\bsC\bsb\\
&~~~+\mathrm{i}\Pa{2\bsC\bsC^\t\bsa+(9+\abs{\bsa}^2+3\abs{\bsb}^2+3\norm{\bsC}^2)\bsC\bsb-8\widehat\bsC\bsb-2\bsC\bsC^\t\bsC\bsb}\times\bsa,\\
\bsy^{(3)}_{AB} &= \Big(4+15\abs{\bsa}^2+\abs{\bsa}^4+4\abs{\bsb}^2+4\norm{\bsC}^2+(3\abs{\bsb}^2+\norm{\bsC}^2)\abs{\bsa}^2+14\Innerm{\bsa}{\bsC}{\bsb}+2\Innerm{\bsa}{\bsC\bsC^\t}{\bsa}\\
&~~~-2\Innerm{\bsa}{\widehat\bsC}{\bsb}-6\det(\bsC)\Big)\bsb+2\bsC^\t\bsC\bsb + (9+3\abs{\bsa}^2+\abs{\bsb}^2+3\norm{\bsC}^2)\bsC^\t\bsa-8\widehat\bsC^\t\bsa-2\bsC^\t\bsC\bsC^\t\bsa\\
&~~~+\mathrm{i}\Pa{2\bsC^\t\bsC\bsb+(9+3\abs{\bsa}^2+\abs{\bsb}^2+3\norm{\bsC}^2)\bsC^\t\bsa-8\widehat\bsC^\t\bsa-2\bsC^\t\bsC\bsC^\t\bsa}\times\bsb,\\
\bsZ^{(3)}_{AB}
&=(3+\abs{\bsa}^2+\abs{\bsb}^2+2\abs{\bsa}^2\abs{\bsb}^2+3\norm{\bsC}^2)\bsC+2(4-\abs{\bsb}^2)\proj{\bsa}\bsC+2(4-\abs{\bsa}^2)\bsC\proj{\bsb}\\
&~~~+\Pa{13+7(\abs{\bsa}^2+\abs{\bsb}^2)+5\norm{\bsC}^2+8\Innerm{\bsa}{\bsC}{\bsb}-6\det(\bsC)}\out{\bsa}{\bsb}-6\widehat\bsC\\
&~~~+2\Pa{\bsC\bsC^\t\out{\bsa}{\bsb}+\out{\bsa}{\bsb}\bsC^\t\bsC -
\bsC\bsC^\t\bsC-\proj{\bsa}\widehat\bsC-\widehat\bsC\proj{\bsb}}\\
&~~~-\Omega((5+\abs{\bsa}^2+\abs{\bsb}^2+3\norm{\bsC}^2)\bsC-2\bsC\bsC^\t\bsC-6\widehat\bsC,\out{\bsa}{\bsb})\\
&~~~+\mathrm{i}\Psi(2\bsC\bsC^\t\bsa+6\bsC\bsb-2\widehat\bsC\bsb,\out{\bsa}{\bsb},2\bsC^\t\bsC\bsb+6\bsC^\t\bsa-2\widehat\bsC^\t\bsa)\\
&~~~-\mathrm{i}\Pa{(3+\abs{\bsa}^2+\abs{\bsb}^2+3\norm{\bsC}^2)\Psi(\bsa,\bsC,\bsb)-6\Psi(\bsa,\widehat\bsC,\bsb)-2\Psi(\bsa,\bsC\bsC^\t\bsC,\bsb)}\\
&~~~+2\mathrm{i}\Pa{\abs{\bsb}^2(\bsa\cdot\cF)\bsC+\abs{\bsa}^2\bsC(\bsb\cdot\cF)^\t}.
\end{cases}
\end{eqnarray*}
\end{enumerate}
\end{prop}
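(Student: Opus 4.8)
The plan is to obtain all three families of coefficients by \emph{composing} the two propositions that immediately precede this statement, rather than re-expanding the products by hand. The point is that, on setting $\bsX=\rho_{AB}$, one has $\bsX^k=\rho^k_{AB}$, so that $\rho^k_{AB}(\rho_A\ot\I_B)$, $\rho^k_{AB}(\I_A\ot\rho_B)$ and $\rho^k_{AB}(\rho_A\ot\rho_B)$ are \emph{exactly} the products $\bsX^k(\rho_A\ot\I_B)$, $\bsX^k(\I_A\ot\rho_B)$ and $\bsX^k(\rho_A\ot\rho_B)$ whose coefficients are already recorded, for an arbitrary power, in the proposition giving $\bsX^k$ times the marginals. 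Hence it suffices to substitute the explicit coefficients of $\rho^k_{AB}$ into those general templates and simplify.

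Concretely, I would read off from the proposition on powers of $\rho_{AB}$ that $\rho^k_{AB}\approx\tfrac{1}{4^k}\Pa{c^{(k)},\bsx^{(k)},\bsy^{(k)},\bsZ^{(k)}}$, so the parameters $\Pa{t^{(k)},\bsr^{(k)},\bss^{(k)},\bsT^{(k)}}$ of $\bsX^k$ entering the template are $\tfrac1{4^k}\Pa{c^{(k)},\bsx^{(k)},\bsy^{(k)},\bsZ^{(k)}}$, with the base case $\Pa{c^{(1)},\bsx^{(1)},\bsy^{(1)},\bsZ^{(1)}}=(1,\bsa,\bsb,\bsC)$. Substituting and tracking the prefactors---$\tfrac12\cdot\tfrac1{4^k}=\tfrac{1}{2\cdot 4^k}$ for the two one-sided products and $\tfrac14\cdot\tfrac1{4^k}=\tfrac1{4^{k+1}}$ for the symmetric one---the scalar entries $c^{(k)}_\bullet$ and the vector entries $\bsx^{(k)}_\bullet,\bsy^{(k)}_\bullet$ come out after only elementary manipulation: the relations $\out{\bsb}{\bsa}\bsa=\abs{\bsa}^2\bsb$, $(\bsa\cdot\cF)^\t\bsa=\bsa\times\bsa=\zero$ and $(\widehat\bsC)^\t=\widehat{\bsC^\t}$, together with the cross-product identities of Lemma~\ref{lem:A2}, already collapse these entries to the stated closed forms (for instance $(\bsa\cdot\cF)^\t\out{\bsa}{\bsb}=\zero$ is what removes the $\out{\bsa}{\bsb}$ contribution from the imaginary part of $\bsZ^{(k)}_A$).

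The real labour is confined to the matrix entry $\bsZ^{(k)}_\bullet$, and above all to $\bsZ^{(3)}_{AB}$. There the template contributes the pieces $\bsT^{(k)}-\Omega(\bsT^{(k)},\out{\bsa}{\bsb})$ and $\mathrm{i}\Pa{\Psi(\bsr^{(k)},\out{\bsa}{\bsb},\bss^{(k)})-\Psi(\bsa,\bsT^{(k)},\bsb)}$, into which one feeds the multi-term expression for $\bsZ^{(k)}$ (built from $\bsC$, $\widehat\bsC$, $\out{\bsa}{\bsb}$, $\bsC\bsC^\t\bsC$ and $\Omega(\bsC,\out{\bsa}{\bsb})$), as well as the outer products $\out{\bsa}{\bss^{(k)}},\out{\bsr^{(k)}}{\bsb}$ coming from the complicated vectors $\bsr^{(k)},\bss^{(k)}$. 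The $\Psi$-terms are handled by linearity in each slot, so they are largely retained in the displayed $\Psi(\bsa,\bsC,\bsb)$-type form; the $\Omega$-term is where genuine reduction is needed. I would expand $\Omega(\bsZ^{(k)},\out{\bsa}{\bsb})$ by the bilinearity and symmetry of $\Omega$ and reduce each monomial using the catalogue of $\Omega$-identities established just above the statement, chiefly $\Omega(\bsT,\out{\bsa}{\bsb})=(\bsa\cdot\cF)\bsT(\bsb\cdot\cF)^\t$ and the evaluations of $\Omega(\bsT,\bsA\bsT)$, $\Omega(\bsT,\bsT\bsB)$, $\Omega(\bsM,\widehat\bsT)$ and $\Omega(\bsT,(\bsr\cdot\cF)\bsT(\bss\cdot\cF)^\t)$, supplemented by Corollary~\ref{cor:B3c} to push $\cF$ through $\bsC$ (turning $[(\bsC\bsy)\cdot\cF]\bsC$ into $\widehat\bsC(\bsy\cdot\cF)$), by the contraction $(\bsa\cdot\cF)^2=\proj{\bsa}-\abs{\bsa}^2\I_3$ from Lemma~\ref{lem:A2}(iv), and by the adjugate identities of Corollary~\ref{lem:adjugate} whenever nested cofactor expressions such as $\widehat{\widehat\bsC}$ or $\Inner{\widehat\bsC}{\widehat\bsC}$ appear; for example $\Omega(\out{\bsa}{\bsb},\out{\bsa}{\bsb})=2\widehat{\out{\bsa}{\bsb}}=0$ and $(\bsa\cdot\cF)\proj{\bsa}=0$ kill several candidate terms outright.

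The hard part is therefore purely organisational: each of $\bsZ^{(2)}_{AB}$ and especially $\bsZ^{(3)}_{AB}$ spawns dozens of quadratic and cubic monomials in $\bsa,\bsb,\bsC$, and the challenge is to collapse them into the compact form displayed in the statement---keeping the irreducible $\Omega(\,\cdot\,,\out{\bsa}{\bsb})$ and $\Psi(\,\cdot\,)$ blocks intact---without dropping or double-counting a term. I would sort the monomials by their scalar coefficient ($\abs{\bsa}^2$, $\Innerm{\bsa}{\bsC}{\bsb}$, $\det(\bsC)$, and so on) and reassemble, and I would cross-check the final collection symbolically in \textsc{Mathematica}, exactly as was done for the product formula of Lemma~\ref{lem:A1} and for Corollary~\ref{lem:adjugate}; that independent symbolic verification is what ultimately certifies that the hand reduction is complete and correct.
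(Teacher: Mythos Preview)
Your proposal is correct and matches the paper's (implicit) approach: the proposition is stated in the paper without a separate proof, and it is meant to follow exactly by substituting the explicit coefficients of $\rho_{AB}^k$ from the preceding proposition into the general templates for $\bsX^k(\rho_A\ot\I_B)$, $\bsX^k(\I_A\ot\rho_B)$, $\bsX^k(\rho_A\ot\rho_B)$ and simplifying with the $\Omega$-, $\Psi$-, and adjugate identities developed earlier, with a symbolic check in \textsc{Mathematica} as backstop.
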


\subsection{Revisiting local unitary invariants}\label{sub:3}

For any two-qubit state $\rho_{AB}$, decomposed as
\begin{eqnarray}\label{eq:2qubit}
\rho_{AB} =
\frac14\Pa{\I\ot\I+\bsa\cdot\boldsymbol{\sigma}\ot\I+\I\ot
\bsb\cdot\boldsymbol{\sigma}+\sum^3_{i,j=1}c_{ij}\sigma_i\ot\sigma_j},
\end{eqnarray}
where $\bsa=(a_1,a_2,a_3)^\t$ and $\bsb= (b_1,b_2,b_3)^\t$ are in
$\real^3$, and $\bsC=(c_{ij})_{3\times3}\in\real^{3\times 3}$. Its
two reduced states are given by, respectively $\rho_A =
\frac12(\I_2+\bsa\cdot\bsigma)$ and $\rho_B =
\frac12(\I_2+\bsb\cdot\bsigma)$. In 2002, Makhlin had published the
following well-known result\footnote{Here we reformulate those 18 LU
invariants for our convenience. They are also termed Makhlin's
invariants.}:
\begin{prop}[\cite{Makhlin2002}]
For any mixed two-qubit states
$\rho_{AB},\rho'_{AB}\in\density{\complex^2\ot\complex^2}$, both are
LU equivalent if and only if the following 18-tuple
$(I_1,\ldots,I_{18})$ are the same for both $\rho_{AB}$ and
$\rho'_{AB}$, where
\begin{eqnarray*}
I_1=\det(\bsC),I_2=\Inner{\bsC}{\bsC},I_3=\Inner{\bsC^\t\bsC}{\bsC^\t\bsC},\\
I_4=\Inner{\bsa}{\bsa},I_5=\Innerm{\bsa}{\bsC\bsC^\t}{\bsa},I_6=\Innerm{\bsa}{(\bsC\bsC^\t)^2}{\bsa},\\
I_7=\Inner{\bsb}{\bsb},I_8=\Innerm{\bsb}{\bsC^\t\bsC}{\bsb},I_9=\Innerm{\bsb}{(\bsC^\t\bsC)^2}{\bsb},\\
I_{10}=\bsa\cdot(\bsC\bsC^\t\bsa\times
(\bsC\bsC^\t)^2\bsa),I_{11}=\bsb\cdot(\bsC^\t\bsC\bsb\times
(\bsC^\t\bsC)^2\bsb),\\
I_{12}=\Innerm{\bsa}{\bsC}{\bsb},I_{13}=\Innerm{\bsa}{\bsC\bsC^\t\bsC}{\bsb},I_{14}=\Inner{(\bsa\cdot\cF)\bsC}{\bsC(\bsb\cdot\cF)},\\
I_{15}=\bsa\cdot(\bsC\bsC^\t\bsa\times\bsC\bsb),I_{16}=\bsC^\t\bsa\cdot(\bsb\times\bsC^\t\bsC\bsb),\\
I_{17}=\bsC^\t\bsa\cdot(\bsC^\t\bsC\bsC^\t
\bsa\times\bsb),I_{18}=\bsa\cdot(\bsC\bsb\times\bsC\bsC^\t\bsC\bsb).
\end{eqnarray*}
\end{prop}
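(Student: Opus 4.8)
The plan is to pass to the real ``Bloch'' coordinates, reduce the statement to an orbit problem for a compact group, and settle it by bringing the correlation matrix to a normal form and then chasing signs. First I would record the action: writing a state as in Eq.~\eqref{eq:2qubit}, a local unitary $\bsU_A\ot\bsU_B$ acts through the two-to-one homomorphism $\SU(2)\to\SO(3)$, so there are $\bsR_A,\bsR_B\in\SO(3)$ with $\bsU_X\sigma_i\bsU_X^\dagger=\sum_k(\bsR_X)_{ki}\sigma_k$, and hence
\begin{eqnarray}
(\bsa,\bsb,\bsC)\longmapsto(\bsR_A\bsa,\;\bsR_B\bsb,\;\bsR_A\bsC\bsR_B^\t).
\end{eqnarray}
Since global phases act trivially and $\SU(2)$ surjects onto $\SO(3)$, two states are LU equivalent iff their data lie in one orbit of $\SO(3)\times\SO(3)$. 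The ``only if'' direction is then immediate: each $I_k$ is manifestly invariant, because $\bsC\bsC^\t\mapsto\bsR_A(\bsC\bsC^\t)\bsR_A^\t$ and $\bsC^\t\bsC\mapsto\bsR_B(\bsC^\t\bsC)\bsR_B^\t$ render the quadratic forms $I_4$--$I_9$ invariant, $I_1,I_2,I_3,I_{12},I_{13},I_{14}$ are traces and pairings of covariant tensors, and the scalar triple products $I_{10},I_{11},I_{15}$--$I_{18}$ are invariant by $\bsR(\bsu\times\bsv)=\bsR\bsu\times\bsR\bsv$ for $\bsR\in\SO(3)$ (Corollary~\ref{cor:B3a}), together with the $\cF$-identities of Lemma~\ref{lem:A2} for $I_{14}$.

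For the ``if'' direction I would first note that $\SO(3)\times\SO(3)$ is compact, so by Proposition~\ref{prop:vrana} the \emph{full} ring of invariant polynomials already separates orbits; the content of the theorem is that the listed $18$ suffice, which I would prove by reduction to a normal form. Using a signed singular value decomposition there are $\bsP,\bsQ\in\SO(3)$ with $\bsP^\t\bsC\bsQ=\bsD:=\diag(c_1,c_2,c_3)$, ordered so that $c_1\geqslant c_2\geqslant|c_3|$ and $\sign(c_3)=\sign\det\bsC$. Then $I_1=c_1c_2c_3$, $I_2=\sum_ic_i^2$, $I_3=\sum_ic_i^4$ determine the elementary symmetric functions of $\{c_i^2\}$ (via $e_2=(I_2^2-I_3)/2$, $e_3=I_1^2$), hence the multiset $\{c_i^2\}$, while $I_1$ fixes the sign of the product; so the normal forms $\bsD$ of the two states coincide. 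After applying the normalizing rotations I may assume $\bsC=\bsC'=\bsD$, and it remains to match the transformed vectors using only the residual stabilizer $\mathrm{Stab}(\bsD)\subseteq\SO(3)\times\SO(3)$, i.e.\ the pairs with $\bsR_A\bsD=\bsD\bsR_B$.

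In the generic stratum, where $c_1,c_2,|c_3|$ are distinct and nonzero, the relation $\bsR_A\bsD\bsR_B^\t=\bsD$ forces $\bsR_A$ to commute with $\bsD^2$, hence $\bsR_A=\bsR_B=\diag(\pm1,\pm1,\pm1)$ of determinant $+1$: the Klein four-group of simultaneous paired sign flips on \emph{both} $\bsa$ and $\bsb$. Here the remaining invariants close the argument: in the diagonal frame $\bsC\bsC^\t=\bsD^2$, so $I_4,I_5,I_6$ (resp.\ $I_7,I_8,I_9$) form a Vandermonde system in $\{c_i^2\}$ determining $a_i^2$ (resp.\ $b_i^2$); then $I_{10}=a_1a_2a_3\prod_{i<j}(c_j^2-c_i^2)$ and $I_{11}$ fix the signs of $a_1a_2a_3$ and $b_1b_2b_3$; and the coupling invariants $I_{12}=\sum_ic_ia_ib_i$ together with $I_{13},\ldots,I_{18}$ fix the remaining relative signs tying $\bsa$ to $\bsb$. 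A short sign-bookkeeping then shows $(\bsa,\bsb)$ and $(\bsa',\bsb')$ agree up to an element of the Klein four-group, i.e.\ lie in one orbit.

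The hard part, and the real labor of the theorem, is the \emph{degenerate strata}: when two singular values coincide, or one (or more) vanish, or when $\bsa$ or $\bsb$ has components in the repeated or kernel eigenspaces of $\bsC\bsC^\t$. There $\mathrm{Stab}(\bsD)$ acquires continuous $\SO(2)$ factors, the Vandermonde system becomes singular, and the magnitudes $a_i^2,b_i^2$ are no longer individually pinned down. One must then argue case by case that the surviving freedom is exactly absorbed by the enlarged stabilizer, using the finer mixed triple products $I_{13}$--$I_{18}$ (which couple $\bsa$ and $\bsb$ through $\bsC$) to kill any residual relative rotation. This stratified analysis, verifying that no orbit is missed on any degenerate locus, is the step I expect to be most delicate; it is also what forces the precise count of $18$ invariants.
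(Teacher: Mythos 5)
Your attempt cannot be compared against a proof in the paper, because the paper offers none: this proposition is imported verbatim from Makhlin's 2002 paper with a citation, and the paper's actual labor goes into relating the $I_k$'s to its $L_k$'s and $B_k$'s. Judged as a standalone proof of Makhlin's theorem, your strategy is the right one (and essentially Makhlin's own): pass to Bloch data, observe that LU equivalence is orbit equivalence of $(\bsa,\bsb,\bsC)$ under $\SO(3)\times\SO(3)$ acting by $(\bsa,\bsb,\bsC)\mapsto(\bsR_A\bsa,\bsR_B\bsb,\bsR_A\bsC\bsR_B^\t)$, check invariance of each $I_k$ directly, and prove sufficiency by reduction to a signed diagonal normal form $\bsD=\diag(c_1,c_2,c_3)$. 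Your generic-stratum argument is correct as far as it goes: $I_1,I_2,I_3$ pin down $\bsD$; for distinct nonzero $c_1,c_2,\abs{c_3}$ the stabilizer is the Klein four-group of paired sign flips; the Vandermonde systems from $I_4$--$I_9$ recover the $a_i^2$ and $b_i^2$; $I_{10}=a_1a_2a_3\prod_{i<j}(c_j^2-c_i^2)$ and its partner $I_{11}$ fix the product signs; and $I_{12},I_{13},I_{14}$ (note $I_{14}=2\Innerm{\bsa}{\widehat\bsC}{\bsb}$, a third, generically independent weighting of the products $a_ib_i$) determine the $a_ib_i$, after which the residual sign freedom is exactly the stabilizer.

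The genuine gap is the one you flag yourself and then do not close: the degenerate strata. Your final paragraph announces the case analysis rather than performing it, and for this theorem that is where the content sits --- on the generic stratum roughly fourteen invariants already suffice, and $I_{15}$--$I_{18}$ exist precisely to handle the loci where your Vandermonde systems go singular. Concretely, when $c_1=c_2\neq\abs{c_3}$ the stabilizer contains a diagonal $\SO(2)$, the invariants $I_4,I_5,I_6$ only determine $a_3^2$ and $a_1^2+a_2^2$, and $I_{10}$ vanishes identically, so one must show the mixed triple products determine the in-plane components of $\bsa$ and $\bsb$ up to a simultaneous rotation (e.g., recovering $a_1b_2-a_2b_1$ and checking consistency of all relative angles); separate analyses are then needed for rank-deficient $\bsC$, for $\bsa$ or $\bsb$ lying in a kernel or repeated eigenspace, and for $\bsa=\zero$ or $\bsb=\zero$. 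These loci are not exotic: the Werner and Bell-diagonal families, the very examples on which the paper tests its entanglement criterion, have $\bsa=\bsb=\zero$ and degenerate $\bsC$. Until each such stratum is shown to be separated by the 18 invariants --- or the claim is discharged by citation to Makhlin's published case analysis, as the paper itself does --- the ``if'' direction remains unproven.
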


We should remark here that, in invariant theory, there is the notion
of so-called separating invariants, which in general might generate
a proper subalgebra of the full algebra of invariant polynomials. In
other words, the subalgebra of separating invariant polynomials is
generally not the full algebra of invariant polynomials. From
\cite{Grassl1998}, we see that 21 invariant polynomials which were
shown to be non-redundant, i.e., none of them can be expressed as a
polynomial in the others. Moreover, such 21 polynomials are indeed
generating the full algebra of invariant polynomials. Although 18
Makhlin invariants \cite{Makhlin2002} are sufficient to discriminate
the orbits with respect to LU transformation, they are just
separating invariants which generates a proper subalgebra of the
full algebra of invariant polynomials.

Here we deliberately omit the constant factor in Makhlin's
invariants. For our purposes, we will give another 18-tuple of
invariants in replacement of Makhlin's invariants.
\begin{prop}[\cite{Makhlin2002}]\label{prop:LI}
For any mixed two-qubit states
$\rho_{AB},\rho'_{AB}\in\density{\complex^2\ot\complex^2}$, both are
LU equivalent if and only if the following 18-tuple
$(L_1,\ldots,L_{18})$ are the same for both $\rho_{AB}$ and
$\rho'_{AB}$, where
\begin{eqnarray*}
L_1=\det(\bsC),L_2=\Inner{\bsC}{\bsC},L_3=\Inner{\widehat\bsC}{\widehat\bsC},\\
L_4=\Inner{\bsa}{\bsa},L_5=\Innerm{\bsa}{\bsC\bsC^\t}{\bsa},L_6=\Innerm{\bsa}{\widehat{\bsC\bsC^\t}}{\bsa},\\
L_7=\Inner{\bsb}{\bsb},L_8=\Innerm{\bsb}{\bsC^\t\bsC}{\bsb},L_9=\Innerm{\bsb}{\widehat{\bsC^\t\bsC}}{\bsb},\\
L_{10}=\bsa\cdot(\bsC\bsC^\t\bsa\times
\widehat{\bsC\bsC^\t}\bsa),L_{11}=\bsb\cdot(\bsC^\t\bsC\bsb\times
\widehat{\bsC^\t\bsC}\bsb),\\
L_{12}=\Innerm{\bsa}{\bsC}{\bsb},L_{13}=\Innerm{\bsa}{\bsC\bsC^\t\bsC}{\bsb},L_{14}=\Innerm{\bsa}{\widehat\bsC}{\bsb},\\
L_{15}=\bsb\cdot(\bsC^\t\bsa\times \widehat\bsC^\t\bsa),L_{16}=\bsa\cdot(\bsC\bsb\times\widehat\bsC\bsb),\\
L_{17}=\widehat\bsC\bsb\cdot(\bsa\times\bsC\bsC^\t
\bsa),L_{18}=\widehat\bsC^\t\bsa\cdot(\bsb\times\bsC^\t\bsC\bsb).
\end{eqnarray*}
\end{prop}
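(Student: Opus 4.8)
The plan is to leverage the preceding proposition of Makhlin: the tuple $(I_1,\ldots,I_{18})$ already separates LU orbits, so $\rho_{AB}$ and $\rho'_{AB}$ are LU equivalent if and only if they share the same values of all the $I_k$. It therefore suffices to prove that the two $18$-tuples $(L_1,\ldots,L_{18})$ and $(I_1,\ldots,I_{18})$ are \emph{polynomially inter-expressible}, i.e.\ each $L_k$ is a polynomial in the $I_j$ and each $I_k$ is a polynomial in the $L_j$. Once this invertible polynomial change of variables is in hand, equality of the $L$-tuple for two states is equivalent to equality of the $I$-tuple, and the stated characterization follows at once. (LU invariance of the $L_k$ is then automatic, but it can also be read off directly: under the Bloch action $\bsa\mapsto\bsR_A\bsa$, $\bsb\mapsto\bsR_B\bsb$, $\bsC\mapsto\bsR_A\bsC\bsR_B^\t$ with $\bsR_A,\bsR_B\in\SO(3)$ one has $\widehat\bsC\mapsto\bsR_A\widehat\bsC\bsR_B^\t$, and the cross product is $\SO(3)$-covariant by Corollary~\ref{cor:B3a}, so each fully contracted $L_k$ is invariant.)

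First I would dispose of the invariants that are literally unchanged. For $k\in\{1,2,4,5,7,8,12,13\}$ the defining expressions of $L_k$ and $I_k$ coincide verbatim, so $L_k=I_k$. Two more, $L_{10}$ and $L_{11}$, also equal $I_{10}$ and $I_{11}$: substituting the Cayley--Hamilton expansion $\widehat{\bsC\bsC^\t}=(\bsC\bsC^\t)^2-\Tr{\bsC\bsC^\t}\,\bsC\bsC^\t+\Tr{\widehat{\bsC\bsC^\t}}\,\I_3$ (Eq.~\eqref{eq:3rdadj} applied to the symmetric matrix $\bsC\bsC^\t$) into $L_{10}=\bsa\cdot(\bsC\bsC^\t\bsa\times\widehat{\bsC\bsC^\t}\bsa)$, the two correction terms carry a factor $\bsC\bsC^\t\bsa\times\bsC\bsC^\t\bsa=\zero$ and a scalar triple product $\bsa\cdot(\bsC\bsC^\t\bsa\times\bsa)=0$ with a repeated vector, leaving exactly $I_{10}$; likewise for $L_{11}$.

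Next come the invariants built from a cofactor through a symmetric or Hilbert--Schmidt pairing, where Cayley--Hamilton turns adjugates into polynomials in matrix powers whose trace coefficients are themselves invariants. Corollary~\ref{lem:adjugate}(iii) gives $L_3=\tfrac12(I_2^2-I_3)$, hence $I_3=L_2^2-2L_3$. Using Eq.~\eqref{eq:3rdadj} for $\bsC\bsC^\t$ and $\bsC^\t\bsC$ and noting $\Tr{\bsC\bsC^\t}=I_2$ together with $\Tr{\widehat{\bsC\bsC^\t}}=\Tr{\widehat{\bsC^\t\bsC}}=L_3$, I obtain $L_6=I_6-L_2L_5+L_3L_4$ and $L_9=I_9-L_2L_8+L_3L_7$, each invertible for $I_6,I_9$. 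Finally Lemma~\ref{lem:A3}(iv), $\langle(\bsa\cdot\cF)\bsC,\bsC(\bsb\cdot\cF)\rangle=2\langle\bsa|\widehat\bsC|\bsb\rangle$, yields $I_{14}=2L_{14}$. All these relations are triangular and hence invertible.

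The hard part will be the remaining block $L_{15},L_{16},L_{17},L_{18}$ against Makhlin's pseudoscalar invariants $I_{15},\ldots,I_{18}$. The natural move is again to expand the adjugate via $\widehat\bsC^\t=\bsC^2-\Tr{\bsC}\,\bsC+\Tr{\widehat\bsC}\,\I_3$ inside expressions such as $L_{18}=\widehat\bsC^\t\bsa\cdot(\bsb\times\bsC^\t\bsC\bsb)$, then to simplify the resulting triple products using $\bsM\bsu\times\bsN\bsv+\bsN\bsu\times\bsM\bsv=\Omega(\bsM,\bsN)(\bsu\times\bsv)$ and $\bsC\bsu\times\bsC\bsv=\widehat\bsC(\bsu\times\bsv)$ (Corollaries~\ref{cor:B3b} and~\ref{cor:B3a}) together with the Lagrange expansion of $\bsu\times(\bsv\times\bsw)$. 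The genuine obstacle is that this expansion introduces the \emph{non}-invariant scalars $\Tr{\bsC}$ and $\Tr{\widehat\bsC}$ (not preserved by $\bsC\mapsto\bsR_A\bsC\bsR_B^\t$ when $\bsR_A\neq\bsR_B$); one must show these spurious pieces cancel across the four expressions, leaving each of $I_{15},\ldots,I_{18}$ a genuine polynomial in the $L_j$ (and conversely). I would verify this final $4\times4$ inter-expressibility concretely with the explicit product and power formulae established above (and cross-checked in \textsc{Mathematica}, as elsewhere in the paper), which reduce every such triple product to contractions of $\bsa,\bsb$ with words in $\bsC,\bsC^\t,\widehat\bsC$; confirming that the induced map on $(I_{15},\ldots,I_{18})$ is invertible then completes the argument.
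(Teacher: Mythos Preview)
Your strategy matches the paper's: show the two generating sets $\{I_k\}$ and $\{L_k\}$ are polynomially inter-expressible and then invoke Makhlin's result. Your treatment of $k\in\{1,\ldots,14\}$ is essentially identical to the paper's---the same Cayley--Hamilton expansions for $I_3,I_6,I_9,I_{10},I_{11}$ and the same use of Lemma~\ref{lem:A3}(iv) for $I_{14}$.

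The gap is in the block $L_{15},\ldots,L_{18}$. You propose to expand $\widehat\bsC^\t$ via Cayley--Hamilton, which introduces the non-invariant scalars $\Tr{\bsC}$ and $\Tr{\widehat\bsC}$, and you then hope to verify their cancellation by explicit computation---an argument you do not actually carry out. This detour is unnecessary: the identity you already cited, $\bsM\bsu\times\bsM\bsv=\widehat\bsM(\bsu\times\bsv)$ (Corollary~\ref{cor:B3b}), applied with $\bsM=\bsC$ or $\bsM=\bsC^\t$, gives each relation in one line with no spurious terms at all. For instance
\[
I_{15}=\bsa\cdot(\bsC\bsC^\t\bsa\times\bsC\bsb)=\bsa\cdot\widehat\bsC(\bsC^\t\bsa\times\bsb)=\widehat\bsC^\t\bsa\cdot(\bsC^\t\bsa\times\bsb)=-L_{15},
\]
and analogously $I_{16}=-L_{16}$, $I_{17}=L_{17}$, $I_{18}=L_{18}$. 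This is exactly what the paper does. Each relation is just a sign (or the identity), so invertibility is immediate and no $4\times4$ block analysis or \textsc{Mathematica} check is needed.
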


\begin{proof}
Note that we can find out the following relations
\begin{enumerate}[(1)]
\item $I_k=L_k$, where $k\in\set{1,2,4,5,7,8,10,11,12,13,14,17,18}$
\item $I_3=L^2_2-2L_3$
\item $I_6=L_6 + L_2L_5 - L_3L_4$
\item $I_9=L_9+L_2L_8 - L_3L_7$
\item $I_k=-L_k$, where $k\in\set{15,16}$
\end{enumerate}
Indeed, the first one is trivial. For the 2nd item, note that
$2\Inner{\widehat\bsC}{\widehat\bsC}=\Inner{\bsC}{\bsC}^2-\Inner{\bsC^\t\bsC}{\bsC^\t\bsC}$.
This implies that the desired result. For the third item,
$$
\widehat{\bsC\bsC^\t} =
(\bsC\bsC^\t)^2-\Inner{\bsC}{\bsC}\bsC\bsC^\t+\Inner{\widehat\bsC}{\widehat\bsC}\I_3
$$
implying that
\begin{eqnarray*}
\Innerm{\bsa}{\widehat{\bsC\bsC^\t}}{\bsa} =
\Innerm{\bsa}{(\bsC\bsC^\t)^2}{\bsa} -
\Inner{\bsC}{\bsC}\Innerm{\bsa}{\bsC\bsC^\t}{\bsa} +
\Inner{\widehat\bsC}{\widehat\bsC}\iinner{\bsa}{\bsa}.
\end{eqnarray*}
That is,
\begin{eqnarray}
L_6 = I_6 - L_2L_5 + L_3L_4.
\end{eqnarray}
For the 4th item,
$$
\widehat{\bsC^\t\bsC} =
(\bsC^\t\bsC)^2-\Inner{\bsC}{\bsC}\bsC^\t\bsC+\Inner{\widehat\bsC}{\widehat\bsC}\I_3
$$
implying that
\begin{eqnarray}
\Innerm{\bsb}{\widehat{\bsC^\t\bsC}}{\bsb} =
\Innerm{\bsb}{(\bsC^\t\bsC)^2}{\bsb} -
\Inner{\bsC}{\bsC}\Innerm{\bsb}{\bsC^\t\bsC}{\bsb} +
\Inner{\widehat\bsC}{\widehat\bsC}\iinner{\bsb}{\bsb}.
\end{eqnarray}
That is,
\begin{eqnarray}
L_9 = I_9 - L_2L_8 + L_3L_7.
\end{eqnarray}
For the equality of $I_{10/11}=L_{10/11}$
$$
(\bsC\bsC^\t)^2\bsa =
\widehat{\bsC\bsC^\t}\bsa+\Inner{\bsC}{\bsC}\bsC\bsC^\t\bsa-\Inner{\widehat\bsC}{\widehat\bsC}\bsa.
$$
Then
$$
\bsC\bsC^\t\bsa\times (\bsC\bsC^\t)^2\bsa =
\bsC\bsC^\t\bsa\times\widehat{\bsC\bsC^\t}\bsa
-\Inner{\widehat\bsC}{\widehat\bsC}\bsC\bsC^\t\bsa\times\bsa,
$$
implying that
\begin{eqnarray*}
I_{10} =\bsa\cdot(\bsC\bsC^\t\bsa\times(\bsC\bsC^\t)^2\bsa) =
\bsa\cdot\Pa{\bsC\bsC^\t\bsa\times\widehat{\bsC\bsC^\t}\bsa}=L_{10}.
\end{eqnarray*}

$$
(\bsC^\t\bsC)^2\bsb =
\widehat{\bsC^\t\bsC}\bsb+\Inner{\bsC}{\bsC}\bsC^\t\bsC\bsb-\Inner{\widehat\bsC}{\widehat\bsC}\bsb.
$$
Then
$$
\bsC^\t\bsC\bsb\times (\bsC^\t\bsC)^2\bsb =
\bsC^\t\bsC\bsb\times\widehat{\bsC^\t\bsC}\bsb
-\Inner{\widehat\bsC}{\widehat\bsC}\bsC^\t\bsC\bsb\times\bsb,
$$
implying that
\begin{eqnarray*}
I_{11} =\bsb\cdot(\bsC^\t\bsC\bsb\times(\bsC^\t\bsC)^2\bsb) =
\bsb\cdot\Pa{\bsC^\t\bsC\bsb\times\widehat{\bsC^\t\bsC}\bsb}=L_{11}.
\end{eqnarray*}
For the 5th item,
\begin{eqnarray*}
I_{15} &=& \bsa\cdot(\bsC\bsC^\t\bsa\times \bsC\bsb) = \Inner{\bsa}{\bsC\bsC^\t\bsa\times \bsC\bsb} = \Inner{\bsa}{\widehat\bsC(\bsC^\t\bsa\times \bsb)} \\
&=& \Inner{\widehat\bsC^\t\bsa}{\bsC^\t\bsa\times \bsb} =
\bsb\cdot(\widehat\bsC^\t\bsa\times\bsC^\t\bsa) =
-\bsb\cdot(\bsC^\t\bsa\times\widehat\bsC^\t\bsa) = -L_{15}.
\end{eqnarray*}
Similarly, we get that $I_{16}=-L_{16}$. Indeed,
\begin{eqnarray*}
I_{16} &=& \bsC^\t\bsa\cdot(\bsb\times \bsC^\t\bsC\bsb) = \bsb\cdot(\bsC^\t\bsC\bsb\times\bsC^\t\bsa) = \Inner{\bsb}{\bsC^\t\bsC\bsb\times\bsC^\t\bsa}\\
&=& \Inner{\bsb}{\widehat\bsC^\t(\bsC\bsb\times\bsa)} = \Inner{\widehat\bsC\bsb}{(\bsC\bsb\times\bsa)} = \bsa\cdot(\widehat\bsC\bsb\times\bsC\bsb)\\
&=&-\bsa\cdot(\bsC\bsb\times \widehat\bsC\bsb)=-L_{16}.
\end{eqnarray*}
We also note that
\begin{eqnarray*}
I_{17} &=& \bsb\cdot(\bsC^\t\bsa\times \bsC^\t\bsC\bsC^\t\bsa) =
\Inner{\bsb}{\widehat\bsC^\t(\bsa\times \bsC\bsC^\t\bsa)} =
\Inner{\widehat\bsC\bsb}{\bsa\times \bsC\bsC^\t\bsa}=L_{17}
\end{eqnarray*}
and
\begin{eqnarray*}
I_{18} &=& \bsa\cdot(\bsC\bsb\times \bsC\bsC^\t\bsC\bsb) =
\Inner{\bsa}{\widehat\bsC(\bsb\times \bsC^\t\bsC\bsb)} =
\Inner{\widehat\bsC^\t\bsa}{\bsb\times \bsC^\t\bsC\bsb}=L_{18}.
\end{eqnarray*}
From the above discussion, we can see that the invariant ring
generated by 18 Makhlin's invariants $I_k(k=1,\ldots,18)$ can also
be generated by our proposed 18 invariants $L_k(k=1,\ldots,18)$.
\end{proof}

Based on this observation, we can infer the following results:
\begin{lem}\label{lem:LB}
For any two-qubit state $\rho_{AB}$ decomposed as in
Eq.~\eqref{eq:2qubit} above, let
$\bsX_0=\rho_{AB},\bsX_1=\rho_A\ot\I_2$, and $\bsX_2=\I_2\ot\rho_B$,
it holds that
\begin{enumerate}[(1)]
\item $B_1=\Tr{\bsX_0\bsX_1}=\frac{1+L_4}{2}$.
\item $B_2=\Tr{\bsX_0\bsX_2}=\frac{1+L_7}{2}$.
\item $B_3=\Tr{\bsX_0\bsX_1\bsX_2} =
\frac{1+L_4+L_7+L_{12}}4$.
\item $B_4=\Tr{\bsX^2_0}=\frac{1+L_2+L_4+L_7}4$.
\item $B_5=\Tr{\bsX^2_0\bsX_1\bsX_2} = \frac{1+L_2+3L_4+3L_7+2L_4L_7+6L_{12} -
2L_{14}}{16}$.
\item
$B_6=\Tr{\bsX^3_0}=\frac{1-6L_1+3L_2+3L_4+3L_7+6L_{12}}{16}$.
\item $B_7=\Tr{\bsX^3_0\bsX_1} =
\frac{1-6L_1+L_2(3+L_4)+6L_4+L^2_4+2L_5+3(1+L_4)L_7+12L_{12}-2L_{14}}{32}$.
\item $B_8=\Tr{\bsX^3_0\bsX_2} =
\frac{1-6L_1+L_2(3+L_7)+6L_7+L^2_7+2L_8+3(1+L_7)L_4+12L_{12}-2L_{14}}{32}$.
\item $B_9=\Tr{\bsX^3_0\bsX_1\bsX_2}$ is given by
\begin{eqnarray*}
B_9 &=& \frac1{64}\Big[1+6(L_4+L_7)+12L_4L_7+L^2_4+L^2_7+(3+L_4+L_7)L_2 \\
&&~~~~~+3\Pa{7+L_2+L_4+L_7}L_{12}+2(L_5+L_8)-6L_1-2L_{13}-10L_{14}\Big].
\end{eqnarray*}
\item $B_{10}=\Tr{\bsX^4_0}$ is given by
\begin{eqnarray*}
B_{10}&=& \frac1{64}\Big[1+6\Pa{L_4+L_7+L_4L_7}
+L^2_4+L^2_7+(6+L_2+2L_4+2L_7)L_2+24L_{12}\\
&&~~~~~+4(L_3+L_5+L_8-2L_{14}-6L_1) \Big].
\end{eqnarray*}
\item $B_{11}=\Tr{\bsX^2_0\bsX_1\bsX^2_0\bsX_1}$ is
given by
\begin{eqnarray*}
B_{11}&=&\frac1{256}\Big[8L^2_{12}+8L_{12}(6+6L_4+L_7+L_2)+4(7+L_4)L_5-8(3+L_4)L_{14}+8L_6+4(1-L_4)L_8\\
&&~~~~~~-8(3+L_4)L_1+4(1-L_4)L_3+(1+L_4)L^2_2+2(1+L_4)(3+L_4+L_7)L_2\\
&&~~~~~~+\Pa{1+15L_4+15L^2_4+L^3_4+6L_7+36L_4L_7+6L^2_4L_7+L^2_7+L_4L^2_7}\Big].
\end{eqnarray*}
\item $B_{12}=\Tr{\bsX^2_0\bsX_2\bsX^2_0\bsX_2}$ is
given by
\begin{eqnarray*}
B_{12}&=&\frac1{256}\Big[8L^2_{12}+8L_{12}(6+6L_7+L_4+L_2)+4(7+L_7)L_8-8(3+L_7)L_{14}+8L_9+4(1-L_7)L_5\\
&&~~~~~~-8(3+L_7)L_1+4(1-L_7)L_3+(1+L_7)L^2_2+2(1+L_7)(3+L_4+L_7)L_2\\
&&~~~~~~+\Pa{1+15L_7+15L^2_7+L^3_7+6L_4+36L_4L_7+6L^2_7L_4+L^2_4+L_7L^2_4}\Big].
\end{eqnarray*}
\item $B_{13}=\Tr{\bsX_0\bsX_1\bsX_2\bsX^2_0\bsX_1}$ is
given by
\begin{eqnarray*}
B_{13}&=&\frac1{128}\Big[4L^2_{12}+L_{12}(30+6L_7+18L_4+2L_2)+(3+L_4+L_7+L_4L_7)L_2\\
&&~~~~~~+2(1-L_4)L_8+8L_5-2(5+L_4)L_{14}-2(3-L_4)L_1\\
&&~~~~~~+4\mathrm{i}L_{15}+\Pa{1+6L_7+L^2_7+10L_4+27L_4L_7+L^2_7L_4+5L^2_4+3L_7L^2_4}\Big].
\end{eqnarray*}
\item $B_{14}=\Tr{\bsX_0\bsX_1\bsX_2\bsX^2_0\bsX_2}$ is
given by
\begin{eqnarray*}
B_{14}&=&\frac1{128}\Big[4L^2_{12}+L_{12}(30+6L_4+18L_7+2L_2)+(3+L_4+L_7+L_4L_7)L_2\\
&&~~~~~~+2(1-L_7)L_5+8L_8-2(5+L_7)L_{14}-2(3-L_7)L_1\\
&&~~~~~~+4\mathrm{i}L_{16}+\Pa{1+6L_4+L^2_4+10L_7+27L_4L_7+L^2_4L_7+5L^2_7+3L_4L^2_7}\Big].
\end{eqnarray*}
\item $B_{15}=\Tr{\bsX_0\bsX_1\bsX_2\bsX^3_0\bsX_1}$ is given by
\begin{eqnarray*}
B_{15} &=& \frac1{512}\Big[1+L^3_4+26 L_7 L^2_4+15 L^2_4+13 L^2_7
L_4+76 L_4 L_7+15 L_4+5 L^2_7+4 L_3+10 L_7-L^2_2(L_4-1)\\
&&~~~~~~+26 L_5+12 L_8+6(L_4+L_7) L_5-4 L_6+68 L_{12}+88 L_4
L_{12}+4 L^2_4 L_{12}+4 L_5 L_{12}+44 L_7 L_{12}\\
&&~~~~~~+12 L_4 L_7 L_{12}+28 L^2_{12}-4 L_1(6+2 L_7+L_4 (L_7+4)+3
L_{12})\\
&&~~~~~~+2 L_2(3+L_5+3 L_7+12 L_{12}+L_4(4 L_7+2 L_{12}+5))\\
&&~~~~~~-4 L_4 L_{13}-4 L_{13}-12 L_4 L_{14}-4 L_7 L_{14}-4 L_{12}
L_{14}-44 L_{14}+ \mathrm{i}(16L_{15}-4L_{16}+4L_{17})\Big].
\end{eqnarray*}
\item $B_{16}=\Tr{\bsX_0\bsX_1\bsX_2\bsX^3_0\bsX_2}$ is given by
\begin{eqnarray*}
B_{16} &=& \frac1{512}\Big[1+L^3_7+26 L_4 L^2_7+15 L^2_7+13 L^2_4
L_7+76 L_4 L_7+15 L_7+5 L^2_4+4 L_3+10 L_4-L^2_2(L_7-1)\\
&&~~~~~~+26 L_8+12 L_5+6(L_4+L_7) L_8-4 L_9+68 L_{12}+88 L_7
L_{12}+4 L^2_7 L_{12}+4 L_8 L_{12}+44 L_4 L_{12}\\
&&~~~~~~+12 L_4 L_7 L_{12}+28 L^2_{12}-4 L_1(6+2 L_4+L_7 (L_4+4)+3
L_{12})\\
&&~~~~~~+2 L_2(3+L_8+3 L_4+12 L_{12}+L_7(4 L_4+2 L_{12}+5))\\
&&~~~~~~-4 L_7 L_{13}-4 L_{13}-12 L_7 L_{14}-4 L_4 L_{14}-4 L_{12}
L_{14}-44 L_{14}+ \mathrm{i} (16L_{16}-4L_{15}+4L_{18})\Big].
\end{eqnarray*}
\item $B_{17}=\Tr{\bsX_0\bsX_1\bsX^2_0\bsX_1\bsX^3_0\bsX_1}$ is given by
{\scriptsize\begin{eqnarray*} B_{17}&=& \frac1{8192}\Big[1+L^3_7+15
L^2_7+12 L_3 L_7+15 L_7+48
L^2_1+60 L_3-L^3_2(L_4-1)+36 L_4+48 L_3 L_4+315 L_4L_7\\
&&~~~~~~+ 3 L_4 L^3_7+150 L_4 L^2_7+75 L^2_4 L^2_7+126 L^2_4-12 L_3
L^2_4+525 L^2_4 L_7+9 L^4_4+84 L^3_4+105L^3_4 L_7\\
&&~~~~~~+60 L_8+48 L_4 L_8+12 L_7 L_8-4 L_4 L_7 L_8-4 L_9 L_4-12 L_8
L^2_4+4 L_5 L^2_7+224 L_5+132 L_5 L_7\\
&&~~~~~~+448 L_4 L_5-4 L_5 L_8+108 L_4 L_7 L_5+96 L^2_4 L_5+32
L^2_5+8L_4 L_6 -12 L_7 L_6+24 L_6\\
&&~~~~~~+24 L_3 L_{12}+300 L_7 L_{12}+210 L_{12}+18 L_{12}
L^2_7+1050 L_4 L_{12}+24 L_8 L_{12}+600 L_4 L_7 L_{12}-8 L_3 L_4
L_{12}\\
&&~~~~~~+30 L^3_4 L_{12}+60 L^2_4 L_7 L_{12}+630 L^2_4 L_{12}+6
L_4L^2_7 L_{12}-8 L_4 L_8 L_{12}+336 L_5 L_{12}+48 L_4 L_5
L_{12}\\
&&~~~~~~-16 L_6 L_{12}+8 L_5 L_7 L_{12}+16 L^3_{12}+552 L^2_{12}+312
L_4 L^2_{12}+40 L_7 L^2_{12}+8 L_{12} L_{13}\\
&&~~~~~~+L^2_2(4 L_5+(L_4+3) L_7-L_4(L_4+2(L_{12}-9))+18
L_{12}+15)\\
&&~~~~~~+L_2((5 L_4+3)L^2_7+2(21 L^2_4+84 L_4+4 L_5+2 (L_4+9)
L_{12}+15) L_7+36 L^2_{12}-4 L_3(L_4-3)\\
&&~~~~~~+91 L_4+132 L_5+L_4(L_4 (9 L_4+77)+44 L_5-4 L_8)-8 L_6+12
L_8+4(3 L_4(L_4+22)+2 L_5)L_{12}\\
&&~~~~~~+300 L_{12}-96 L_{14}+15)-400 L_4 L_{14}-48 L_4 L_7
L_{14}-96 L_7 L_{14}-240 L_{14}-64 L^2_4 L_{14}-48 L_5 L_{14}\\
&&~~~~~~+20 L^2_{14}-32 L_4 L_{12} L_{14}-256 L_{12} L_{14}+8
L_1(L_2((L_4-6) L_4-15)-10 L_5+((L_4-16) L_4-15) L_7\\
&&~~~~~~-48 L_{12}+6 L_{14}-L_4(L_4(L_4+27)+12 L_{12}-2
L_{14}+61)-15)+16 \mathrm{i} (L_4 L_{18}-L_{12} L_{15}-L_{10})\Big].
\end{eqnarray*}}
\item $B_{18}=\Tr{\bsX_0\bsX_2\bsX^2_0\bsX_2\bsX^3_0\bsX_2}$ is given by
{\scriptsize\begin{eqnarray*} B_{18}&=& \frac1{8192}\Big[1+L^3_4+15
L^2_4+12 L_3 L_4+15 L_4+48
L^2_1+60 L_3-L^3_2(L_7-1)+36 L_7+48 L_3 L_7+315 L_4L_7\\
&&~~~~~~+ 3 L_7 L^3_4+150 L_7 L^2_4+75 L^2_4 L^2_7+126 L^2_7-12 L_3
L^2_7+525 L^2_7 L_4+9 L^4_7+84 L^3_7+105L^3_7 L_4\\
&&~~~~~~+60 L_5+48 L_7 L_5+12 L_4 L_5-4 L_4 L_7 L_5-4 L_6 L_7-12 L_5
L^2_7+4 L_8 L^2_4+224 L_8+132 L_8 L_4\\
&&~~~~~~+448 L_7 L_8-4 L_5 L_8+108 L_4 L_7 L_8+96 L^2_7 L_8+32
L^2_8+8 L_7 L_9-12 L_4 L_9+24 L_9\\
&&~~~~~~+24 L_3 L_{12}+300 L_4 L_{12}+210 L_{12}+18 L_{12}
L^2_4+1050 L_7 L_{12}+24 L_5 L_{12}+600 L_4 L_7 L_{12}-8 L_3 L_7
L_{12}\\
&&~~~~~~+30 L^3_7 L_{12}+60 L^2_7 L_4 L_{12}+630 L^2_7 L_{12}+6
L_7L^2_4 L_{12}-8 L_7 L_5 L_{12}+336 L_8 L_{12}+48 L_7 L_8 L_{12}\\
&&~~~~~~-16 L_9 L_{12}+8 L_4 L_8 L_{12}+16 L^3_{12}+552 L^2_{12}+312
L_7 L^2_{12}+40 L_4 L^2_{12}+8 L_{12} L_{13}\\
&&~~~~~~+L^2_2(4 L_8+(L_7+3) L_4-L_7(L_7+2(L_{12}-9))+18
L_{12}+15)\\
&&~~~~~~+L_2((5 L_7+3)L^2_4+2(21 L^2_7+84 L_7+4 L_8+2 (L_7+9)
L_{12}+15) L_4+36 L^2_{12}-4 L_3(L_7-3)\\
&&~~~~~~+91 L_7+132 L_8+L_7(L_7 (9 L_7+77)+44 L_8-4 L_5)-8 L_9+12
L_5+4(3 L_7(L_7+22)+2 L_8)L_{12}\\
&&~~~~~~+300 L_{12}-96 L_{14}+15)-400 L_7 L_{14}-48 L_4 L_7
L_{14}-96 L_4 L_{14}-240 L_{14}-64 L^2_7 L_{14}-48 L_8 L_{14}\\
&&~~~~~~+20 L^2_{14}-32 L_7 L_{12} L_{14}-256 L_{12} L_{14}+8
L_1(L_2((L_7-6) L_7-15)-10 L_8+((L_7-16) L_7-15) L_4\\
&&~~~~~~-48 L_{12}+6 L_{14}-L_7(L_7(L_7+27)+12 L_{12}-2
L_{14}+61)-15)+16 \mathrm{i} (L_7 L_{17}-L_{12} L_{16}-L_{11})\Big].
\end{eqnarray*}}
\end{enumerate}
Those Makhlin invariants $L_k$'s can be also expressed by using
Bargmann invariants $B_k$'s below:
\begin{enumerate}
\item[(1')] $L_1= \frac23(1-3B_1-3B_2+6B_3+3B_4-4B_6)$.
\item[(2')] $L_2 = 1-2B_1-2B_2+4B_4$.
\item[(3')] $L_3 = 4(1+B_1 B_2-3 B_1-3 B_2+6 B_3+B_4-B^2_4+B_1 B_4+B_2 B_4-4 B_7-4 B_8+4
B_{10})$.
\item[(4')] $L_4 = 2B_1-1$.
\item[(5')] $L_5 = 2B_2 + 4B_4 - 4B_1B_4 - 8B_5 - 8B_6 + 16B_7 - 1$.
\item[(6')] $L_6 = \frac43 \Big(1+4B_1 - 9B^2_1-18B_1B_2+6B^2_1B_2 - 3B^2_2 + 24B_1B_3 + 12B_2B_3 - 12B^2_3 - 12B_4 + 6B_2B_4+ 18B_1B_4+6B^2_1B_4-12B_3B_4 + 3B^2_4 - 6B_1B^2_4 + 12B_5 - 12B_1B_5 + 20B_6 - 4B_1B_6 - 24B_7 - 24B_1B_7 -12B_{10} + 12B_1B_{10} +
24B_{11}\Big)$.
\item[(7')] $L_7 = 2B_2-1$.
\item[(8')] $L_8 = 2B_1 - 4B_2B_4 + 4B_4 - 8B_5 - 8B_6 + 16B_8 - 1$.
\item[(9')] $L_9 = \frac43 \Big(1+4B_2 - 9B^2_2-18B_1B_2+6B_1B^2_2 - 3B^2_1 + 24B_2B_3 + 12B_1B_3 - 12B^2_3 - 12B_4 + 6B_1B_4+ 18B_2B_4+6B^2_2B_4-12B_3B_4 + 3B^2_4 - 6B_2B^2_4 + 12B_5 - 12B_2B_5 + 20B_6 - 4B_2B_6 - 24B_8- 24B_2B_8 -12B_{10} + 12B_2B_{10} +
24B_{12}\Big)$.
\item[(10')] $L_{10}=\frac23\mathrm{i}\Big(27-97 B_1+114 B^2_1-46 B^3_1 -81B_2+178 B_1 B_2-64 B^2_1 B_2+78 B^2_2+108 B_1 B^2_2+18 B^3_2+172 B_3-368B_1B_3+168 B^2_1B_3-384B_2B_3-288 B_1 B_2 B_3-144 B^2_2 B_3+456 B^2_3-288 B^3_3+120 B_1 B^2_3+360 B_2 B^2_3-18 B_4+54 B_1 B_4+54 B_2 B_4-137 B^2_1B_4 +48B^3_1 B_4 -390 B_1 B_2 B_4-141 B^2_2 B_4+72 B_1 B^2_2 B_4-108 B_3 B_4+660 B_1 B_3 B_4+48 B^2_1 B_3 B_4+540 B_2 B_3 B_4-192 B_1 B_2 B_3 B_4-480 B^2_3 B_4-129 B^2_4+261 B_1 B^2_4+72 B^2_1 B^2_4+81 B_2 B^2_4-48 B_1 B_2 B^2_4-144 B_3 B^2_4-12 B_5-68 B_1 B_5+96 B^2_1 B_5+36 B_2 B_5-144 B_1 B_2 B_5+144 B_3 B_5-96 B_1 B_3 B_5+60 B_4 B_5+60 B_1 B_4 B_5-36 B_2 B_4 B_5+96 B_3 B_4 B_5+48 B^2_4 B_5+88 B_6-92 B_1 B_6-32 B^3_1 B_6-228 B_2 B_6-40 B_1 B_2 B_6+64 B^2_1 B_2 B_6+488 B_3 B_6+32 B_1 B_3 B_6+336 B_4 B_6+32 B^2_1 B_4 B_6-36 B_2 B_4 B_6+96 B_3 B_4 B_6+48 B^2_4 B_6+48 B_5 B_6-64 B_1 B_5 B_6-40 B^2_6-16 B_7+132 B_1 B_7+96 B^2_1 B_7+228 B_2 B_7+384 B_1 B_2 B_7-624 B_3 B_7-768 B_1 B_3 B_7-552 B_4 B_7-96 B_1 B_4 B_7+72 B_2 B_4 B_7-192 B_3 B_4 B_7-96 B^2_4 B_7+144 B_5 B_7+400 B_6 B_7-768 B^2_7+36 B_1 B_8-96 B^2_1 B_8+84 B_2 B_8-48 B_1 B_2 B_8-240 B_3 B_8+288 B_1 B_3 B_8-24 B_4 B_8+72 B_1 B_4 B_8-48 B_5 B_8-48 B_6 B_8+96 B_7 B_8+24 B_9-48 B_1 B_9-144 B_2 B_9+192 B_1 B_2 B_9+96 B_3 B_9-90 B_{10}+138 B_1 B_{10}-96 B^2_1B_{10}+162 B_2 B_{10}-144 B_1 B_2 B_{10}-288 B_3 B_{10}+192 B_1 B_3 B_{10}-144 B_4 B_{10}+96 B_1 B_4 B_{10}-12 B_{11}-192 B_1 B_{11}-72B_1B_{12}-72B_2B_{11}+192B_3B_{11}+96 B_4 B_{11}+36 B_{12}+96 B_{13}-192 B_1 B_{13}-96 B_2 B_{13}+192 B_3 B_{13}-192 B_{14}+384 B_1 B_{14}-384 B_1 B_{16}+192 B_{16}+768
B_{17}\Big)$.
\item[(11')] $L_{11}=\frac23\mathrm{i}\Big(27-97 B_2+114 B^2_2-46 B^3_2 -81B_1+178 B_1 B_2-64 B^2_2 B_1+78 B^2_1+108 B_2 B^2_1+18 B^3_1+172 B_3-368B_2B_3+168 B^2_2B_3-384B_1B_3-288 B_1 B_2 B_3-144 B^2_1 B_3+456 B^2_3-288 B^3_3+120 B_2 B^2_3+360 B_1 B^2_3-18 B_4+54 B_1 B_4+54 B_2 B_4-137 B^2_2B_4 +48B^3_2 B_4 -390 B_1 B_2 B_4-141 B^2_1 B_4+72 B^2_1 B_2 B_4-108 B_3 B_4+660 B_2 B_3 B_4+48 B^2_2 B_3 B_4+540 B_1 B_3 B_4-192 B_1 B_2 B_3 B_4-480 B^2_3 B_4-129 B^2_4+261 B_2 B^2_4+72 B^2_2 B^2_4+81 B_1 B^2_4-48 B_1 B_2 B^2_4-144 B_3 B^2_4-12 B_5-68 B_2 B_5+96 B^2_2 B_5+36 B_1 B_5-144 B_1 B_2 B_5+144 B_3 B_5-96 B_2 B_3 B_5+60 B_4 B_5+60 B_2 B_4 B_5-36 B_1 B_4 B_5+96 B_3 B_4 B_5+48 B^2_4 B_5+88 B_6-92 B_2 B_6-32 B^3_2 B_6-228 B_1 B_6-40 B_1 B_2 B_6+64 B^2_2 B_1 B_6+488 B_3 B_6+32 B_2 B_3 B_6+336 B_4 B_6+32 B^2_2 B_4 B_6-36 B_1 B_4 B_6+96 B_3 B_4 B_6+48 B^2_4 B_6+48 B_5 B_6-64 B_2 B_5 B_6-40 B^2_6-16 B_8+132 B_2 B_8+96 B^2_2 B_8+228 B_1 B_8+384 B_1 B_2 B_8-624 B_3 B_8-768 B_2 B_3 B_8-552 B_4 B_8-96 B_2 B_4 B_8+72 B_1 B_4 B_8-192 B_3 B_4 B_8-96 B^2_4 B_8+144 B_5 B_8+400 B_6 B_8-768 B^2_8+36 B_2 B_7-96 B^2_2 B_7+84 B_1 B_7-48 B_1 B_2 B_7-240 B_3 B_7+288 B_2 B_3 B_7-24 B_4 B_7+72 B_2 B_4 B_7-48 B_5 B_7-48 B_6 B_7+96 B_7 B_8+24 B_9-48 B_2 B_9-144 B_1 B_9+192 B_1 B_2 B_9+96 B_3 B_9-90 B_{10}+138 B_2 B_{10}-96 B^2_2B_{10}+162 B_1 B_{10}-144 B_1 B_2 B_{10}-288 B_3 B_{10}+192 B_2 B_3 B_{10}-144 B_4 B_{10}+96 B_2 B_4 B_{10}-12 B_{12}-192 B_2 B_{12}-72B_1B_{12}-72B_2B_{11}+192B_3B_{12}+96 B_4 B_{12}+36 B_{11}+96 B_{14}-192 B_2 B_{14}-96 B_1 B_{14}+192 B_3 B_{14}-192 B_{13}+384 B_2 B_{13}-384 B_2 B_{15}+192 B_{15}+768
B_{18}\Big)$.
\item[(12')] $L_{12} = 1 - 2B_1 - 2B_2 + 4B_3$.
\item[(13')] $L_{13} = 12(B_1+B_2) - 12(B_1+B_2)B_4 - 36B_3 + 24B_3B_4 + 24
B_5- 8B_6 + 16(B_7+B_8)- 32B_9 - 3$.
\item[(14')] $L_{14} = 2(1-3B_1-3B_2+2B_1B_2+6 B_3+B_4-4 B_5)$.
\item[(15')] $L_{15} =\frac43\mathrm{i}\Big(-1+5B_1-6 B^2_1+3B_2+3 B_1 B_2-12 B_3+6 B_1 B_3-6 B_2 B_3+12 B^2_3+6 B_4-12 B_1 B_4-6 B_2 B_4+6 B_1 B_2 B_4+6 B_3 B_4-6 B_5+12 B_1 B_5-14 B_6+4 B_1 B_6+24 B_7+12 B_8-12 B_1 B_8-24
B_{13}\Big)$.
\item[(16')] $L_{16} =\frac43\mathrm{i}\Big(-1+5B_2-6 B^2_2+3B_1+3 B_1 B_2-12 B_3+6 B_2 B_3-6 B_1 B_3+12 B^2_3+6 B_4-12 B_2 B_4-6 B_1 B_4+6 B_1 B_2 B_4+6 B_3 B_4-6 B_5+12 B_2 B_5-14 B_6+4 B_2 B_6+24 B_8+12
B_7-12 B_2 B_7-24 B_{14}\Big)$.
\item[(17')] $L_{17}=\frac43\mathrm{i}\Big(-9+15B_1+6B^2_1+19B_2-11B_1B_2-24B_3-6B_1B_3+6B_2B_3-12B^2_3+18B_4-24B_1B_4+6B^2_1B_4-30B_2B_4+12B_1B_2B_4+42 B_3 B_4-24 B_1 B_3 B_4+6B^2_4-6B_1B^2_4+6 B_5-48 B_1 B_5+12 B_2 B_5-12 B_4 B_5-18 B_6-4 B_2 B_6+8 B_1 B_2 B_6-12 B_4 B_6-12 B_7-12 B_2 B_7+48 B_3 B_7+24 B_4 B_7+24 B_1 B_8+48 B_1 B_9+24 B_{10}-12 B_1 B_{10}-24 B_{11}+96 B_{13}-24 B_{14}-96
B_{15}\Big)$.
\item[(18')] $L_{18}=\frac43\mathrm{i}\Big(-9+15B_2+6B^2_2+19B_1-11B_1B_2-24B_3-6B_2B_3+6B_1B_3-12B^2_3+18B_4-24B_2B_4+6B^2_2B_4-30B_1B_4+12B_1B_2B_4+42 B_3 B_4-24 B_2 B_3 B_4+6B^2_4-6B_2B^2_4+6 B_5-48 B_2 B_5+12 B_1 B_5-12 B_4 B_5-18 B_6-4 B_1 B_6+8 B_1 B_2 B_6-12 B_4 B_6-12 B_8-12 B_1 B_8+48 B_3 B_8+24 B_4 B_8+24 B_2 B_7+48 B_2 B_9+24 B_{10}-12 B_2 B_{10}-24 B_{12}+96 B_{14}-24 B_{13}-96
B_{16}\Big)$.
\end{enumerate}
\end{lem}

\begin{proof}
The correctness of all of these results can be checked by the
mathematical software \textsc{Mathematica}. We remark here that
deriving these results is more challenging than verifying them. All
materials preceding this lemma serve as preparations for simplifying
the calculations in the proof of this lemma. In fact, we expand
$B_k$'s by using the Bloch decomposition of $\rho_{AB}$. Through
tedious algebraic computations and simplifications, utilizing the
results from Subsections~\ref{sub:1},~\ref{sub:2}, and~\ref{sub:3},
we obtain the desired results.
\end{proof}

\subsection{Proof of Theorem~\ref{th:Bargmann-generators}}
With the above preparations, now we can present the proof of
Theorem~\ref{th:Bargmann-generators}.
\begin{proof}[Proof of Theorem~\ref{th:Bargmann-generators}]
We have already known that the set comprising of 18 Makhlin's
fundamental invariants $I_k$'s, where $I_k$'s can be generated by
$L_k$'s in Proposition~\ref{prop:LI}, provides a complete
description of nonlocal properties of the two-qubit state
\cite{Makhlin2002}. This amounts to say that the set of 18
invariants $L_k$'s can completely determine the local unitary orbit
of the two-qubit state. From Lemma~\ref{lem:LB}, we see that $L_k$'s
can be generated by $B_k$'s. Therefore, the set of 18 local unitary
Bargmann invariants $B_k$'s can determine the local unitary orbit of
the two-qubit state. That is, two states of a two-qubit system are
LU equivalent if and only if both states have equal values of all 18
LU Bargmann invariants.
\end{proof}

\section{Proof of Theorem~\ref{th:ent-test}}\label{app:3}

\subsection{Entanglement criterion by Makhlin's invariants}

Let the partial trace with respect to either one subsystem of
$\rho_{AB}$ be given by $\rho^\Gamma_{AB}=\rho^{\t_A}_{AB}$ or
$\rho^{\t_B}_{AB}$. We have the following result:
\begin{lem}\label{lem:eigenvalues}
All eigenvalues of the operator $\bsX:=4\rho_{AB}-\I_2\ot\I_2$ are
determined by its characteristic polynomial equation
$x^4+px^2+qx+r=0$, where
\begin{eqnarray}
\begin{cases}
p = -2(L_2+ L_4+ L_7),\\
q = -8(L_{12}-L_1), \\
r = L^2_2+2(L_4+ L_7)L_2+(L_4- L_7)^2-4(L_3+L_5+L_8)+8L_{14}.
\end{cases}
\end{eqnarray}
Here the meaning of $L_k$'s can be found in
Proposition~\ref{prop:LI}.
\end{lem}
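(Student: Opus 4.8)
The plan is to read off the characteristic polynomial of $\bsX$ directly from its power sums $P_k:=\Tr{\bsX^k}$ and then convert those power sums into the invariants $L_k$ using the coefficient formulas already assembled in Subsections~\ref{sub:1} and~\ref{sub:2}. First I would record that, in the Bloch parametrization, the operator $\bsX=4\rho_{AB}-\I_2\ot\I_2$ is exactly $\bsX\approx(0,\bsa,\bsb,\bsC)$; that is, it has vanishing scalar part $t=0$ while $\bsr=\bsa$, $\bss=\bsb$, $\bsT=\bsC$. Since $\rho_{AB}$ is Hermitian, $\bsX$ is a $4\times4$ Hermitian matrix, so $f(x)=\det(x\I_4-\bsX)$ has degree $4$ with real roots. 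Because $\Tr{\bsY}=4t_{\bsY}$ for any observable $\bsY\approx(t_{\bsY},\ldots)$, the condition $t=0$ gives $P_1=\Tr{\bsX}=0$, which is precisely what annihilates the cubic term and forces the shape $f(x)=x^4+px^2+qx+r$.

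With $P_1=0$, Newton's identities express the remaining elementary symmetric functions purely through $P_2,P_3,P_4$: one gets $p=-\tfrac12 P_2$, $q=-\tfrac13 P_3$, and $r=\tfrac18 P_2^2-\tfrac14 P_4$. So the whole statement reduces to computing the three traces, and these come for free from the scalar components $t^{(k)}$ listed for $\bsX^2,\bsX^3,\bsX^4$ in the recurrence subsection, via $P_k=\Tr{\bsX^k}=4\,t^{(k)}$ specialized to $t=0$. Concretely, $P_2=4(\abs{\bsr}^2+\abs{\bss}^2+\norm{\bsT}^2)$, $P_3=4\cdot 6(\Innerm{\bsr}{\bsT}{\bss}-\det(\bsT))$, and $P_4=4\,t^{(4)}\big|_{t=0}$, whose surviving terms are $\abs{\bsr}^4+\abs{\bss}^4+\norm{\bsT}^4+6\abs{\bsr}^2\abs{\bss}^2+2(\abs{\bsr}^2+\abs{\bss}^2)\norm{\bsT}^2+4(\Innerm{\bsr}{\bsT\bsT^\t}{\bsr}+\Innerm{\bss}{\bsT^\t\bsT}{\bss}+\Inner{\widehat\bsT}{\widehat\bsT})-8\Innerm{\bsr}{\widehat\bsT}{\bss}$.

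The last step is pure bookkeeping: I would substitute $\bsr=\bsa$, $\bss=\bsb$, $\bsT=\bsC$ and read off the invariants from Proposition~\ref{prop:LI}, namely $\abs{\bsa}^2=L_4$, $\abs{\bsb}^2=L_7$, $\norm{\bsC}^2=L_2$, $\Innerm{\bsa}{\bsC}{\bsb}=L_{12}$, $\det(\bsC)=L_1$, $\Innerm{\bsa}{\bsC\bsC^\t}{\bsa}=L_5$, $\Innerm{\bsb}{\bsC^\t\bsC}{\bsb}=L_8$, $\Inner{\widehat\bsC}{\widehat\bsC}=L_3$, and $\Innerm{\bsa}{\widehat\bsC}{\bsb}=L_{14}$. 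This immediately yields $p=-2(L_2+L_4+L_7)$ and $q=-8(L_{12}-L_1)$. For $r$ I would expand $\tfrac18 P_2^2-\tfrac14 P_4=2(L_2+L_4+L_7)^2-t^{(4)}$ and collect terms; the quadratic $2(L_2+L_4+L_7)^2$ cancels one copy each of $L_2^2,L_4^2,L_7^2$ and reorganizes the cross terms, leaving $L_2^2+2(L_4+L_7)L_2+(L_4-L_7)^2-4(L_3+L_5+L_8)+8L_{14}$, as claimed. The only genuinely delicate point, and hence the main obstacle, is this final simplification for $r$: one must track that the $6\abs{\bsr}^2\abs{\bss}^2$ term inside $t^{(4)}$, entering $r$ with a minus sign, combines with the $+4L_4L_7$ coming from $2(L_2+L_4+L_7)^2$ to produce the signed square $(L_4-L_7)^2$ rather than $(L_4+L_7)^2$; every other term is a direct transcription of already-proved formulas and can be confirmed with \textsc{Mathematica}.
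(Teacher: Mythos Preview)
Your proposal is correct and, in fact, supplies exactly the details the paper omits: the paper's proof of this lemma is just the sentence ``direct and tedious computations,'' verified afterwards by \textsc{Mathematica}. Your route---setting $t=0$, reading $P_k=4\,t^{(k)}$ off the $\bsX^k$ formulas already derived in the recurrence subsection, and recovering $p,q,r$ via Newton's identities---is the natural systematic way to organize that computation and meshes perfectly with the machinery the paper builds in Subsections~\ref{sub:1}--\ref{sub:2}.
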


\begin{proof}
The proof is obtained by direct and tedious computations. It is
omitted here.
\end{proof}
We remark here that the correctness of the above result can also be
checked by employing symbolic computation function of
\textsc{Mathematica}. Apparently, getting this result is more
difficult than checking the correctness of it. Based on the above
result presented in Lemma~\ref{lem:eigenvalues}, we can derive the
following characterization of entanglement in two-qubit system.
Basically, it is another equivalent reformulation of Positive
Partial-Tranpose criteria for two-qubit system. More importantly,
our reformulation can be viewed as the first criterion using locally
unitary invariants.

For any two-qubit state $\rho_{AB}$, parameterized as in
Eq.~\eqref{eq:2qubit}, note that
\begin{eqnarray}
\Tr{\rho^2_A}=\frac{1+L_4}2,\quad \Tr{\rho^2_B}=\frac{1+L_7}2,\quad
\Tr{\rho^2_{AB}}=\frac{1+L_2+L_4+L_7}4,
\end{eqnarray}
from the facts that $\Tr{\rho^2_A},\Tr{\rho^2_B}\in[\frac12,1]$ and
$\Tr{\rho^2_{AB}}\in[\frac14,1]$, we get that
\begin{eqnarray}\label{eq:app1}
\begin{cases}
0\leqslant L_4\leqslant1,\\
0\leqslant L_7\leqslant1,\\
0\leqslant L_2+ L_4+ L_7\leqslant 3.
\end{cases}
\end{eqnarray}
It follows from Lemma~\ref{lem:eigenvalues}, we get the
characteristic polynomial equation is given by
\begin{eqnarray}
\lambda^4-\lambda^3+\frac{p+6}{16}\lambda^2-\frac{2p-q+4}{64}\lambda+\frac{p-q+r+1}{256}=0.
\end{eqnarray}
Recall a result in \cite{Kimura2003}: Consider an algebraic equation
of degree $N\geqslant1$,
\begin{eqnarray}
\prod^N_{k=1}(x-x_k)=\sum^N_{\ell=0}(-1)^\ell e_\ell
x^{N-\ell}=0\quad(e_0=1),
\end{eqnarray}
which has only real roots $x_k\in\real(k=1,\ldots,N)$.  The
necessary and sufficient condition that all the roots $x_k$'s to be
non-negative is that all the coefficients $e_\ell$'s are
non-negative. That is,
\begin{eqnarray}
(\forall k\in[N]: x_k\geqslant0)\Longleftrightarrow (\forall
\ell\in[N]: e_\ell\geqslant0, e_0\equiv1).
\end{eqnarray}
From the above result, we can present a following result about the
positivity of Hermitian matrix $\bsX$:
\begin{prop}
For a Hermitian complex matrix $\bsX\in\complex^{N\times N}$, denote
$p_k(\bsX):=\Tr{\bsX^k}$, then its characteristic polynomial is
given by
$$
\det(x\I_N-\bsX)=\sum^N_{k=0}(-1)^k e_k(\bsX) x^{N-k},
$$
where
\begin{eqnarray*}
e_k (\bsX) = \frac1{k!}\Abs{\begin{array}{ccccc}
             p_1(\bsX) & 1 & 0 & \cdots & 0 \\
             p_2(\bsX) & p_1(\bsX) & 2 & \cdots & 0 \\
             \vdots & \vdots & \vdots & \ddots & \vdots \\
             p_{k-1}(\bsX) & p_{k-2}(\bsX) & p_{k-3}(\bsX) & \cdots & k-1\\
             p_k(\bsX) & p_{k-1}(\bsX) & p_{k-2}(\bsX) & \cdots & p_1(\bsX)
           \end{array}
}\quad (k\geqslant1).
\end{eqnarray*}
Then we have
\begin{eqnarray*}
\bsX\geqslant\zero \Longleftrightarrow \Abs{\begin{array}{ccccc}
             p_1(\bsX) & 1 & 0 & \cdots & 0 \\
             p_2(\bsX) & p_1(\bsX) & 2 & \cdots & 0 \\
             \vdots & \vdots & \vdots & \ddots & \vdots \\
             p_{k-1}(\bsX) & p_{k-2}(\bsX) & p_{k-3}(\bsX) & \cdots & k-1\\
             p_k(\bsX) & p_{k-1}(\bsX) & p_{k-2}(\bsX) & \cdots & p_1(\bsX)
           \end{array}
}\geqslant0\quad(k=1,2,\ldots,N).
\end{eqnarray*}
\end{prop}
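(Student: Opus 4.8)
The plan is to split the proposition into two logically independent claims: the determinant formula for the coefficients $e_k(\bsX)$ of the characteristic polynomial in terms of the trace power sums $p_k(\bsX)=\Tr{\bsX^k}$, and the equivalence between $\bsX\geqslant\zero$ and the non-negativity of all those determinants. For the formula, I would first observe that if $\lambda_1,\ldots,\lambda_N$ denote the eigenvalues of $\bsX$, then $p_k(\bsX)=\sum_{i=1}^N\lambda_i^k$ are exactly the power sums of the eigenvalues, while the $e_k(\bsX)$ defined through $\det(x\I_N-\bsX)=\sum_{k=0}^N(-1)^ke_k(\bsX)x^{N-k}$ are precisely the elementary symmetric polynomials of those same eigenvalues. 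This reduces the claimed identity to a purely symmetric-function statement, independent of $\bsX$ being Hermitian.

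Next I would invoke the Newton--Girard identities $p_m-e_1p_{m-1}+e_2p_{m-2}-\cdots+(-1)^{m-1}e_{m-1}p_1+(-1)^m m\,e_m=0$ for $1\leqslant m\leqslant k$. Regarding these $k$ relations as a linear system in the unknowns $e_1,\ldots,e_k$, with the $p_j$ treated as known constants, the coefficient matrix is lower-triangular with diagonal entries $(-1)^m m$ (so its determinant is $(-1)^{k(k+1)/2}k!$) and right-hand side $(-p_1,\ldots,-p_k)^\t$. Solving for $e_k$ by Cramer's rule and then tidying up the signs by elementary row/column operations reproduces exactly the displayed $k\times k$ determinant divided by $k!$. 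Alternatively, one can expand the displayed determinant along its last row and verify directly that it obeys the Newton recursion, which pins down $k!\,e_k(\bsX)$ uniquely; I checked the base cases ($k=2,3$ give $e_2=\tfrac12(p_1^2-p_2)$ and $e_3=\tfrac16(p_1^3-3p_1p_2+2p_3)$), and either route establishes the explicit formula.

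For the positivity equivalence, I would use that $\bsX$ Hermitian forces every $\lambda_i\in\real$, so the characteristic polynomial $\prod_{i=1}^N(x-\lambda_i)=\sum_{\ell=0}^N(-1)^\ell e_\ell(\bsX)x^{N-\ell}$ is real-rooted and hence satisfies the hypothesis of the result quoted from \cite{Kimura2003}. Since $\bsX\geqslant\zero$ is equivalent to $\lambda_i\geqslant0$ for all $i$, that criterion yields $\bsX\geqslant\zero\Longleftrightarrow e_\ell(\bsX)\geqslant0$ for all $\ell=1,\ldots,N$. Finally, because each displayed determinant equals $k!\,e_k(\bsX)$ with $k!>0$, its sign agrees with that of $e_k(\bsX)$, so the determinant is non-negative iff $e_k(\bsX)\geqslant0$; chaining the equivalences gives the stated criterion for $k=1,\ldots,N$.

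The main obstacle is the sign and index bookkeeping in the Cramer's-rule step: one must correctly assemble the coefficient matrix of the Newton recursion, confirm that its determinant is $\pm k!$, and verify that the Cramer determinant transforms into the displayed array with the integer superdiagonal $1,2,\ldots,k-1$ and the column of power sums $p_1,\ldots,p_k$ in the stated positions. The direct-verification alternative, expanding the displayed determinant and matching it against the recursion, sidesteps most of this and is likely the cleaner write-up. Once the formula is secured, part (b) is immediate from Hermiticity together with the cited real-rooted sign criterion.
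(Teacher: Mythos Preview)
Your proposal is correct and, for the positivity equivalence, follows exactly the paper's argument: Hermiticity gives real roots, and then the cited Kimura criterion converts $\bsX\geqslant\zero$ into $e_k(\bsX)\geqslant0$ for all $k$, which is equivalent to the displayed determinants being non-negative since they equal $k!\,e_k(\bsX)$. The only difference is that you also supply a derivation of the Newton--Girard determinant formula for $e_k$ in terms of the power sums, whereas the paper's proof simply takes that formula as given and addresses only the positivity direction; your extra step is sound and makes the write-up self-contained.
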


\begin{proof}
Since $\bsX$ is Hermitian matrix, it follows that its characteristic
polynomial $\det(x\I_N-\bsX)=\sum^N_{k=0}(-1)^k e_k(\bsX) x^{N-k}$
has only real roots. These real roots are non-negative if and only
if $\bsX\geqslant\zero$. Therefore $\bsX\geqslant\zero$ if and only
if $e_k(\bsX)\geqslant0$, where $k=1,\ldots,N$
\end{proof}
From the above result, the non-negativeness of $\rho_{AB}$ is
guaranteed by the following inequalities \cite{Kimura2003}:
\begin{eqnarray}\label{eq:app2}
\begin{cases}
p+6&\geqslant0\\
2p-q+4&\geqslant0\\
p-q+r+1&\geqslant0
\end{cases}\Longleftrightarrow \begin{cases}
p&\geqslant-6\\
q&\leqslant2p+4\\
r&\geqslant q-p-1.
\end{cases}
\end{eqnarray}
Based on both Eq.~\eqref{eq:app1} and Eq.~\eqref{eq:app2}, we can
summarize the above discussion into the following result:
\begin{prop}
For any Hermitian matrix $\rho_{AB}$ of fixed trace one,
parameterized as
\begin{eqnarray}
\rho_{AB} =
\frac14\Pa{\I_2\ot\I_2+\bsa\cdot\boldsymbol{\sigma}\ot\I_2+\I_2\ot
\bsb\cdot\boldsymbol{\sigma}+\sum^3_{i,j=1}c_{ij}\sigma_i\ot\sigma_j},
\end{eqnarray}
where $\bsa=(a_1,a_2,a_3)^\t$ and $\bsb= (b_1,b_2,b_3)^\t$ are in
$\real^3$, and $\bsC=(c_{ij})_{3\times3}\in\real^{3\times 3}$, the
necessary and sufficient condition for the non-negativeness
$\rho_{AB}\geqslant\zero$ if and only if the following inequalities
concerning the 3-tuple $(\bsa,\bsb,\bsC)$ are true:
\begin{eqnarray}
\begin{cases}
0\leqslant L_4\leqslant1,\\
0\leqslant L_7\leqslant1,\\
0\leqslant L_2+L_4+L_7\leqslant 3,\\
L_2+L_4+L_7\leqslant
1+2(L_{12}-L_1),\\
(L_2+L_4+L_7-1)^2-4(L_3+L_4L_7+L_5+L_8)+8(L_{12}+L_{14}-L_1)\geqslant
0.
\end{cases}
\end{eqnarray}
\end{prop}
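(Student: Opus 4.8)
The plan is to turn the non-negativity of $\rho_{AB}$ into sign conditions on the coefficients of its characteristic polynomial, and then to rewrite those sign conditions in terms of the invariants $L_k$ using the substitutions $p=-2(L_2+L_4+L_7)$ and $q=-8(L_{12}-L_1)$ together with the formula for $r$ supplied by Lemma~\ref{lem:eigenvalues}. Since $\rho_{AB}$ is Hermitian its spectrum is real, so the degree-four characteristic polynomial displayed above has only real roots, and $\rho_{AB}\geqslant\zero$ holds exactly when those roots are all non-negative. I would then invoke the real-rootedness criterion recalled from \cite{Kimura2003}: a real-rooted monic polynomial $\sum_{k}(-1)^k e_k\lambda^{N-k}$ has all roots in $[0,\infty)$ if and only if every $e_k\geqslant0$. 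Because $\Tr{\rho_{AB}}=1$ pins $e_1=1>0$ while $e_0=1$, the whole question collapses to $e_2,e_3,e_4\geqslant0$, which is precisely the system in Eq.~\eqref{eq:app2}.

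First I would read the three coefficient conditions off the polynomial and match them line by line. The condition $e_2\geqslant0$ is $p+6\geqslant0$, i.e.\ $L_2+L_4+L_7\leqslant3$, the upper half of the third inequality. The condition $e_3\geqslant0$ is $2p-q+4\geqslant0$; dividing by $4$ gives $L_2+L_4+L_7\leqslant1+2(L_{12}-L_1)$, the fourth inequality. The condition $e_4\geqslant0$ is $p-q+r+1\geqslant0$, which I claim coincides with the fifth inequality after substitution.

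Next I would adjoin the range constraints coming from the purities, exactly as recorded in Eq.~\eqref{eq:app1}. The lower bounds $0\leqslant L_4$, $0\leqslant L_7$ and $0\leqslant L_2+L_4+L_7$ hold for any real Bloch triple, since $L_4=\abs{\bsa}^2$, $L_7=\abs{\bsb}^2$ and $L_2=\norm{\bsC}^2$ are squared norms, while the upper bounds $L_4\leqslant1$ and $L_7\leqslant1$ express $\rho_A\geqslant\zero$ and $\rho_B\geqslant\zero$ and are read off from $\Tr{\rho_A^2}=\tfrac{1+L_4}2\leqslant1$ and $\Tr{\rho_B^2}=\tfrac{1+L_7}2\leqslant1$. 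I would note that these two upper bounds, hence the first two lines of the Proposition, are logically implied by the three coefficient conditions, because those are already equivalent to $\rho_{AB}\geqslant\zero$ and positivity descends to the marginals; they are kept only to make the admissible region of $(\bsa,\bsb,\bsC)$ manifest. Assembling everything yields both directions: necessity, since $\rho_{AB}\geqslant\zero$ makes every eigenvalue and hence every $e_k$ non-negative and makes the marginals states; and sufficiency, since $e_2,e_3,e_4\geqslant0$ already force all four eigenvalues of $\rho_{AB}$ to be non-negative.

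The only genuinely delicate point is the third matching, namely $p-q+r+1\geqslant0\iff$ the fifth inequality. Writing $S=L_2+L_4+L_7$, expansion gives $(S-1)^2=L_2^2+2L_2(L_4+L_7)+(L_4-L_7)^2+4L_4L_7-2S+1$, and the explicit term $-4L_4L_7$ sitting inside the fifth inequality cancels this $+4L_4L_7$, leaving exactly the $(L_4-L_7)^2$ that appears in $r$; collecting the surviving terms then identifies the fifth inequality with $p-q+r+1\geqslant0$. This single cross-term is where an arithmetic slip is most likely, so I would cross-check the identity with \textsc{Mathematica}, consistent with the verification strategy used elsewhere in the appendix.
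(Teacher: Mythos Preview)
Your proposal is correct and follows essentially the same route as the paper: invoke Lemma~\ref{lem:eigenvalues} for the characteristic polynomial, apply the Kimura sign criterion to reduce $\rho_{AB}\geqslant\zero$ to the three inequalities of Eq.~\eqref{eq:app2}, and append the purity bounds of Eq.~\eqref{eq:app1}. Your explicit algebraic check that $p-q+r+1\geqslant0$ matches the fifth displayed inequality (via the $4L_4L_7$ cancellation) and your remark that the first two lines are redundant consequences of the coefficient conditions are both accurate and slightly more detailed than what the paper spells out.
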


The above constraints about the 3-tuple $(\bsa,\bsb,\bsC)$ can be
equivalently to reformulated via locally unitary Bargmann
invariants:
\begin{eqnarray}
\begin{cases}
1+2\Tr{\rho^3_{AB}}&\geqslant 3\Tr{\rho^2_{AB}},\\
1+3[\Tr{\rho^2_{AB}}]^2+8\Tr{\rho^3_{AB}}&\geqslant6\Tr{\rho^4_{AB}}+6\Tr{\rho^2_{AB}}.
\end{cases}
\end{eqnarray}

\begin{lem}[Detection of entanglement via locally unitary
invariants]\label{th:entvssep} For any given two-qubit state
$\rho_{AB}$, parameterized as in Eq.~\eqref{eq:2qubit}, which is
entangled if and only if 9 invariants of 18 Makhlin invariants are
satisfying the following inequality:
\begin{eqnarray}\label{eq:ent}
&&1+\Pa{\abs{\bsa}^2-\abs{\bsb}^2}^2+2(\abs{\bsa}^2+\abs{\bsb}^2)\Inner{\bsC}{\bsC}+2\Inner{\bsC^\t\bsC}{\bsC^\t\bsC}
+ 8(\Innerm{\bsa}{\bsC}{\bsb}+\det(\bsC))\notag\\
&&<\Inner{\bsC}{\bsC}^2+2\Pa{\abs{\bsa}^2+\abs{\bsb}^2+\Inner{\bsC}{\bsC}}+4\Pa{\Innerm{\bsa}{\bsC\bsC^\t}{\bsa}+\Innerm{\bsb}{\bsC^\t\bsC}{\bsb}}+8\Innerm{\bsa}{\widehat\bsC}{\bsb}.
\end{eqnarray}
\end{lem}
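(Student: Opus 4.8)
The plan is to route separability through the Peres--Horodecki (PPT) criterion and then collapse the positivity test on the partial transpose to a single determinant inequality. Since $\rho_{AB}$ lives on $\complex^2\ot\complex^2$, it is separable if and only if its partial transpose $\rho^{\t_B}_{AB}$ is positive semidefinite. I would additionally use the classical fact that, for two qubits, $\rho^{\t_B}_{AB}$ has \emph{at most one} negative eigenvalue; combined with $\Tr{\rho^{\t_B}_{AB}}=1$, this makes $\rho_{AB}$ entangled exactly when $\det(\rho^{\t_B}_{AB})<0$. Everything then reduces to expressing $\det(\rho^{\t_B}_{AB})$ through the local unitary invariants of Lemma~\ref{lem:eigenvalues} and matching the sign condition with \eqref{eq:ent}.

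First I would record how $\t_B$ acts on the Bloch data in \eqref{eq:2qubit}. As $\t_B$ flips $\sigma_2\mapsto-\sigma_2$ on the second tensor factor, it sends $\bsa\mapsto\bsa$, $\bsb\mapsto D\bsb$ and $\bsC\mapsto\bsC D$, where $D=\diag(1,-1,1)$ obeys $D^2=\I_3$ and $\widehat D=-D$. Feeding this into the invariants that enter the coefficients $(p,q,r)$ of Lemma~\ref{lem:eigenvalues}, and using $\widehat{\bsM\bsN}=\widehat\bsM\widehat\bsN$ together with $D^2=\I_3$, a short computation shows $L_2,L_3,L_4,L_5,L_7,L_8,L_{12}$ are all invariant (for $L_3,L_5,L_8$ because $\bsC^\t\bsC\mapsto D\bsC^\t\bsC D$ is a conjugation and $D\bsb$ reabsorbs the sign), while the only changes are the two sign flips $L_1=\det\bsC\mapsto-L_1$ and $L_{14}=\Innerm{\bsa}{\widehat\bsC}{\bsb}\mapsto\Innerm{\bsa}{\widehat\bsC\widehat D D}{\bsb}=-L_{14}$. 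Hence, applying Lemma~\ref{lem:eigenvalues} to the (Hermitian, trace-one) operator $\rho^{\t_B}_{AB}$ gives coefficients $p'=p$, $q'=-8(L_{12}+L_1)$, and $r'$ equal to $r$ with $L_{14}$ replaced by $-L_{14}$.

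Next I would assemble the determinant. The eigenvalues of $\rho^{\t_B}_{AB}$ are $\tfrac14(1+y_i)$ with $y_i$ the roots of $y^4+p'y^2+q'y+r'=0$; since this polynomial is trace-free, $\det(\rho^{\t_B}_{AB})=\tfrac1{256}\prod_i(1+y_i)=\tfrac1{256}(1+p'-q'+r')$. Substituting $p',q',r'$ and expanding, the condition $\det(\rho^{\t_B}_{AB})<0$ becomes
\begin{eqnarray*}
(L_2+L_4+L_7-1)^2-4(L_3+L_4L_7+L_5+L_8)+8(L_{12}-L_{14}+L_1)<0,
\end{eqnarray*}
which is precisely the negation of the fifth (determinant) inequality of the positivity Proposition above applied to $\rho^{\t_B}_{AB}$. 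Rewriting each $L_k$ through $\bsa,\bsb,\bsC$ by Proposition~\ref{prop:LI} --- in particular $L_3=\tfrac12(L_2^2-\Inner{\bsC^\t\bsC}{\bsC^\t\bsC})$, $L_5=\Innerm{\bsa}{\bsC\bsC^\t}{\bsa}$, $L_8=\Innerm{\bsb}{\bsC^\t\bsC}{\bsb}$, $L_{14}=\Innerm{\bsa}{\widehat\bsC}{\bsb}$, $L_{12}=\Innerm{\bsa}{\bsC}{\bsb}$, $L_1=\det\bsC$ --- this rearranges term by term into \eqref{eq:ent}.

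I expect the main obstacle to be the bookkeeping of the second step: one must verify with the cofactor identities of Corollary~\ref{lem:adjugate} that every invariant other than $L_1$ and $L_{14}$ is genuinely fixed by $\t_B$, so that only those two sign flips propagate into $\det(\rho^{\t_B}_{AB})$. The reduction to a \emph{single} determinant condition, rather than the full list of positivity inequalities for $\rho^{\t_B}_{AB}$, rests essentially on the ``at most one negative eigenvalue'' property; establishing or citing that property carefully is the other delicate point, since it is what guarantees that an entangled state always violates the determinant inequality and not merely one of the lower-order ones.
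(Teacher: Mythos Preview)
Your proposal is correct and follows essentially the same route as the paper: both reduce the question to the PPT criterion, invoke the two-qubit fact that entanglement is equivalent to $\det(\rho^{\Gamma}_{AB})<0$, compute the characteristic polynomial of $4\rho^{\Gamma}_{AB}-\I_4$, and read off the determinant as $\tfrac{1}{256}(1+\tilde p-\tilde q+\tilde r)$. The only difference is cosmetic: the paper recomputes $\tilde p,\tilde q,\tilde r$ directly, while you obtain them by tracking how $\t_B$ acts on the Bloch data ($\bsa\mapsto\bsa$, $\bsb\mapsto D\bsb$, $\bsC\mapsto\bsC D$, hence only $L_1$ and $L_{14}$ flip sign) and then reusing Lemma~\ref{lem:eigenvalues} --- a tidier derivation, but the same argument.
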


\begin{proof}
All eigenvalues of the operator
$\bsY:=4\rho^\Gamma_{AB}-\I_2\ot\I_2$ are determined by its
characteristic polynomial equation $y^4+\tilde py^2+\tilde qy+\tilde
r=0$, where
\begin{eqnarray*}
\tilde p &=&
-2\Pa{\abs{\bsa}^2+\abs{\bsb}^2+\Inner{\bsC}{\bsC}},\quad
\tilde q = -8\Pa{\Innerm{\bsa}{\bsC}{\bsb}+\det(\bsC)}, \\
\tilde r &=&
\Pa{\abs{\bsa}^2-\abs{\bsb}^2}^2+2(\abs{\bsa}^2+\abs{\bsb}^2)\Inner{\bsC}{\bsC}+2\Inner{\bsC^\t\bsC}{\bsC^\t\bsC}-\Inner{\bsC}{\bsC}^2\\
&&
-4\Pa{\Innerm{\bsa}{\bsC\bsC^\t}{\bsa}+\Innerm{\bsb}{\bsC^\t\bsC}{\bsb}}-8\Innerm{\bsa}{\widehat\bsC}{\bsb}.
\end{eqnarray*}
Note that $\det(\rho^\Gamma_{AB})=\frac{\tilde p-\tilde q+\tilde
r+1}{256}$. Thus $\rho_{AB}$ is entangled if and only if
$\det(\rho^\Gamma_{AB})<0$. Therefore we get the desired inequality.
\end{proof}

\begin{exam}[The family of two-qubit Werner states] Two-qubit Wener state of
single parameter is defined by $\rho_w =
w\proj{\psi^-}+(1-w)\frac{\I_4}4$, where
$\ket{\psi^-}=\frac{\ket{01}-\ket{10}}{\sqrt{2}}$ and $w\in[0,1]$,
which can be rewritten as
$$
\rho_w =\frac14\Pa{\I_2\ot\I_2-w\sum^3_{k=1}\sigma_k\ot\sigma_k}.
$$
In such a case, $\bsa=\bsb=\zero$ and $\bsC=-w\I_3$. Then two-qubit
Werner state $\rho_w$ is entangled if and only if Eq.~\eqref{eq:ent}
becomes
\begin{eqnarray}
&&1+2\Inner{\bsC^\t\bsC}{\bsC^\t\bsC}+8\det(\bsC)<\Inner{\bsC}{\bsC}^2+2\Inner{\bsC}{\bsC}\\
&&\Longleftrightarrow1+6w^4-8w^3<9w^4+6w^2\Longleftrightarrow
\frac13<w\leqslant1.\notag
\end{eqnarray}
\end{exam}

\begin{exam}[The family of two-qubit Bell-diagonal states]
Two-qubit Bell-diagonal state of three parameters is defined by
$$
\rho_{\mathrm{Bell}} =\frac14\Pa{\I_2\ot\I_2 +
\sum^3_{k=1}t_k\sigma_k\ot\sigma_k},
$$
where $\bst=(t_1,t_2,t_3)\in D$ (specified later). The set $D$ is a
bounded and closed region: $D\subset[-1,1]^3$. The above mentioned
$D$ is determined by
\begin{eqnarray*}
\begin{cases}
1-t_1-t_2-t_3\geqslant0,\\
1-t_1+t_2+t_3\geqslant0,\\
1+t_1-t_2+t_3\geqslant0,\\
1+t_1+t_2-t_3\geqslant0.
\end{cases}
\end{eqnarray*}
In this case, $\bsa=\bsb=\zero$ and $\bsC=\diag(t_1,t_2,t_3)$. Now
two-qubit Bell-diagonal state $\rho_{\text{Bell}}$ is entangled if
and only if Eq.~\eqref{eq:ent} becomes
\begin{eqnarray*}
&&1+2\Inner{\bsC^\t\bsC}{\bsC^\t\bsC}+8\det(\bsC)<\Inner{\bsC}{\bsC}^2+2\Inner{\bsC}{\bsC}\\
&&\Longleftrightarrow 1+2\sum^3_{j=1}t^4_j +
8t_1t_2t_3<\Pa{\sum^3_{j=1}t^2_j}^2+2\sum^3_{j=1}t^2_j.
\end{eqnarray*}
Note that
\begin{eqnarray*}
&&\Pa{\sum^3_{j=1}t^2_j}^2+2\sum^3_{j=1}t^2_j-2\sum^3_{j=1}t^4_j -
8t_1t_2t_3-1\\
&&=-(t_1-t_2-t_3+1)(t_1+t_2-t_3-1)(t_1-t_2+t_3-1)(t_1+t_2+t_3+1)>0,
\end{eqnarray*}
which is equivalent to $\abs{t_1}+\abs{t_2}+\abs{t_3}>1$.
\end{exam}

\subsection{Proof of Theorem~\ref{th:ent-test}}

\begin{proof}[Proof of Theorem~\ref{th:ent-test}]
Note that we have obtained that a complete set of LU Bargmann
invariants $\set{B_k:k=1,\ldots,18}$ for the description of nonlocal
properties of the two-qubit state. Using the 18 Bargmann generators,
we can test the LU equivalence of two-qubit states by experiment via
measuring Bargmann invariants. Besides, we can use 7 Bargmann
invariants to test entanglement of two-qubit states: By using
Lemma~\ref{lem:LB}, Eq.~\eqref{eq:ent} can be equivalently
transformed into the following form:
\begin{eqnarray*}
6(B_1+B_2-B_1B_2-B_4-B_{10})+12(B_5-B_3)+3B^2_4+4B_6<1.
\end{eqnarray*}
This completes the proof.
\end{proof}


\end{document}